\theoremstyle{plain}
\newtheorem{theorem}{Theorem}
\newtheorem{lemma}{Lemma}
\newtheorem{corollary}{Corollary}
\newtheorem{definition}{Definition}
\newtheorem{proposition}{Proposition}
\theoremstyle{definition}
\newcommand{\bfs}{\mathsf{BFS}}
\newcommand{\dfs}{\mathsf{DFS}}
\newcommand{\ecc}{\mathrm{ecc}}
\newcommand{\opt}{\mathrm{opt}}
\newcommand{\leader}{\mathsf{leader}}
\newcommand{\init}{\mathsf{Initialization}}
\newcommand{\setup}{\mathsf{Setup}}
\newcommand{\checking}{\mathsf{Checking}}
\newcommand{\eval}{\mathsf{Evaluation}}
\newcommand{\eps}{\varepsilon}
\newcommand{\band}{\mathsf{bw}}
\newcommand{\nat}{\mathbb{N}}
\newcommand{\Gd}{\mathcal{G}_d}
\newcommand{\Ee}{\mathcal{E}}
\newcommand{\Ss}{\mathcal{S}}
\newcommand{\regR}{\mathsf{R}}
\newcommand{\regT}{\mathsf{T}}
\newcommand{\mem}{s}
\newcommand{\poly}{\textrm{poly}}
\newcommand{\DISJ}[1]{\textrm{DISJ}_{#1}}
\newcommand{\ceil}[1]{\left\lceil #1 \right\rceil}
\newcommand{\ket}[1]{| #1 \rangle}
\providecommand{\Aa}{\mathcal{A}}
\newcommand{\Pp}{\mathcal{P}}
\begin{document}
\title{Sublinear-Time Quantum Computation of the Diameter in CONGEST Networks}
\author{
Fran{\c c}ois Le Gall\\
Graduate School of Informatics\\
Kyoto University\\
\url{legall@i.kyoto-u.ac.jp}\and
Fr\'ed\'eric Magniez\\
IRIF, Univ Paris Diderot, CNRS\\
\url{frederic.magniez@irif.fr}
}
\date{}

\maketitle
\thispagestyle{empty}
\setcounter{page}{1}
\begin{abstract}
The computation of the diameter is one of the most central problems in distributed computation. In the standard CONGEST model, in which two adjacent nodes can exchange $O(\log n)$ bits per round (here $n$ denotes the number of nodes of the network), it is known that exact computation of the diameter requires $\tilde \Omega(n)$ rounds, even in networks with constant diameter. In this paper we investigate quantum distributed algorithms for this problem in the quantum CONGEST model, where two adjacent nodes can exchange $O(\log n)$ \emph{quantum} bits per round. Our main result is a $\tilde O(\sqrt{nD})$-round quantum distributed algorithm for exact diameter computation, where $D$ denotes the diameter. This shows a separation between the computational power of quantum and classical algorithms in the CONGEST model. We also show an unconditional  lower bound $\tilde \Omega(\sqrt{n})$ on the round complexity of any quantum algorithm computing the diameter, and furthermore show a tight lower bound $\tilde \Omega(\sqrt{nD})$ for any distributed quantum algorithm in which each node can use only $\poly(\log n)$ quantum bits of memory.
\end{abstract}
%\vspace{100mm}
%\newpage
%%%%%%%%%%%%%%%%%%%%%%%%%%%%%%%
\section{Introduction}
{\bf Diameter computation in CONGEST networks.}
The computation of the diameter is one of the most fundamental problems in distributed computing. In CONGEST networks, in which communication between nodes occurs with round-based synchrony and each channel has only $O(\log n)$-bit bandwidth, where $n$ denotes the number of nodes of the network, the diameter can be computed in $O(n)$ rounds \cite{Holzer+PODC12,Peleg+ICALP12}. A matching lower bound $\tilde \Omega(n)$ has been first shown by Frischknecht et al.~\cite{Frischknecht+SODA12},\footnote{In this paper the notation $\tilde O(\cdot)$ suppresses $\poly(\log n)$ factors, and the notation $\tilde \Omega(\cdot)$ suppresses $\frac{1}{\poly(\log n)}$ factors.} which also holds for sparse networks \cite{Abboud+DISC16} and even for deciding whether the network has diameter 2 or diameter 3 \cite{Holzer+PODC12}. The latter result immediately implies (see also \cite{Abboud+DISC16,Bringmann+DISC17}) that any distributed algorithm that computes a $(3/2-\varepsilon)$-approximation of the diameter, for any constant $\varepsilon>0$, requires $\tilde \Omega(n)$ rounds.%, and it was latter shown that this hardness result for $(3/2-\varepsilon)$-approximation holds even for networks with larger diameter \cite{Abboud+DISC16,Bringmann+DISC17}.

If larger approximation ratios are allowed, however, sublinear-time\footnote{As usual when discussing algorithms in the CONGEST model, the \emph{time complexity} of an algorithm refers to its round complexity. A sublinear-time algorithm means a $O(n^{1-\delta})$-round algorithm for some constant $\delta>0$.} approximation algorithms can be constructed. First note that a $2$-approximation of the diameter can trivially be computed in $O(D)$ rounds, where $D$ denotes the diameter of the graph, by computing the eccentricity of any node. Much more interestingly, Lenzen and Peleg constructed a $O(\sqrt{n} \log n+D)$-round $3/2$-approximation algorithm \cite{Lenzen+PODC13}, which was improved by Holzer et al.~to $O(\sqrt{n\log n}+D)$ rounds \cite{Holzer+DISC14}. \vspace{2mm}
%To our knowledge there is no non-trivial lower bound known for the complexity of $3/2$-approximation.
% , and a matching lower bound (up to possible polylogarithmic factors) for $3/2$ approximation was given in \cite{Frischknecht+SODA12}. 
%Approximation upper bounds: \cite{Holzer+DISC14}.

\noindent
{\bf Quantum distributed computing.}
While it is well known that quantum communication can offer significant advantages over classical communication in several settings such as two-party communication complexity (see, e.g.,~\cite{deWolf02,Broadbent+08,Denchev+08}), there are relatively few results directly relevant to distributed network computation. One first evidence of the potential of quantum distributed computing was the design of exact quantum protocols for leader election \cite{Tani+12}. Gavoille et al.~\cite{Gavoille+DISC09} then considered quantum distributed computing in the LOCAL model, and showed that for several fundamental problems, allowing quantum communication does not lead to any significant advantage. The power of distributed network computation in the CONGEST model has recently been investigated by Elkin et al.~\cite{Elkin+PODC14}. In this model the nodes can use quantum processing and communicate using quantum bits (qubits): each edge of the network corresponds to a quantum channel (e.g., an optical fiber if qubits are implemented using photons) of bandwidth $O(\log n)$ qubits. The main conclusions reached in that paper were that for many fundamental problems in distributed computing, such as computing minimum spanning trees or minimum cuts, quantum communication does not, again, offer significant advantages over classical communication. The main technical contribution of \cite{Elkin+PODC14} was the introduction of techniques to prove lower bound for quantum distributed computation, which is significantly more challenging than proving lower bounds in the classical setting due to several specific properties of quantum information, such as quantum non-locality and the impossibility of ``copying'' quantum information. A pressing open question is to understand for which important problems in distributed computing quantum communication can help. \vspace{2mm}

\noindent
{\bf Our results.}
In this work we consider quantum distributed network computation in the CONGEST model, and especially investigate the complexity of computing the diameter. Our main contributions are for the exact computation of the diameter. We present the first quantum distributed algorithm that overcomes classical algorithms for this task:

\setlength{\extrarowheight}{1.5pt}
\begin{table}[tb]
 \begin{center}
  \begin{tabular}{|l|l|l|}
  \hline
   Problem & Classical & Quantum  \\ \hline\hline
   Exact computation&$O(n)$ \cite{Holzer+PODC12,Peleg+ICALP12}& $ O(\sqrt{nD})$ \:\hspace{13mm}Th.~\ref{th:UB1} \bigstrut\\ \hline
    \multirow{2}{*}{Exact computation}&\multirow{2}{*}{$\tilde \Omega(n)$ \cite{Frischknecht+SODA12}}& $\tilde \Omega(\sqrt{n}+D)$ \:\:\:\:\:\:\:\:\:\:\:Th.~\ref{th:LB1}\bigstrut \\ 
     &&$\tilde \Omega(\sqrt{nD/s}+D)$ \:\:Th.~\ref{th:LB2}\bigstrut\\ \hline
     \hline
   $3/2$-approximation &$ \tilde O(\sqrt{n}+D)$ \cite{Lenzen+PODC13,Holzer+DISC14}& $\tilde O(\sqrt[3]{nD}+D)$ \:\:\:\:\:\:\,Th.~\ref{th:UB2}\bigstrut\\ \hline 
   $(3/2-\varepsilon)$-approximation&$\tilde \Omega(n)$ \cite{Holzer+PODC12,Abboud+DISC16,Bringmann+DISC17}& $\tilde \Omega(\sqrt{n}+D)$ \:\:\:\:\:\:\:\:\:\:\:Th.~\ref{th:LB1}\bigstrut \\ \hline
  \end{tabular}
  \caption{Our results on the quantum round complexity of computing/approximating the diameter, and the corresponding known classical results. In this table $n$ denotes the number of nodes in the network, $D$ denotes the diameter and $\mem$ denotes the quantum memory used by each node.}
  \label{table:results}\vspace{-4mm}
 \end{center}
\end{table}
 
\begin{theorem}\label{th:UB1}
There exists a $\tilde O(\sqrt{nD})$-round quantum distributed algorithm, in which each node uses $O((\log n)^2)$ qubits of memory, that computes with probability at least $1-1/\poly(n)$ the diameter of the network, where $n$ denotes the number of nodes of the network and~$D$ denotes the diameter.
\end{theorem}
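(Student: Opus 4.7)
The plan is to reduce the exact computation of the diameter to a sequence of distributed quantum search problems, exploiting the fact that the diameter equals $\max_{u\in V} \ecc(u)$, together with a Grover-type speedup adapted to the CONGEST topology.

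As a preprocessing step, I would first elect a leader $\ell$ and construct a BFS tree $T_\ell$ of depth $O(D)$ rooted at $\ell$; this costs $O(D)$ rounds and incidentally yields $\ecc(\ell)$, which is a $2$-approximation of the true diameter. The leader can then restrict attention to $O(\log D)$ candidate values $k$ in the range $[\ecc(\ell),\,2\,\ecc(\ell)]$, and $T_\ell$ will serve as the global communication backbone for the subsequent quantum operations (in particular for broadcasting control signals and collecting one-bit answers).

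The main step is a distributed quantum procedure that, for a given $k$, decides whether there exists a node $u\in V$ with $\ecc(u)\geq k$. This is a distributed variant of Grover search over the $n$ possible sources: the oracle, acting on a superposition over sources $u$ stored along $T_\ell$, performs a BFS-like dispersion from $u$ to depth $k\leq O(D)$ and flips a phase bit if the depth-$k$ frontier is non-empty, while the diffusion operator is orchestrated by $\ell$ through $T_\ell$. A naive accounting gives $O(\sqrt{n}\cdot D)$ rounds (one full $O(D)$-round oracle per Grover iteration), but by pipelining the $O(D)$ layers of successive oracle calls through $T_\ell$ and interleaving them with the diffusion steps, the total cost of the $O(\sqrt{n})$ Grover iterations can be amortized down to $\tilde O(\sqrt{nD})$ rounds. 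A binary search over $k$, combined with standard amplitude amplification to boost the success probability, adds only $\poly(\log n)$ factors, yielding the claimed $\tilde O(\sqrt{nD})$ bound.

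The hard part will be designing the distributed oracle so that it is simultaneously faithful to the ideal Grover query, respects the $O(\log n)$-qubit-per-edge bandwidth, and keeps local memory to $O((\log n)^2)$ qubits: superpositions over sources $u$ must be represented compactly along $T_\ell$, entanglement across the tree must be maintained coherently, and the pipelining schedule must be analyzed so that the oracle applied after $T$ physical rounds implements exactly (up to negligible error) the intended reflection. A secondary obstacle is bounding the cumulative error over $\tilde O(\sqrt{nD})$ rounds so that, after amplification, the overall failure probability is $1/\poly(n)$.
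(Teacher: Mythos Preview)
Your proposal contains a genuine gap at the crucial step. You correctly observe that the naive approach---Grover search over the $n$ sources with an $O(D)$-round eccentricity oracle---costs $O(\sqrt{n}\cdot D)$ rounds; indeed the paper's Section~\ref{sub:alg-simple} carries out exactly this argument and obtains only $O(\sqrt{n}\,D)$. Your proposed fix, ``pipelining the $O(D)$ layers of successive oracle calls and interleaving them with the diffusion steps'', does not work: Grover iterations are inherently sequential, since each reflection acts on the global superposition produced by the previous one. You cannot begin the $(i{+}1)$-st oracle call before the $i$-th oracle call \emph{and} the intervening diffusion have completed on the full state, so there is no amortization across iterations to be had. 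The claimed drop from $\sqrt{n}\cdot D$ to $\sqrt{nD}$ is therefore unjustified.

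The idea you are missing is to change the \emph{granularity} of the search rather than to pipeline the iterations. In the paper, one fixes $d=\ecc(\leader)$ and, using a DFS-numbering of the BFS tree, associates to each $u\in V$ a set $S(u)$ of about $d$ nodes (Definition~\ref{def:sets}). Lemma~\ref{lem:success} shows that a uniformly random $S(u)$ contains any fixed vertex with probability $\ge d/2n$, so optimizing $f(u)=\max_{v\in S(u)}\ecc(v)$ has $P_{\opt}\ge d/2n$ and requires only $O(\sqrt{n/d})$ Grover-type iterations. The nontrivial point is that a single evaluation of $f(u)$---computing the eccentricities of all $\Theta(d)$ nodes in $S(u)$---can still be done in $O(D)$ rounds (Proposition~\ref{prop:eval2}) by a staggered multi-source BFS \`a la~\cite{Peleg+ICALP12}, where the DFS-numbering guarantees no congestion. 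Thus the pipelining happens \emph{inside} one oracle call, not across Grover iterations, and the total cost is $O(\sqrt{n/d})\cdot O(D)=O(\sqrt{nD})$. Your binary search over $k$ is unnecessary once this optimization framework is in place.
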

Theorem \ref{th:UB1} shows in particular that if the diameter is small (constant or at most polylogarithmic in~$n$), then it can be computed in $\tilde O(\sqrt{n})$ time in the quantum setting. Moreover, whenever the diameter is $O(n^{1-\delta})$ for some constant $\delta>0$, it can be computed in sublinear time. This significantly contrasts with the classical setting, where even distinguishing if the diameter is $2$ or $3$ requires $\tilde \Omega(n)$ time \cite{Holzer+PODC12}, as already mentioned. Note that our quantum algorithm uses only $O((\log n)^2)$ space, i.e., each node only needs to keep $O((\log n)^2)$ qubits of memory at any step of the computation. This low space complexity is especially appealing due to the technological challenges to construct large-scale quantum memory.  

A natural question is whether the upper bounds of Theorem \ref{th:UB1} can further be improved. We first show the following lower bound by adapting the argument from prior works that lead to the classical lower bound \cite{Holzer+PODC12}.
\begin{theorem}\label{th:LB1}
Any quantum distributed algorithm that decides, with probability at least $2/3$, whether the diameter of the network is at most~2 or at least 3 requires $\tilde \Omega(\sqrt{n})$ rounds.
\end{theorem}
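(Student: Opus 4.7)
The plan is to adapt the classical $\tilde\Omega(n)$ lower bound of Holzer and Wattenhofer~\cite{Holzer+PODC12} for distinguishing diameter $2$ from $3$: we reduce from the two-party communication complexity of set disjointness, but replace the classical communication lower bound by its quantum counterpart due to Razborov.

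Concretely, I would use the graph construction of~\cite{Holzer+PODC12} (or a suitable refinement) that maps an $N$-bit instance $(x,y)$ of disjointness to a network $G_{x,y}$ on $\Theta(n)$ vertices whose diameter equals $2$ if $x$ and $y$ are disjoint and $3$ otherwise. By design, the vertex set partitions as $V_A\cup V_B$ in such a way that every $x$-dependent edge lies inside $V_A$, every $y$-dependent edge lies inside $V_B$, and the cut $E(V_A,V_B)$ consists of only $c=O(\log n)$ edges. The gadget parameters can be balanced so that $N=\tilde\Theta(n)$.

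Given any $T$-round quantum distributed algorithm $\Aa$ that decides whether the diameter of $G_{x,y}$ is $\le 2$ or $\ge 3$, I would simulate $\Aa$ by a two-party quantum protocol: Alice maintains the joint quantum register of the nodes in $V_A$, Bob maintains the register of the nodes in $V_B$, and in every round the two players exchange only the $O(\log n)$ qubits traveling along each of the $c$ bridge edges, for a total of $O(Tc\log n)$ qubits of communication over $T$ rounds. Razborov's quantum lower bound $\Omega(\sqrt N)$ for $\DISJ{N}$ then forces $T\cdot c\cdot \log n=\Omega(\sqrt N)$, and hence $T=\tilde\Omega(\sqrt N/c)=\tilde\Omega(\sqrt n)$, as claimed.

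The main technical point is the simulation step, not the graph construction: a quantum CONGEST algorithm evolves a single global (possibly highly entangled) state over all $n$ nodes, so one must verify that the two-party simulation faithfully reproduces its output distribution without running afoul of no-cloning. I would handle this by invoking, in a black-box fashion, the general simulation template for quantum CONGEST protocols developed by Elkin et al.~\cite{Elkin+PODC14}, which converts any $T$-round quantum CONGEST algorithm on a graph with a cut of $c$ edges into a two-party quantum protocol of communication $O(Tc\log n)$. Verifying that the Holzer--Wattenhofer gadget can be arranged to yield the partition, cut size, and input length claimed above is a bookkeeping exercise on the existing construction rather than a new idea.
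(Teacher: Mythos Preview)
Your reduction parameters are wrong, and this is not a bookkeeping issue: the Holzer--Wattenhofer gadget for diameter $2$ versus $3$ has a cut of $\Theta(n)$ edges (one matching edge per pair $(\ell_i,r_i)$ and $(\ell'_i,r'_i)$, plus the edge $\{a,b\}$) and encodes a disjointness instance of length $N=\Theta(n^2)$. No known refinement brings the cut down to $O(\log n)$ while keeping the diameter at $2$ versus $3$; the bit-gadget technique that achieves a polylogarithmic cut (as in Abboud--Censor-Hillel--Khoury) necessarily lengthens the relevant paths and yields diameter $4$ versus $5$, which is a different statement from the one you are asked to prove.

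With the actual parameters $c=\Theta(n)$ and $N=\Theta(n^2)$, Razborov's bound gives only
\[
T\cdot c\cdot \log n \;=\; \Omega(\sqrt{N}) \;=\; \Omega(n),
\]
hence $T=\Omega(1/\log n)$, which is vacuous. The paper explicitly notes that the $\sqrt{N}$ dependence is too weak here. The missing idea is to exploit that the two-party simulation of a $T$-round CONGEST algorithm is not just communication-bounded but also \emph{round-bounded}: it uses only $O(T)$ messages. One then invokes the bounded-round lower bound of Braverman et al.\ (Theorem~\ref{th:BGK+15}), which says that any $r$-message quantum protocol for $\DISJ{N}$ needs $\tilde\Omega(N/r)$ qubits. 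Plugging $r=O(T)$ and total communication $O(Tc\log n)$ yields $T\cdot c\cdot\log n=\tilde\Omega(N/T)$, i.e.\ $T=\tilde\Omega(\sqrt{N/c})=\tilde\Omega(\sqrt{n})$. Replacing Razborov by this interaction-sensitive bound is precisely the step your proposal is missing.
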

Note that this result holds even for quantum algorithms with an arbitrary amount of memory. Theorem \ref{th:LB1} shows that the upper bound of Theorem \ref{th:UB1} is tight for networks with small diameter. A similar argument can actually be used to derive a $\tilde \Omega(\sqrt{n})$ lower bound for any quantum algorithm that decides if the diameter is at most $d$ or at least $d+1$ for larger values of~$d$, but for large diameters this does not match the upper bound of Theorem \ref{th:UB1}. We succeed in proving another lower bound, which matches the upper bound of Theorem \ref{th:UB1} for large diameters, under the assumption that the nodes in the distributed quantum algorithm use only small space. This is one of the main technical contributions of the paper.

\begin{theorem}\label{th:LB2}
Any quantum distributed algorithm, in which each node uses at most $s$ qubits of memory, computing with probability at least $2/3$ the diameter of the network requires $\tilde \Omega(\sqrt{nD/\mem})$ rounds.
\end{theorem}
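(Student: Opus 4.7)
The plan is to derive the bound by reducing a large instance of quantum set disjointness, $\DISJ{N}$ with $N = \Theta(nD/s)$, to distributed diameter computation on a carefully tailored graph family, following the standard two-party lower-bound template. The starting point is the small-cut construction underlying Theorem~\ref{th:LB1}: an $\Theta(n)$-vertex graph of small diameter encoding an $\Theta(n)$-bit disjointness instance, with a cut of size $\tilde O(1)$ between Alice's and Bob's vertices, so that each round of a $T$-round distributed algorithm simulates into $\tilde O(1)$ qubits of two-party communication.

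To embed $N=\Theta(nD/s)$ bits rather than $\Theta(n)$, I would (i)~stretch the gadget with paths to inflate the diameter to $\Theta(D)$ while keeping the vertex count $\Theta(n)$, and (ii)~chain together $\Theta(D/s)$ copies of the disjointness-encoding subgadget along the stretched spine, separated by subpaths of length $\Theta(s)$. The copies are arranged so that the diameter exceeds $D$ if and only if at least one copy witnesses a non-disjoint pair; concatenating the $\Theta(D/s)$ sub-instances (each of size $\Theta(ns/D)$) then realizes a single instance of $\DISJ{N}$. The Alice--Bob cut stays $\tilde O(1)$, so a $T$-round quantum algorithm with $s$ qubits per node simulates into $\tilde O(T)$ qubits of two-party communication, and Razborov's $\Omega(\sqrt{N})$ lower bound for $\DISJ{N}$ forces $T = \tilde\Omega(\sqrt{nD/s})$.

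The hard part is the $s$-dependence. Without the memory bound, a quantum algorithm might exploit entanglement across the $\Theta(D/s)$ chained subgadgets to process them in parallel, effectively collapsing the input back to size $\Theta(n)$ and yielding only Theorem~\ref{th:LB1}'s $\tilde\Omega(\sqrt{n})$ bound. To rule this out, I would argue that the $\Theta(s)$-length separator subpaths, combined with the $s$-qubit memory bound at the separator vertices, throttle the quantum information that can be carried between consecutive subgadgets per round down to $\tilde O(1)$ qubits, so that the concatenation really must be solved as a single $\DISJ{N}$ instance rather than as an easier parallel task. Making this information-theoretic throttling argument precise for quantum communication — most likely via a quantum round-elimination / pipeline-breaking lemma that crucially uses the $s$-qubit memory constraint to bound the mutual information leaking across a separator in a single round — is where the bulk of the technical work lies, and is the step I expect to be the main obstacle.
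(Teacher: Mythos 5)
There is a genuine gap, and it sits exactly where you point: the ``pipeline-breaking lemma'' you defer is not established, and it is not the mechanism the proof actually needs. First, a concrete problem with your reduction: chaining $\Theta(D/s)$ subgadgets each encoding $\Theta(ns/D)$ bits yields a disjointness instance of total size $\Theta(n)$, not $\Theta(nD/s)$, so even granting everything else you would only recover the $\tilde\Omega(\sqrt{n})$ bound of Theorem~\ref{th:LB1}. Second, your plan rests on Razborov's unbounded-round $\Omega(\sqrt{N})$ bound for $\DISJ{N}$; the paper explicitly observes that this $\sqrt{N}$ dependence is too weak for diameter lower bounds, which is why it instead invokes the bounded-round bound of Braverman et al.\ (Theorem~\ref{th:BGK+15}): the $r$-message quantum communication complexity of $\DISJ{k}$ is $\tilde\Omega(k/r+r)$. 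Third, in the standard cut-simulation you describe, Alice and Bob are unrestricted, so the $s$-qubit memory bound of the distributed algorithm places no constraint on the two-party protocol at all; bounding the ``mutual information leaking across a separator per round'' and composing it over rounds is precisely the kind of argument that is known to break down quantumly (entanglement across the cut, no-cloning), which is why the paper does not attempt it.

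The paper's actual route keeps the disjointness instance at $k=\Theta(n)$ (the sparse construction of Abboud et al., with cut size $b=\Theta(\log n)$ and diameter gap $4$ vs.\ $5$), stretches each cut edge into a path of $d=\Theta(D)$ intermediate nodes, and then proves a simulation theorem (Theorem~\ref{th:sim}): an $r$-round algorithm on the stretched graph, with $s$ qubits per intermediate node, can be simulated by a two-party protocol using only $O(r/d)$ \emph{messages} and $O(r(\band+\mem))$ qubits in total. The memory bound enters here, and only here: Alice and Bob alternately simulate diagonal ``areas'' of width $d$ of the space-time diagram, and at each handoff the simulating player must ship the private registers of all $d$ intermediate nodes to the other player, costing $O(d(\band+\mem))$ qubits per message. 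Plugging this into $\tilde\Omega(k/r+r)$ with $r/d$ messages gives $r(b\log n+\mem)=\tilde\Omega(kd/r)$, hence $r=\tilde\Omega(\sqrt{nD/\mem})$. So the $D$ in the bound comes from forcing the two-party protocol to have few messages, and the $\mem$ comes from the cost of handing off the intermediate nodes' quantum memories --- not from enlarging the disjointness instance. To repair your write-up you would need to replace both the instance-inflation step and the round-elimination lemma with (something equivalent to) this simulation argument.
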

Observe that Theorems \ref{th:UB1} and \ref{th:LB2} together completely (up to possible polylogarithmic factor) settle the complexity of distributed quantum exact computation of the diameter with small quantum memory. While the memory restriction in the lower bound is a significant assumption, we believe that Theorem \ref{th:LB2} is still a fairly general result since most known quantum communication protocols (e.g., protocols used to show the superiority of quantum communication in the model of communication complexity) use only a polylogarithmic amount of quantum memory. We actually conjecture that the upper bound of Theorem~\ref{th:UB1} is tight even without any restriction on the size of quantum memory used by the nodes.

We then consider approximation algorithms and show that the round complexity can be further decreased if we are only interested in computing a $3/2$-approximation of the diameter.
\begin{theorem}\label{th:UB2}
There exists a $\tilde O(\sqrt[3]{nD}+D)$-round quantum distributed algorithm
%, in which each node uses $O(\log n)$ qubits of memory,
 that computes with probability at least $1-1/\poly(n)$ a $3/2$-approximation of the diameter of the network. 
\end{theorem}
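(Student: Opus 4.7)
The plan is to interpolate between the classical $3/2$-approximation strategy of Lenzen--Peleg and Holzer et al., which runs BFS from $\Theta(\sqrt{n})$ carefully chosen sources, and the quantum exact-diameter algorithm of Theorem~\ref{th:UB1}, by tuning a parameter $k$ that controls the number of classically pipelined BFS trees.

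First, I would run a BFS from an arbitrary node $u_0$ to compute $\hat D := \ecc(u_0)$ in $O(D)$ rounds, obtaining a $2$-approximation since $\hat D \in [D/2,D]$. Then I would select a set $S$ of $k$ nodes, either by random sampling or by a deterministic hitting-set construction, and pipeline BFS trees rooted at every $s\in S$. This phase completes in $\tilde O(k+D)$ rounds and leaves every node $v$ knowing $d(s,v)$ for each $s\in S$. The classical covering argument (Aingworth--Chekuri--Indyk--Motwani / Roditty--Williams, adapted to CONGEST in \cite{Lenzen+PODC13,Holzer+DISC14}) implies that either $\max_{s\in S}\ecc(s)\geq 2D/3$, in which case this value is already a $3/2$-approximation and we are done, or the residual diameter-witness pairs $(x,y)$ with $d(x,y)\geq 2D/3$ lie in a structured family where each candidate $x$ has at most $\tilde O(n/k)$ relevant partners~$y$.

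In the latter regime, the third stage invokes the quantum diameter-search subroutine underlying Theorem~\ref{th:UB1}, restricted to this reduced family by using the precomputed BFS-from-$S$ distances as classical advice to prune candidates before the Grover-type search. Because the effective search space per source shrinks by a factor $k$, the quantum phase runs in $\tilde O(\sqrt{nD/k})$ rounds. Summing the three stages yields $\tilde O(k+D+\sqrt{nD/k})$; setting $k=\lceil(nD)^{1/3}\rceil$ balances the first and third terms and gives the claimed $\tilde O(\sqrt[3]{nD}+D)$ bound (noting that for $D=\Omega(\sqrt{n})$ the classical $\tilde O(\sqrt n +D)$ algorithm already matches this, so one only needs the new argument in the regime $D=O(\sqrt{n})$).

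The main obstacle is the rigorous combination in the third stage: one has to verify that the Grover-based algorithm of Theorem~\ref{th:UB1} can be reused on the residual instance while genuinely exploiting the precomputed distance advice to shave a factor~$\sqrt{k}$ from the $\sqrt{nD}$ cost, and that the $3/2$-approximation guarantee is preserved after the $\poly(\log n)$ repetitions needed to boost the success probability to $1-1/\poly(n)$. A secondary issue is to adapt the Lenzen--Peleg hitting-set construction so that it identifies the structured residual family in a quantumly accessible form; this should go through essentially unchanged since that construction uses only local distance information, and the BFS-from-$S$ phase itself is entirely classical.
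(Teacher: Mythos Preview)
Your high-level plan and parameter balancing match the paper: a classical preprocessing phase from \cite{Holzer+DISC14} followed by a quantum phase, with the trade-off $\tilde O(k+D+\sqrt{nD/k})$ optimised at $k\approx(nD)^{1/3}$. But the crucial third stage is misdescribed, and as you yourself flag, this is where the real work lies.

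The residual problem after the Holzer et al.\ preprocessing is not a search over pairs with ``$\tilde O(n/k)$ partners per candidate.'' It is simpler and more concrete: after sampling $S$ of size $\approx k$, one finds the node $w=\arg\max_v d(v,S)$ and lets $R$ be the set of the $s$ closest nodes to $w$, where $s\approx n/k$. The task is then to compute $\max_{v\in R}\ecc(v)$; the approximation guarantee of \cite{Holzer+DISC14} is precisely that this quantity (together with $\max_{v\in S}\ecc(v)$, already available from the pipelined BFS) is at least $\lfloor 2D/3\rfloor$. The paper's quantum phase is a direct reuse of the algorithm of Theorem~\ref{th:UB1} with $V$ replaced by $R$: because $R$ is a BFS-ball around $w$, the DFS-numbering construction of Section~\ref{sub:alg-complicated} works verbatim inside $R$ (take $\leader=w$ and define the sets $S(u)$ modulo $2s$ instead of $2n$), giving $P_{\opt}\ge d/(2s)$ and hence a cost of $\tilde O(\sqrt{sD})=\tilde O(\sqrt{nD/k})$ rounds.

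Your proposed mechanism---feeding BFS-from-$S$ distances as ``classical advice to prune candidates before the Grover-type search''---does not correspond to how the algorithm of Theorem~\ref{th:UB1} operates (it never searches over pairs and has no natural hook for such advice), and making it rigorous would require a genuinely new quantum subroutine. The paper avoids this entirely: the only role of the classical preprocessing is to identify the ball $R$; no distance information is used inside the quantum search itself.
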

Whenever the diameter is small, this quantum approximation algorithm again provides a significant improvement over the best known classical algorithms \cite{Lenzen+PODC13,Holzer+DISC14} already mentioned. %Note that Theorem \ref{th:LB1} gives a $\Omega(\sqrt{n}+D)$-round lower bound for $3/2$-approximation.
The quantum algorithm of Theorem \ref{th:UB2} actually consists of two phases, one classical and one quantum. The quantum phase still has polylogarithmic memory per node.

%One natural question is whether the upper bounds of Theorems \ref{th:UB1} and \ref{th:UB2} can be further improved. We thus next consider lower bounds on the quantum round complexity of diameter computation. We first prove the following lower bound:

Our results are summarized in Table \ref{table:results}. \vspace{2mm}

\noindent
{\bf Overview of our upper bound techniques.}
At a high level, our approach can be described as a distributed implementation of quantum search and its generalizations (quantum amplitude amplification and quantum optimization), which are fundamental quantum techniques well studied in the centralized setting and in two-party quantum communication. In Section \ref{prelim:ampl} we show how to implement them in the distributed setting by electing a leader in the network who will coordinate the quantum search. We develop a general framework involving three basic operations, $\init$, $\setup$ and $\eval$, which need to be implemented on the whole network to perform quantum distributed optimization. This framework is general and can be applied to a large class of search or optimization problems on a distributed network as long as quantum communication is allowed between the nodes. The round complexity of the whole approach depends on the round complexity of each of the three basic operations, which is naturally problem-specific. We show how to implement them efficiently for the case of diameter computation. 

For exact diameter computation (Theorem \ref{th:UB1}), the optimization problem we consider corresponds to finding a set $S$ of $\Theta(n/D)$ nodes that contains a vertex of maximum eccentricity in the network. The most delicate part is the implementation of the $\eval$ operation, which given any set $S$ distributed (as a quantum superposition) among all the nodes of the network needs to compute the maximum eccentricity among all nodes in $S$. Our implementation is based on a refinement of the deterministic classical distributed algorithm for shortest paths in~\cite{Peleg+ICALP12}. For our 3/2-approximation of the diameter (Theorem \ref{th:UB2}), the strategy is slightly different: our approach is inspired by the classical algorithm for $3/2$-approximation by Holzer et al.~\cite{Holzer+DISC14}. \vspace{2mm}
%Again, the most delicate part remains the $\eval$ operation.

\noindent
{\bf Overview of our lower bound techniques.}
Essentially all the known lower bounds on the classical complexity of computing or approximating the diameter in the CONGEST model are obtained by an argument based on the two-party communication complexity of the disjointness function. More precisely, the idea is to first reduce the two-party computation of $\DISJ{k}$, the disjointness function on $k$-bit inputs, for some value of $k$, to the computation or the estimation of the diameter of a carefully constructed network, and then use the fact that the two-party classical communication complexity of $\DISJ{k}$ is $\Omega(k)$ bits \cite{Kalyanasundaram+92,Razborov92}. A first approach to obtain quantum lower bounds would be to use instead the (tight) quantum lower bound $\Omega(\sqrt{k})$ showed by Razborov \cite{Razborov03} for the quantum two-party communication complexity of $\DISJ{k}$. It turns out, however, that the dependence in $\sqrt{k}$ is too weak to lead to non-trivial lower bounds on the quantum round complexity of diameter computation. Fortunately, we observe that we can instead use a recent result by Braverman et al.~\cite{BGK+15}, which was obtained using quantum information complexity, showing that the $r$-message quantum two-party communication complexity of $\DISJ{k}$ is $\tilde \Omega(k/r +r)$. This better dependence in $k$ enables us to prove Theorem \ref{th:LB1}.
  
Proving the lower bound of Theorem \ref{th:LB2}, i.e., making the diameter appear in the lower bound, is much more challenging. Our approach is as follows. We start from the network introduced in \cite{Abboud+DISC16} to prove a $\tilde \Omega(n)$-round classical lower bound for deciding whether a sparse bipartite network has diameter~4 or diameter 5. We convert each edge between the left part and the right part of the network in this construction into a path of $d$ dummy nodes (the total number of introduced node is $\Theta(n)$), which immediately gives a reduction from the computation of $\DISJ{k}$ with $k\approx n$ into deciding whether the diameter is $d+4$ or $d+5$. Since $d$ rounds of communication are needed to transfer 1 bit (or 1 quantum bit) from the left part to the right part of the network, we would expect the following statement to be true: any $r$-round algorithm computing the diameter can be converted into a $O(r/d)$-message two-party protocol for $\DISJ{k}$ with $\tilde O(r)$ qubits of communication. This would give, via the lower bound from \cite{BGK+15} mentioned above, the claimed lower bound. The hard part is to prove this statement. While it is easy to see that the statement is true in the classical setting, in the quantum setting several difficulties arise (as in \cite{Elkin+PODC14}) due to entanglement between the nodes of the network. Our main technical contribution (Theorem \ref{th:sim} in Section \ref{sec:LB2}) shows that the statement is indeed true in the quantum setting if we assume that each node has small enough quantum memory. 
%The key insight is that, under this assumption, the contents of the memory of each dummy node can be sent in the communication protocol between the two players.

%%%%%%%%%%%%%%%%%%%%%%%%%%%%%%%
\section{Preliminaries}\label{sec:prelim}
We assume that the reader is familiar with the basic notions of quantum computation and refer to, e.g., \cite{Nielsen+00} for a good introduction.
A quantum state will be written using the ket notation such as~$\ket{\psi}$.
A quantum register means a set of quantum bits (qubits).
When the system has several identified registers, we indicate them as a subscript. For instance $\ket{i}_\mathsf{R}\ket{j}_\mathsf{Q}$, means that register $\mathsf{R}$ is in state $\ket{i}$, and register $\mathsf{Q}$ in state~$\ket{j}$. Of course those could be entangled and in superposition. Then we  write 
$\ket{\psi}=\sum_{i,j}\alpha_{ij}\ket{i}_\mathsf{R}\ket{j}_\mathsf{Q}$.

In this paper we will use the term \emph{CNOT copy} to refer to the unitary operation on $2m$ qubits (for some integer $m$) that maps the state $\ket{u}\ket{v}$ to the state $\ket{u}\ket{u\oplus v}$ for any binary strings $u,v\in\{0,1\}^m$, where $u\oplus v$ is the bitwise XOR operation on $u$ and $v$. This quantum operation can be implemented by applying $m$ Controlled NOT gates. In particular it maps the state $\ket{u}\ket{0}$, for any binary string $u$, to the state $\ket{u}\ket{u}$, and thus corresponds to a classical copy.

For any integer $M$, we denote by $[M]$ the set $\{1,2,\ldots,M\}$.
%\subsection{Notations}
For an undirected graph $G=(V,E)$, 
we write $d(u,v)$ the distance between $u,v\in V$, i.e., the minimum length of a path between $u$ and $v$ in~$G$.
The \emph{eccentricity} of $u\in V$, denoted  $\ecc(u)$, is the maximum
distance of $u$ to any other node $v\in V$.
Then the \emph{diameter} $D$ is simply the maximum eccentricity among all nodes $u\in V$.
Finally, the Breath First Search tree from $u$ on $G$ is denoted by $\bfs(u)$
and the Depth First Search tree is denoted $\dfs(u)$.

\subsection{Quantum CONGEST networks}
In this paper we consider the CONGEST communication model. The graph $G = (V, E)$ represents the topology of the network, executions proceed with round-based synchrony and each node can transfer one message of $\band$ qubits to each adjacent node per round. Initially the nodes of the network do not share any entanglement. In this paper all the networks are undirected and unweighted. Unless explicitly mentioned, the bandwidth $\band$ will always be $\band=O(\log n)$, where $n=|V|$. The only exception is the results in Section~\ref{sec:sim}, and in particular~Theorem \ref{th:sim}, where the bandwidth will be kept as a parameter.
All links and nodes (corresponding to the edges and vertices of $G$, respectively) are reliable and suffer no faults. Each node has a distinct identifier.
% from a domain~$\Ii$. %For simplicity we will assume $\Ii=V = [0, n-1]$, but this assumption is not essential and easy to remove as long as $|\Ii| = \mathrm{poly}(n)$. 
%It is also assumed that each node can access infinite sequence of local random bits, that is, the algorithm can be randomized. 
Initially, each node knows nothing about the topology of the network except the set of edges incident to itself and the value $n$. 

We  write $\ket{\psi}_v$ to denote a state in the memory space of node $v$.
When such a state is entangled, we  use the tensor product notation.
For instance
$\frac{1}{\sqrt{M}}\sum_{i\in [M]} \bigotimes_{v\in V}\ket{i}_{v}$
denotes a uniform superposition over integers in $[M]$ distributed over all nodes of $V$ (i.e., each node %$v\in V$ 
is synchronized with the same value, in superposition).

When discussing the space complexity of quantum distributed algorithms, the memory refers to the number of qubits each node uses in its workspace. We assume for simplicity that each node performs unitary operations and measures its register only at the very end of the algorithm.\footnote{More general models, such as models allowing intermediate measurements and models that allow nodes to use both a quantum memory and a classical memory, can naturally be considered as well. We prefer not introducing such models in this paper since the precise definition (and in particular the relation between the classical part and the quantum part), which can for instance be proposed similarly to \cite{LeGall09}, would be fairly technical. We just mention that under natural ``hybrid'' models allowing both classical and quantum parts, the lower bound of Theorem \ref{th:LB2} would hold for any quantum algorithm that uses $s$ qubits of quantum memory, independently to the quantity of classical memory used, since the difficulty only comes from the quantum memory. A very simple example is the straightforward generalization of our model that allows a first classical phase with unbounded classical memory and then a quantum phase using only $s$ qubits of memory per node.} 
%The memory used by a node refers to the number of qubits the node uses in its workspace. 

\subsection{Two-party communication complexity}\label{sec:prelim-qcc}
Let $X$, $Y$ and $Z$ be three finite sets. Consider two players, usually called Alice and Bob, and assume that Alice receives as input an element $x\in X$, while Bob receives an element $y\in Y$. In the model of communication complexity, first introduced in the classical two-party setting by Yao \cite{YaoSTOC79}, the players want to compute a function $f\colon X \times Y\to Z$ by running a protocol such that, at the end of the protocol, both Alice and Bob obtain $f(x,y)$, and they want to minimize the communication. In the quantum communication model, introduced by Yao \cite{YaoFOCS93}, the players are allowed to communicate with qubits. More precisely, the quantum communication complexity of a quantum protocol is the maximum (over all inputs) number of qubits that the protocol sends. The quantum communication complexity of $f$ is the minimum communication complexity of any quantum protocol that computes $f$ with probability at least 2/3.

For any integer $n\ge 1$, the disjointness function $\DISJ{k}\colon \{0,1\}^k\times\{0,1\}^k\to \{0,1\}$ is the function such that $\DISJ{k}(x,y)=0$ if and only if there exists an index $i\in\{1,\ldots,k\}$ such that $x_i=y_i=1$. It is well known that its (randomized) classical communication complexity is $\Theta(k)$ bits \cite{Kalyanasundaram+92,Razborov92}. Its quantum communication complexity is $\Theta(\sqrt{k})$ qubits \cite{Buhrman+STOC98,Hoyer+STACS02,Razborov03,Aaronson+05}. Recently Braverman et al.~\cite{BGK+15} proved the following lower bound for quantum protocol with limited interaction, i.e., when only a bounded number of messages can be exchanged between Alice and Bob, which significantly improved the previous bound from \cite{Jain+FOCS03}

\begin{theorem}[\cite{BGK+15}]\label{th:BGK+15}
The $r$-message quantum communication complexity of $\DISJ{k}$ is $\tilde \Omega(k/r+r)$. 
\end{theorem}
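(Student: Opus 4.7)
The plan is to follow the quantum information complexity framework of Braverman et al. The $\tilde\Omega(r)$ summand is immediate from the definition of $r$-message protocols, since each of the $r$ messages must carry at least one qubit (otherwise it can be omitted, contradicting minimality of the message count). The real content is the $\tilde\Omega(k/r)$ bound, and I would attack it in three steps: a definition of a quantum information cost $\mathrm{QIC}$ that lower-bounds qubit communication, a direct sum reduction from $\DISJ{k}$ to a single AND gate, and a round-sensitive lower bound on the information cost of AND.

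First, I would introduce $\mathrm{QIC}_\mu(\Pi)$ for an $r$-message protocol $\Pi$ run on an input distribution $\mu$, defined as a sum over the $r$ rounds of conditional quantum mutual informations between the register transmitted in that round and the sender's input, conditioned on the receiver's input and her side of the conversation so far. A standard argument (using that the quantum mutual information of a $q$-qubit register is at most $2q$) gives $\mathrm{QIC}_\mu(\Pi)\le 2|\Pi|$, where $|\Pi|$ is the qubit communication cost, so any lower bound on $\mathrm{QIC}$ translates into one on $|\Pi|$.

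Second, I would fix a carefully designed ``hard'' single-coordinate distribution $\mu$ over $\{0,1\}\times\{0,1\}$ supported on non-intersecting pairs, and use $\mu^{\otimes k}$ on $\DISJ{k}$ inputs. A coordinate-wise embedding, in which the two players use public randomness to complete any single AND instance into a full $\DISJ{k}$ instance and simulate $\Pi$, yields the direct sum inequality
\begin{equation*}
\mathrm{QIC}_{\mu^{\otimes k}}(\Pi)\;\ge\;\sum_{i=1}^{k}\mathrm{QIC}_\mu(\Pi_i),
\end{equation*}
where each $\Pi_i$ is an $r$-message protocol computing the $i$-th AND with bounded error.

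The third and most delicate step is to show $\mathrm{QIC}_\mu(\Pi')=\tilde\Omega(1/r)$ for any $r$-message bounded-error protocol $\Pi'$ for AND on $\mu$. This is where I expect the main obstacle to lie. The strategy is contradiction: if $\mathrm{QIC}_\mu(\Pi')$ were $o(1/r)$, then a quantum Pinsker-type inequality (for instance via Fawzi--Renner recovery maps) applied round by round would force the local reduced state of each player to change by only $o(1/\sqrt{r})$ in trace distance between the two possible values of the other player's bit during each round; telescoping over the $r$ rounds, the final output distribution would be nearly identical on all four input pairs of AND, contradicting correctness. The quantum wrinkle, which is precisely what makes this harder than the classical Jain--Radhakrishnan--Sen argument, is that no-cloning and entanglement rule out straightforward cut-and-paste, so the round-by-round local transition bound must be carried out using quantum-information-theoretic substitutes. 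Putting the three steps together gives $|\Pi|\ge \tfrac12\,\mathrm{QIC}_{\mu^{\otimes k}}(\Pi)\ge \tfrac{k}{2}\cdot\tilde\Omega(1/r)=\tilde\Omega(k/r)$, and combining with the trivial $\tilde\Omega(r)$ bound finishes the proof.
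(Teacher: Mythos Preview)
The paper does not prove this theorem at all: it is stated purely as a citation of \cite{BGK+15}, with no accompanying argument. There is therefore no ``paper's own proof'' to compare your proposal against. Your sketch is a reasonable high-level outline of the actual Braverman--Garg--Kun-Ko--Mao--Touchette argument (quantum information cost as a lower bound on communication, direct sum over the $k$ coordinates, and a round-sensitive $\tilde\Omega(1/r)$ bound on the information cost of a single AND via a quantum cut-and-paste/recovery argument), but for the purposes of this paper the result is simply imported as a black box and you need not reprove it.
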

This result shows in particular that any $O(\sqrt{k})$-qubit quantum protocol for $\DISJ{k}$ requires $\tilde \Omega (\sqrt{k})$ messages, i.e., $\tilde \Omega (\sqrt{k})$ rounds of interaction between Alice and Bob.

\subsection{Quantum generic search}\label{app:dqo}
We now review the general framework of quantum generic search in the centralized model (see for instance~\cite{Magniez+SICOMP11} for a more thorough treatment). In this framework, we are given a set of items $X$ and we are looking for a marked item $x\in M$, for some unknown subset $M\subseteq X$. We have two main black-box unitary operators, and their inverses, available to design a quantum search procedure. The first one, $\setup$, is the quantum analogue of a random sampling. The second one, $\checking$, is a quantum checking procedure. 

It will be convenient to assume the existence of a third procedure, $\init$, that performs some global initialization: the procedure is applied only once at the beginning of the algorithm and its output is later used in order to implement $\setup$ and $\checking$.

We identify a specific register $\ket{\cdot}_I$ that we call \emph{internal}, which is the core of the algorithm. Its size is polylogarithmic in all parameters. It is used to encode the element we are looking for, to perform basic operations such as arithmetic operations (e.g., counting), and also to control the application of other unitary matrices. The quantum algorithm uses two additional registers to represent additional information created by the procedures $\init$ and $\setup$. 

The framework thus assumes that the following three quantum procedures are given (as black-boxes).
\begin{description}
\item[$\init$:] Creates an initial state $\ket{0}_I\ket{\text{init}}$ 
with some possible precomputed information $\ket{\text{init}}$ and a distinguished element $0\in X$, usually a long enough bit string of $0$s.
\item[$\setup{}$:] Produces a superposition from the initial state:
$$\ket{0}_I\ket{\text{init}}\mapsto \sum_{x\in X} \alpha_x\ket{x}_I\ket{\text{data}(x)}\ket{\text{init}},$$
where the $\alpha_x$'s are arbitrary amplitudes and $\text{data}(x)$ represents some information depending on $x$.
\item[$\checking{}$:] Performs the transformation
$$\ket{x,0}_I\ket{\text{data}(x)}\ket{\text{init}} \mapsto 
\ket{x,b_x}_I\ket{\text{data}(x)}\ket{\text{init}},$$
where $b_x=1$ if $x\in M$ and $b_x=0$ otherwise.
\end{description}\vspace{2mm}

Note that the procedures $\init$ and $\checking$ are often described as deterministic or randomized (i.e., classical) procedures. They can then been quantized using standard techniques: one first transforms it to a reversible map using standard techniques~\cite{Bennett+SICOMP89}, with potentially additional garbage 
%of size polylogarithmic of the initial memory space.
whose size is of the same order as the initial memory space.

Define $P_M=\sum_{x\in M}|\alpha_x|^2$, the probability to observe a marked element when measuring Register~$I$ after one application of $\setup{}$ on the initial state.
% $\ket{\bar{0}}$.
Classically one could iterate $\Theta(1/P_M)$ time this process in order to get at least one marked element with high probability. Amplitude amplification explains how to get a marked element using simply $\Theta(1/\sqrt{P_M})$ iterations of $\setup{}$ and $\checking{}$.
\begin{theorem}[Amplitude amplification~\cite{Brassard+ICALP98}]\label{thm:aa}
Let $\eps>0$ and assume that either $P_M=0$ or $P_M\geq \eps$ holds.
Then, for any $\delta>0$, there is a quantum algorithm that can decide if $M=\emptyset$ with success probability at least  $1-\delta$
using one unitary operator $\init$,
{$O(\log(1/\delta)/\sqrt{\eps})$} unitary operators $\setup{}$ and $\checking{}$, their inverse,
and {$O(\log(1/\delta)/\sqrt{\eps})$} other basic operations (independent of $M$) on the internal register.

More precisely, an observation of the internal register
when the algorithm declares that $M\neq\emptyset$,
 outputs a random $x\in M$ with probability $|\alpha_x|^2/P_M$. 
\end{theorem}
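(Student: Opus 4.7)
The plan is to set up standard Grover-style amplitude amplification and view it as a rotation in a two-dimensional plane. Let $\ket{\psi_0}=\ket{0}_I\ket{\text{init}}$ be the state produced by $\init$ and let $\ket{\psi}=\setup\ket{\psi_0}$. I would decompose $\ket{\psi}=\sqrt{P_M}\ket{\psi_M}+\sqrt{1-P_M}\ket{\psi_{\bar M}}$, where $\ket{\psi_M}$ and $\ket{\psi_{\bar M}}$ are the normalized projections of $\ket{\psi}$ onto the spans of, respectively, marked and unmarked computational basis states of Register~$I$. Then I would build two reflections. The reflection $R_M$, which negates marked basis states in Register~$I$, is realized by applying $\checking$ onto an ancilla prepared in $\ket{-}$ (phase kickback) and uncomputing with $\checking^{-1}$. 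The reflection $R_\psi = 2\ket{\psi}\langle\psi|-I$ is realized as $\setup\cdot R_0\cdot \setup^{-1}$, where $R_0$ negates the distinguished state $\ket{\psi_0}$; since $\ket{\psi_0}$ is a fixed computational-basis value of the internal register (with $\ket{\text{init}}$ already present from the single application of $\init$), $R_0$ reduces to $O(1)$ basic operations on Register~$I$. Each Grover iterate $G=-R_\psi R_M$ thus uses $O(1)$ calls to $\setup$, $\checking$, their inverses, and $O(1)$ internal operations, while $\init$ is called only once at the start.

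The standard two-dimensional analysis then shows that $G$ rotates $\ket{\psi}$ within $\mathrm{span}\{\ket{\psi_M},\ket{\psi_{\bar M}}\}$ by angle $2\theta$, where $\sin\theta=\sqrt{P_M}$; hence a measurement of Register~$I$ on $G^k\ket{\psi}$ yields a marked element with probability $\sin^2((2k+1)\theta)$. If $\theta$ were known, I would pick $k\approx\pi/(4\theta)=O(1/\sqrt{\eps})$ to reach success probability close to $1$; since only $\theta\geq \arcsin\sqrt{\eps}$ is guaranteed, I would instead use the randomized-stopping schedule of Brassard, H\o yer, Mosca and Tapp, drawing the iteration count from a geometric progression up to $\Theta(1/\sqrt{\eps})$, and following each run with one classical verification via $\checking$. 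This yields constant success probability when $P_M\geq\eps$ and zero error when $P_M=0$ (since no marked basis state ever appears in the superposition). The final boost to $1-\delta$ within the announced $O(\sqrt{\log(1/\delta)/\eps})$ budget is obtained by applying a further layer of amplitude amplification to a majority voter over $O(\log(1/\delta))$ parallel runs, which saves a $\sqrt{\log(1/\delta)}$ factor over naive classical repetition.

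For the ``more precisely'' claim, I would observe that $\ket{\psi_M}=\frac{1}{\sqrt{P_M}}\sum_{x\in M}\alpha_x\ket{x}_I\ket{\text{data}(x)}\ket{\text{init}}$, and that every iterate $G$ only rescales the coefficient of $\ket{\psi_M}$ without altering its internal superposition over marked $x$. Consequently, conditional on a marked outcome in Register~$I$, the observed index is distributed on $M$ exactly as $|\alpha_x|^2/P_M$, i.e., as a sample from the $\setup$-distribution restricted to~$M$.

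The only genuine technical subtlety is extracting the $\sqrt{\log(1/\delta)}$ rather than $\log(1/\delta)$ dependence on the error parameter: straightforward sequential boosting of a constant-error Grover search only gives $O(\log(1/\delta)/\sqrt{\eps})$. This sharper scaling is precisely the content of the amplitude amplification lemma of~\cite{Brassard+ICALP98} applied to the majority voter, so I would invoke it as a black box rather than re-derive it; the remainder of the proof is the routine two-dimensional Grover analysis sketched above.
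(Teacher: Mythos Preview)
The paper does not supply its own proof of this theorem: it is stated with attribution to~\cite{Brassard+ICALP98} and then used as a black box, so there is no in-paper argument to compare against. Your sketch is a faithful reconstruction of the standard two-dimensional amplitude-amplification argument, and your description of how $\init$, $\setup$, $\checking$ and their inverses assemble into the reflections $R_M$ and $R_\psi=\setup\cdot R_0\cdot\setup^{-1}$ matches the intended accounting (one call to $\init$, $O(1)$ calls to the other oracles per iterate, $O(1)$ internal-register operations per iterate). The distributional ``more precisely'' claim is also correctly argued: the Grover iterate acts within $\mathrm{span}\{\ket{\psi_M},\ket{\psi_{\bar M}}\}$ and never disturbs the internal weights $\alpha_x/\sqrt{P_M}$ inside $\ket{\psi_M}$.

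The one place where your outline is shaky is the passage from constant error to error $\delta$ within $O(\sqrt{\log(1/\delta)/\eps})$ calls rather than the routine $O(\log(1/\delta)/\sqrt{\eps})$. Your proposed mechanism, ``a further layer of amplitude amplification applied to a majority voter over $O(\log(1/\delta))$ parallel runs'', does not deliver this as written: if each run is already amplified to constant success, building the $O(\log(1/\delta))$ runs already costs $O(\log(1/\delta)/\sqrt{\eps})$ and the majority is then $(1-\delta)$-correct with nothing left to amplify; if instead the runs are unamplified, a majority over $O(\log(1/\delta))$ Bernoulli$(\eps)$ bits has no useful guarantee. You yourself flag this as the genuine subtlety and defer to~\cite{Brassard+ICALP98}; since the paper does exactly the same (it cites the reference for the entire statement), this is not a discrepancy with the paper, but you should be aware that the $\sqrt{\log(1/\delta)}$ scaling is not obtained by the majority-vote construction you describe.
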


\subsection{Distributed quantum optimization}\label{prelim:ampl}  
Let us first explain how to derive a generic quantum procedure for optimization problems in the centralized model.
Indeed, a well known application of amplitude amplification is optimization, such as minimum finding~\cite{Durr+SICOMP06}.
We are now given a procedure $\eval$ instead of $\checking$, which evaluates the (unknown) function we want to maximize.
\begin{description}
\item[$\eval{}$:] Performs the transformation 
$$\ket{x,0}_I\ket{\text{data}(x)}\ket{\text{init}} \mapsto 
\ket{x,f(x)}_I\ket{\text{data}(x)}\ket{\text{init}},$$
where $f\colon X\to\mathbb{Z}$ is the function we want to maximize.
\end{description}
Define the probability $P_{\opt}$ to observe an element where~$f$ takes a maximum value when measuring Register $I$ after one application of $\setup{}$ on the initial state as $$P_{\opt}=\sum_{x : \text{$f(x)$ is maximal}}|\alpha_x|^2.$$ 
A standard application of amplitude amplification shows how to maximize $f$ using $\Theta(1/\sqrt{P_{\opt}})$ iterations of $\setup{}$ and $\eval{}$. We state this result as follows.
\begin{corollary}[Quantum optimization]\label{cor:qo}
Let $\eps>0$ be such that $P_{\opt}\geq \eps$. 
Then, for any $\delta>0$, there is a quantum algorithm that can find,with probability at least $1-\delta$, some element $x$ such that $f(x)$ is maximum,
%decide if there is a marked element, and find one if there is any, 
using one unitary operator $\init$,
{$O(\log(1/\delta)/\sqrt{\eps})$} unitary operators $\setup{}$ and $\eval{}$, and their inverses,
and {$O(\log(1/\delta)/\sqrt{\eps})$} other basic operations (independent of~$f$) on the internal register. 
\end{corollary}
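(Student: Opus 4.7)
The plan is to reduce the optimization task to the decision problem of Theorem \ref{thm:aa} via a threshold-based search, in the spirit of the Dürr--Høyer quantum maximum-finding algorithm.

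The first step is to build, for every integer threshold $t$, a checking procedure $\checking_t$ that marks those $x\in X$ with $f(x)>t$. Starting from $\ket{x,0}_I\ket{\text{data}(x)}\ket{\text{init}}$, one applies $\eval$ to place $f(x)$ in a dedicated slot of the internal register, performs a reversible comparison to write the bit $[f(x)>t]$ into a flag slot, and then applies $\eval^{-1}$ to erase $f(x)$. This realizes $\checking_t$ in the exact form required by Theorem \ref{thm:aa} at the cost of one call to $\eval$, one call to $\eval^{-1}$, and $O(1)$ basic internal-register operations; no workspace beyond that already used by $\eval$ is needed. The key structural fact is that for every $t<f_{\max}$, the set $M_t=\{x:f(x)>t\}$ contains all maximizers, so $P_{M_t}\ge P_{\opt}\ge\eps$, while $M_{f_{\max}}=\emptyset$. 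Hence the gap promise of Theorem \ref{thm:aa} holds uniformly over all useful thresholds.

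The overall algorithm is iterative. Run $\init$ once; initialize the current threshold $t:=f(x_0)$ by executing $\setup$, measuring the internal register to obtain $x_0$, and reading off $f(x_0)$ with one additional call to $\eval$. In each subsequent round, invoke Theorem \ref{thm:aa} with checking procedure $\checking_t$ and per-round failure parameter $\delta'$: if it declares $M_t=\emptyset$, output the current best element; otherwise, use its output $x$ to update $t\leftarrow f(x)$ and continue. Correctness, conditional on every call to Theorem \ref{thm:aa} being successful, follows from the gap bound, since the only way the algorithm can terminate is $t=f_{\max}$, in which case the current best element is itself a maximizer.

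The main obstacle is to bound the total cost by $O(\sqrt{\log(1/\delta)/\eps})$, rather than a naive product of the per-round cost $O(\sqrt{\log(1/\delta')/\eps})$ and the number of rounds. Here the standard Dürr--Høyer amortization applies: by the ``more precisely'' clause of Theorem \ref{thm:aa}, each successful round draws its output $x$ from the distribution $|\alpha_x|^2/P_{M_t}$ on $M_t$, so the random sequence of thresholds $t_0<t_1<\cdots$ behaves like a rank-based walk on the weighted $f$-values. A careful amortized accounting then controls the expected total cost by $O(\sqrt{1/\eps})$ and, combined with a standard boost from expected cost to cost with failure probability $\delta$, yields the claimed $O(\sqrt{\log(1/\delta)/\eps})$ bound both on the number of invocations of $\setup$, $\eval$ and their inverses, and on the number of basic internal-register operations.
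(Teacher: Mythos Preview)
Your high-level strategy---build $\checking_t$ from $\eval$ and iterate in the style of D\"urr--H\o{}yer---is exactly the paper's approach, and your construction of $\checking_t$ and the correctness argument are fine. The problem is in the cost bound.

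You invoke Theorem~\ref{thm:aa} in every round with the \emph{global} parameter~$\eps$ (this is what your phrase ``per-round cost $O(\sqrt{\log(1/\delta')/\eps})$'' indicates, and it is also forced on you because Theorem~\ref{thm:aa} needs a promised lower bound on $P_{M_t}$, and $\eps$ is the only one you have). With that choice every round costs $\Theta(\sqrt{1/\eps})$ regardless of how large $P_{M_t}$ actually is, and there is nothing left to amortize: the total is simply (number of rounds)${}\times\Theta(\sqrt{1/\eps})$. The D\"urr--H\o{}yer amortization you cite yields $O(\sqrt{1/\eps})$ only when the cost of a round with current marked mass $P_{M_t}$ scales like $\sqrt{1/P_{M_t}}$; it is precisely the interplay between ``early rounds are cheap because $P_{M_t}$ is still large'' and ``late rounds are rare'' that makes the sum collapse. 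Concretely, in the uniform case with $|X|=1/\eps$ and a single maximizer, your scheme runs $\Theta(\log(1/\eps))$ rounds in expectation and hence uses $\Theta(\sqrt{1/\eps}\,\log(1/\eps))$ calls, not $O(\sqrt{1/\eps})$.

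The paper handles this by making the amplification parameter adaptive: it maintains a guess $\eps'$ initialized to $1/2$ and halves it only when amplitude amplification at level $\eps'$ fails to find an improvement, so that rounds in which $P_{M_t}$ is large are genuinely cheap. Summing the costs $\sqrt{1/\eps'}$ geometrically over the levels $\eps'=1/2,1/4,\ldots,\eps$, together with the D\"urr--H\o{}yer control on the number of threshold updates, is what delivers the claimed $O(\sqrt{\log(1/\delta)/\eps})$. To repair your argument you need this geometric search (or, equivalently, a BBHT-style exponential search inside each round) rather than a fixed-$\eps$ call to Theorem~\ref{thm:aa}.
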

\begin{proof}
The proof uses standard applications of Theorem~\ref{thm:aa} such as in~\cite{Durr+SICOMP06}. We give the main lines of the reduction:
\begin{quote}
%\textbf{Quantum optimization} % using amplitude amplification}
\begin{enumerate}
\item 
Start with some fixed $a\in X$;
\item
\label{lqo} Use amplitude amplification of Theorem~\ref{thm:aa}
with $\eps'=1/2$ and $\delta'=\delta$ to find $b>a$;
\item 
If an element is found, then set $a=b$ and go %back to Step~
to (\ref{lqo});
\item 
Else if $\eps'>\eps$, then set $\eps'=\eps'/2$ and go %back to Step~
to (\ref{lqo});
\item 
Output $a$ and stop.
\end{enumerate}
\end{quote}
The expected number of iterations, that is of jumps to Step~(\ref{lqo}), can be shown to be $O(\log(1/\eps))$,
with an expected number of applications of unitary maps and of other basic operations in {$O(\log(1/\delta)/\sqrt{\eps})$}
using similar arguments to~\cite{Durr+SICOMP06} for minimum finding. 
In order to get the claimed worst case bounds, one just need to abort
the computation when too much resources have been used and to output the current value of $a$.
\end{proof}

%%%%%%%%%%%%%%%%%%%%%%%%%%%%%%%%%%%%%%%%%%%%%%%%%%
We now explain how to implement quantum optimization in the distributed model. In the distributed setting, the internal register $I$ and therefore the control of the algorithm itself are simply centralized by a leader node. The two other registers $\ket{\text{data}(x)}$ and $\ket{\text{init}}$, however, can be distributed among all the nodes of the network, and implementing Procedures $\setup$ and $\eval$ thus generally requires communication through the network. As stated in Corollary \ref{cor:qo}, besides calls to $\setup$ and $\eval$ and their inverses, all the other operations performed in quantum optimization are done exclusively on the internal register $I$, and then can be implemented locally (i.e., without communication) by the chosen leader node. The round complexity of distributed quantum optimization thus depends on the round complexity of Procedures $\setup$ and $\eval$. The precise statement follows.

\begin{theorem}[Distributed quantum optimization]\label{thm:dqo}
Let $G=(V,E)$ be a distributed network with a predefined node $\leader\in V$.
Assume that $\init$ can be implemented within $T_0$ rounds and using  $\mem$ memory per node
in the quantum CONGEST model, and that
unitary operators $\setup{}$ and $\eval{}$ and their inverses
can be implemented within $T$ rounds and using $\mem$ memory per node. Assume that $s=\Omega(\log |X|)$.

Let $\eps>0$ be such that $P_{\opt}\geq \eps$. 
Then, for any $\delta>0$, the node $\leader$ can find, with probability at least $1-\delta$, some element $x$ such that $f(x)$ is maximum,
%decide if there is a marked element, and find one if there is any, 
in $T_0+{O(\log(1/\delta)/\sqrt{\eps})} \times T$ rounds.
This quantum algorithm uses $O(\mem \times \log(1/\eps))$ memory per node. 
\end{theorem}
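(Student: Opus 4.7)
The plan is to lift the centralized quantum optimization routine of Corollary~\ref{cor:qo} into the distributed setting by designating the node $\leader$ as the owner of the internal register $I$, while the registers holding $\ket{\text{data}(x)}$ and $\ket{\text{init}}$ are distributed (possibly entangled) across the nodes of $G$. Since $|X|$ is bounded so that $s=\Omega(\log|X|)$, the internal register indeed fits in the memory of $\leader$, and all ``internal'' operations of the centralized algorithm (arithmetic, comparisons, control, coin flips) are local computations at $\leader$ that cost no rounds.

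First, I would invoke $\init$ once; by assumption this creates the global state $\ket{0}_I\ket{\text{init}}$ (with the $\ket{0}_I$ part naturally residing at $\leader$) in $T_0$ rounds while using $s$ qubits per node. Next, I would run the centralized algorithm of Corollary~\ref{cor:qo} verbatim at the leader, but each time the algorithm calls $\setup$, $\eval$, or one of their inverses, $\leader$ initiates the corresponding distributed subroutine which, by hypothesis, completes in $T$ rounds with $s$ qubits of memory per node. Because these subroutines are unitary and can be controlled by classical indicators held at $\leader$ (the current threshold $a$ and stopping bit), the whole algorithm is a genuine quantum computation: no intermediate measurement of the global register is required. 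The total number of calls to $\setup$, $\eval$, and their inverses is $O(\sqrt{\log(1/\delta)/\eps})$ by Corollary~\ref{cor:qo}, giving the claimed bound $T_0 + O(\sqrt{\log(1/\delta)/\eps})\cdot T$ on the round complexity. At the end, $\leader$ measures register $I$ and obtains the desired maximizer $x$.

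The main subtlety, and what I expect to be the delicate point, is the memory analysis and in particular the factor $\log(1/\eps)$. The outer loop in the proof of Corollary~\ref{cor:qo} repeatedly halves the guess $\eps'$, running a fresh amplitude-amplification routine in each phase. Each phase is started on a state that is the output of the previous phase, and in general these intermediate states cannot be uncompleted without disturbing the amplitudes that we wish to preserve; hence the $O(\log(1/\eps))$ phases each consume their own workspace of $O(s)$ qubits per node (for the additional copies of $\ket{\text{init}}$ and $\ket{\text{data}(x)}$ needed to implement a fresh $\setup{}$ on an independent set of qubits). Summing these gives the stated $O(s\cdot \log(1/\eps))$ memory bound per node, while the round bound is unaffected since the total number of $\setup{}/\eval{}$ calls across all phases is still $O(\sqrt{\log(1/\delta)/\eps})$ by the geometric-series argument already used in the centralized analysis of~\cite{Durr+SICOMP06}.

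Finally, the correctness reduces verbatim to that of Corollary~\ref{cor:qo}: since each distributed implementation of $\setup{}$, $\eval{}$ (and their inverses) is, by assumption, an exact unitary realization of the centralized oracle, the global state evolves identically to the centralized algorithm. Thus the success probability $\ge 1-\delta$ and the claimed output guarantee on $\leader$ follow immediately.
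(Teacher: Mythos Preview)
Your approach is essentially the paper's: hold the internal register $I$ at $\leader$, run Corollary~\ref{cor:qo} there, and replace each black-box call by the assumed $T$-round distributed subroutine. The round-complexity argument is identical to the paper's.

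The memory accounting differs, and yours contains a small inconsistency. You attribute the $\log(1/\eps)$ factor to each of the $O(\log(1/\eps))$ phases needing \emph{fresh copies of $\ket{\mathrm{init}}$ and $\ket{\mathrm{data}(x)}$} at every node. But you only call $\init$ once (as you correctly do in the round analysis), so you cannot have independent copies of $\ket{\mathrm{init}}$; fortunately none are needed, since by definition $\setup$ and $\eval$ leave $\ket{\mathrm{init}}$ untouched and it can be shared across all phases. The paper instead argues that the overhead is purely at $\leader$: because measurements are deferred, the $O(\log(1/\eps))$ intermediate outcomes of amplitude amplification (each an element of $X$, hence $O(\log|X|)$ qubits) must be recorded in fresh registers, giving $O(\log|X|\cdot\log(1/\eps))=O(s\log(1/\eps))$ extra qubits at $\leader$ and only $O(s)$ at the other nodes. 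Both arguments yield the bound stated in the theorem, but the paper's localizes the overhead and avoids the $\ket{\mathrm{init}}$ issue.
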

\begin{proof}
The round complexity follows immediately from Corollary \ref{cor:qo}. Let us now analyze the amount of memory needed for the node $\leader$. Since measurements in the procedure described in Corollary \ref{cor:qo} are only performed at the end of the computation, all outcomes of amplitude amplification in Line~(\ref{lqo}) are recorded in the procedure, leading to an additional term $O(\log|X|\times \log(1/\eps))$ in the memory size. The total amount of memory needed for the implementation is thus $O(s+\log|X|\times \log(1/\eps))=O(\mem \times \log(1/\eps))$ qubits for $\leader$ and $O(s)$ qubits for the other nodes.
\end{proof}

%%%%%%%%%%%%%%%%%%%%%%%%%%%%%%%%%%%%%%%%%%%%%%%%%%%%%%%%%%%%%%%%%%
\section{First Algorithm : Exact Computation of the Diameter}\label{sec:alg1}
%%%%%%%%%%%%%%%%%%%%%%%%%%%%%%%%%%%%%%%%%%%%%%%%%%%%%%%%%%%%%%%%%%
Let $G=(V,E)$ denote the network we consider. We write $n=|V|$ and use $D$ to denote its diameter. We assume that the network~$G$ has elected a node $\leader\in V$, and computed its eccentricity $\ecc(\leader)$. This can be done using standard methods in $O(D)$ classical rounds and $O(\log n)$ memory space per node.
In addition, our algorithm will use a Breadth First Search tree rooted at $\leader$ (denoted $\bfs(\leader)$), which can be computed efficiently as described in the following proposition.
%Procedure $\init$ performs a global initialization. Remember that it is performed only once; the information obtained is later used in order to implement the two main procedures $\setup$ and $\eval$. The procedure $\init$ essentially constructs 
%$\init$ simply consists in performing a Breadth First Search tree from $\leader$ and computing $\ecc(\leader)$. This can be done classically (and thus quantumly as well) in $O(D)$ rounds, using $O(\log n)$ memory space per node.
%Both quantum algorithms will use the following classical procedure.
\begin{proposition}\label{prop:init}
There is a classical procedure that allocates to each node $v\in V$
its parent on $\bfs(\leader)$ and its distance to $\leader$
in $O(D)$ rounds and $O(\log n)$ memory space per node.
%In addition, at the end of the procedure, $\leader$ has computed $\ecc(\leader)$. 
\end{proposition}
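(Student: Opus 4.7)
The plan is to run a standard distance-labeled flooding from $\leader$ along the edges of $G$, and argue that it simultaneously builds $\bfs(\leader)$ and the distance labels within the claimed resource bounds. First I would have $\leader$ initialize a local counter $d_\leader := 0$ and send, in round $1$, the value $1$ together with its identifier to each of its neighbors. Every other node $v$ maintains two local variables: a distance $d_v$, initialized to $+\infty$, and a parent pointer $p_v$, initialized to $\bot$. Whenever $v$ receives messages of the form $(d,u)$ from one or more neighbors $u$ in a round $t$, it updates $(d_v,p_v) := (d,u)$ with the smallest $d$ received (breaking ties by identifier), provided this is the first time $d_v$ is set; and in the \emph{next} round $t+1$ it forwards $(d_v+1, v)$ to all its neighbors. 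After this single forwarding step, $v$ stays silent for the remainder of the procedure.

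The correctness follows by a straightforward induction on $t$: after round $t$, every node at graph distance exactly $t$ from $\leader$ has received a message for the first time, and its newly set $d_v$ equals $t$ while $p_v$ is one of its neighbors lying on a shortest path to $\leader$. The base case $t=0$ is trivial, and the inductive step uses the fact that a node at distance $t+1$ has at least one neighbor at distance $t$ that forwarded its label in round $t+1$, and no neighbor at distance strictly less than $t$ (which would have made $v$ be set earlier). Since the maximum distance from $\leader$ is $\ecc(\leader)\le D$, all nodes have finalized their $(d_v,p_v)$ after $O(D)$ rounds.

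For the bandwidth and memory bounds: every message carries a distance in $[0,D]\subseteq[0,n]$ and a node identifier, so $O(\log n)$ bits per message, which fits in the CONGEST bandwidth. Each node permanently stores only $d_v$, $p_v$, and a one-bit ``already forwarded'' flag, for a total of $O(\log n)$ bits; the transient buffering of incoming messages in a single round also requires only $O(\log n)$ bits since $v$ only needs to remember the current best $(d,u)$ pair as messages arrive.

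The only mild subtlety, which I would state explicitly, is synchronization: the procedure assumes all nodes know when round $1$ begins. This is handled by the preamble that elects $\leader$ and computes $\ecc(\leader)$, which by standard CONGEST techniques leaves every node aware of a global round counter within $O(D)$ rounds and $O(\log n)$ memory; alternatively, one may simply append the BFS flooding to the eccentricity computation without extra cost, since both procedures propagate at speed one per round from $\leader$. This yields the $O(D)$-round, $O(\log n)$-memory classical procedure claimed in the proposition.
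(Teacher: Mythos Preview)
Your proposal is correct and follows essentially the same approach as the paper: a standard synchronous BFS flooding from $\leader$ in which each node records the first neighbor that reaches it as its parent and the round of arrival as its distance. Your write-up is in fact more detailed than the paper's (you spell out the inductive correctness argument and the per-node memory accounting), but the underlying procedure is identical.
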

\begin{proof}
The procedure is the classical procedure described in Figure \ref{fig:init} that performs a Breadth First Search tree from $\leader$. % and computes $\ecc(\leader)$. 
This procedure has complexity $O(D)$ and can be implemented with $O(\log n)$ bits of memory per node.
%
%The procedure can been then quantized using standard techniques: one first transforms it to a reversible map using standard techniques~\cite{Bennett+SICOMP89}, with potentially additional garbage of size polylogarithmic of the initial memory space. The resulting quantum version creates the state $\ket{0}_I\ket{\text{init}}$ where $\ket{\text{init}}$ consists of  $n$ registers: each node of the network owns one of this register, which contains its parent on $\bfs(\leader)$ and its distance to $\leader$ (additionally, the register owned by $\leader$ also contains $\ecc(\leader)$). 
\end{proof}

\begin{figure}[t!]
\begin{center}
\fbox{
\begin{minipage}{14cm} 
\textbf{In $\boldsymbol{\ecc(\leader)=O(D)}$ rounds:}
\begin{enumerate}
\item First, only $\leader$ is activated, it declares itself as its own parent, and
sends a message through all its edges in order to activate its neighbors for the next round.
\item When a message reaches  node $v$ from node $u$:
\begin{itemize}
\item If $v$ was already activated, then $v$ ignores the message. %it sends a rejection message to $u$,
\item Else $v$ becomes activated, %for the next round 
it sets its parent to $u$ and
sends, through all its edges, % but its parent, 
its distance to $\leader$ in order to activate its neighbors for the next round.
\end{itemize}
\end{enumerate}
\end{minipage}
}
\end{center}\vspace{-4mm}
\caption{Construction of $\bfs(\leader)$ in Proposition \ref{prop:init}.}\label{fig:init}
\end{figure}

Our quantum algorithm will compute the diameter, i.e., the maximum eccentricity among all nodes, using quantum optimization. In Section \ref{sub:alg-simple}, we first give a simpler quantum algorithm that computes the diameter in $O(\sqrt{n}D)$ rounds.  In Section \ref{sub:alg-complicated} we then give our $O(\sqrt{nD})$-round quantum algorithm.
\vspace{2mm}

%%%%%%%%%%%%%
\subsection{A simpler quantum algorithm}\label{sub:alg-simple}
We apply the framework of Section \ref{prelim:ampl} with $X=V$ and the function $f\colon V\to \mathbb{Z}$ defined as 
\begin{equation}\label{eq:f1}
f(u)=\ecc(u)
\end{equation}
for all $u\in V$.
Obviously, maximizing $f$ gives the diameter of the network. 

Procedure $\init$ is the classical procedure of Proposition~\ref{prop:init}. This means that $\ket{\text{init}}$ represents the classical information computed by this classical procedure: this is a quantum register shared by all the nodes of the network, in which the part owned by each node contains its parent on $\bfs(\leader)$ and its distance to $\leader$. %(additionally, the part owned by $\leader$ also contains $\ecc(\leader)$). 

For any $u_0\in V$, the quantum state $\ket{\text{data}(u_0)}$ used in the quantum algorithm is defined as
$
\ket{\text{data}(u_0)}=\bigotimes_{v\in V}\ket{u_0}_{v}.
$
We set $\alpha_{u_0}=1/\sqrt{n}$ for all $u_0\in V$, and thus have $P_\opt\ge 1/n$. 
We first observe that Procedure $\setup$ can be implemented efficiently.
\begin{proposition}\label{prop:setup}
The procedure $\setup{}$ can be implemented in $O(D)$ rounds and using $O(\log n)$ memory space per node.
\end{proposition}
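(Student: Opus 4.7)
The target state is
$$\frac{1}{\sqrt{n}}\sum_{u_0\in V}\ket{u_0}_I \bigotimes_{v\in V} \ket{u_0}_v,$$
obtained from $\ket{0}_I \bigotimes_v \ket{0}_v$ while keeping the $\ket{\text{init}}$ register produced by $\init$ untouched. My plan is a two-phase implementation: a local preparation of the uniform superposition at $\leader$, followed by a CNOT fan-out of that superposition along the BFS tree $\bfs(\leader)$ already computed by $\init$.

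In the first phase, since $\leader$ knows $n$, it creates $\frac{1}{\sqrt{n}}\sum_{i=1}^n \ket{i}$ on $O(\log n)$ qubits of its internal register $I$ by an elementary local circuit, independent of the network. In the second phase I would push this index down $\bfs(\leader)$ level by level: by induction, at the start of round $k$ every node $v$ at depth $k-1$ holds a local register in state $\ket{i}_v$ entangled with $\ket{i}_I$, and in round $k$ each such $v$ allocates $O(\log n)$ fresh $\ket{0}$-ancillas for every BFS child $w$, applies a local CNOT copy from $\ket{i}_v$ into each ancilla, and transmits those ancillas to $w$ along the tree edge. After $\ecc(\leader)=O(D)$ rounds every node holds the same index, giving the desired entangled superposition. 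To match the statement literally, which ranges the summation over $u_0\in V$, I would view each index $i\in[n]$ as identifying the node $u_i$ in a preorder numbering of $\bfs(\leader)$; such a numbering is easily precomputed in $O(D)$ classical rounds and $O(\log n)$ bits per node by a bottom-up subtree-size pass followed by a top-down range-distribution pass, and can be folded into $\init$ or prepended to $\setup$ without affecting the claimed bounds.

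The main obstacle I foresee is the memory accounting during the fan-out at a node with many BFS children: a node with $\Delta$ children must, in a single round, produce $\Theta(\Delta\log n)$ qubits worth of copies. I would handle this by treating the $O(\log n)$-qubit outgoing buffer on each edge as part of the channel bandwidth rather than persistent node memory, and by performing each CNOT copy directly into the outgoing buffer of the corresponding edge so that no additional local workspace is used. Since all these buffers leave $v$ at the end of the round, $v$'s persistent memory remains $O(\log n)$ throughout, and the overall procedure fits the claimed $O(D)$-round and $O(\log n)$-per-node-memory bounds.
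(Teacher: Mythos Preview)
Your proposal is correct and follows essentially the same approach as the paper: the leader locally prepares the uniform superposition and then broadcasts it by CNOT copies along $\bfs(\leader)$ in $O(D)$ rounds. The paper's proof is terser and does not spell out the index-to-node identification or the per-round memory accounting at high-degree nodes that you address; your added discussion of the preorder numbering and of treating outgoing edge buffers as channel bandwidth rather than persistent memory is sound and simply fills in details the paper leaves implicit.
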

\begin{proof}
The node $\leader$ first prepares the quantum state
\[
\frac{1}{\sqrt{n}}\sum_{u_0\in V}\ket{u_0}_{\leader}.
\]
Then this state is simply broadcast using CNOT copies
%\footnote{
%A CNOT copy is a unitary map realizing $\ket{u}\ket{v}\mapsto \ket{u}\ket{u\oplus v}$, where $u\oplus v$ is the bitwise XOR operation on the binary encoding of $u$ and $v$. In particular $\ket{u}\ket{0}\mapsto\ket{u}\ket{u}$ for any binary string $u$.} 
to all nodes of the network along $\bfs(\leader)$, which takes
$d$ rounds. 
The resulting state is
\[
\frac{1}{\sqrt{n}}\sum_{u_0\in V}\ket{u_0}_{\leader}\bigotimes_{v\in V}\ket{u_0}_{v}=\frac{1}{\sqrt{n}}\sum_{u_0\in V}\ket{u_0}_{\leader}\ket{\text{data}(u_0)},
\]
as required.
\end{proof}

We now observe that Procedure $\eval$ can also be implemented efficiently.

\begin{proposition}\label{prop:eval1}
The procedure $\eval{}$ for the function $f$ defined in $(\ref{eq:f1})$ can be implemented in $O(D)$ rounds and using $O(\log n)$ memory space. 
\end{proposition}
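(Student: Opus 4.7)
The plan is to implement $\eval$ as the reversible quantum execution of a standard classical distributed algorithm that computes $\ecc(u_0)$ when every node holds a copy of the source $u_0$. Because $\ket{\text{data}(u_0)} = \bigotimes_{v \in V}\ket{u_0}_v$ places $u_0$ at each vertex, every node can locally decide whether it is the source by comparing with its own identifier, so the BFS source is synchronously selected in every branch of the superposition.

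First I would run a synchronous single-source BFS rooted at $u_0$ for $O(D)$ rounds. Each node $v$ maintains $O(\log n)$-qubit registers for a distance $d_v$ and a BFS parent $p_v$, initialized at the source to $(0,\text{self})$ and elsewhere to $(+\infty,\bot)$. In each round, nodes exchange their current $(d,p)$ with their neighbors; any $v$ with $p_v=\bot$ that learns of a reached neighbor adopts the smallest-ID such neighbor as parent and takes distance one greater. After $\ecc(u_0)\le D$ rounds every $v$ holds its exact distance $d(u_0,v)$. Second, I would aggregate $\max_v d_v$ up $\bfs(\leader)$ in $O(D)$ rounds along the tree known from $\init$: each node allocates a fresh $O(\log n)$-qubit register $m_v$, sets it bottom-up to the maximum of $d_v$ and its children's $m$-values, and forwards to its parent, so that $m_{\leader}=\ecc(u_0)$. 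The leader XORs this value into the designated output slot of the internal register. Third, I would uncompute the aggregation and the BFS by running their quantum circuits in reverse, restoring every $(d_v,p_v,m_v)$ to its initial value while the output slot alone retains $\ecc(u_0)$. The total round count is $O(D)$ and the per-node memory at any time is $O(\log n)$.

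The main obstacle is the reversibility of a single BFS round with only $O(\log n)$ qubits per node — naively recording the full transcript of exchanged messages would accumulate $\Theta(D\log n)$ qubits per node. The key observation is that in BFS from $u_0$, each node's state changes exactly once, in the round equal to $d(u_0,v)$: given the round number $k$, the pre-round state is $(\bot,\infty)$ whenever $d_v=k$ and coincides with the post-round state otherwise. Consequently the round-$k$ local update is a bijection that can be expressed as a local unitary acting on $(d_v,p_v)$ together with ancillary message registers, while the edge messages themselves are prepared and uncomputed within the round by paired CNOT copies against the sender's (still-unmodified) state, exactly the primitive used in Proposition~\ref{prop:setup}. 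Because every step is either a local unitary or an $O(\log n)$-qubit quantum message, the entire procedure extends linearly to superpositions $\sum_{u_0}\alpha_{u_0}\ket{u_0}_\leader\ket{\text{data}(u_0)}$ and realizes the claimed $\eval$ transformation.
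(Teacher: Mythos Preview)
Your proposal is correct and follows the paper's approach: describe the classical $O(D)$-round, $O(\log n)$-space procedure (BFS from $u_0$ to compute all distances, then convergecast the maximum to $\leader$) and run it coherently. The paper simply appeals to ``standard techniques'' (citing Bennett) for the quantization, whereas you supply the explicit bijection argument showing each BFS round is reversible in place---a detail the paper elides but which is needed to justify the $O(\log n)$ space bound, since a black-box application of Bennett's pebbling would only give $O(\log n\cdot\log D)$.
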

\begin{proof}
Consider the following problem: all the nodes of the network receive as input $u_0$, for some fixed $u_0\in V$, and the goal is for node $\leader$ to output $\ecc(u_0)$.
There is a straightforward $O(D)$-round classical algorithm solving this problem using $O(\log n)$ bits of memory per node: node $u_0$ starts building a Breadth First Search tree rooted at $u_0$, the nodes use this tree to compute their distance to $u_0$, and finally $\ecc(u_0)$ is transmitted to the node $\leader$. Note that this algorithm is precisely a procedure to compute $f(u_0)$ when the nodes initially share the classical information contained in $\ket{\text{data}(u_0)}$ (i.e., the nodes all know $u_0$). As mentioned in Section \ref{app:dqo}, this procedure can then been quantized using standard techniques in order to obtain a quantum algorithm that performs the transformation
\[
\ket{u_0,0}_I\ket{\text{data}(u_0)}\ket{\text{init}} \mapsto \ket{u_0,f(u_0)}_I\ket{\text{data}(u_0)}\ket{\text{init}},
\]
with the same round and space complexities.
\end{proof}

We are now ready to analyze the overall complexity of our algorithm.
We apply Theorem~\ref{thm:dqo} to maximize the function $f$ defined in~$(\ref{eq:f1})$. Propositions~\ref{prop:init},~\ref{prop:setup} and~\ref{prop:eval1}, along with the lower bound $P_\opt\ge 1/n$, show that this quantum algorithm uses $O(\sqrt{n}D)$ rounds and $O((\log n)^2)$ memory per node for any constant error probability~$\delta$.

%%%%%%%%%%%%
\subsection{The final quantum algorithm}\label{sub:alg-complicated}
We now show how to reduce the complexity to $O(\sqrt{nD})$. The key idea is to modify the function to be maximized, in order to increase $P_\opt$ when $d$ is large. Before defining the new function, let us introduce some definitions.

%Our algorithm will compute the diameter, i.e., the maximum eccentricity among all nodes, using quantum optimization. 
%
We take some integer~$d$ such that $d\leq D\leq 2d$. Since $\ecc(v)\le D\le 2\times \ecc(v)$ for any node $v\in V$, this will concretely be done by choosing $d=\ecc(\leader)$. 
%Assume that we have a Breath-First-Search tree of node $\leader$, denoted $\bfs(\leader)$. 
Similarly to \cite{Peleg+ICALP12}, we introduce the following numbering on the vertices of $G$ via a Depth-First-Search traversal of $\bfs(\leader)$.
\begin{definition}[$\dfs$-numbering]
The \emph{$\dfs(\leader)$-number} of $v\in V$, denoted by $\tau(v)$,
is the length of the path to reach $v$ from $\leader$ on a 
%sequential 
$\dfs$-traversal of $\bfs(\leader)$.
In particular %we set 
$\tau(\leader)=0$.
\end{definition}

We now use the $\dfs$-numbering to define sets of vertices. 

\begin{definition}\label{def:sets}
For any $u\in V$,
define $S(u)=\{v\in V : \tau(v)\in [\tau(u),\tau(u)+2d \mod{2n}]\}$.
Let $\Ss$ be the collection of subsets $S(u)$ for all $u\in V$.
\end{definition}

We following lemma, which will be crucial for our analysis, shows that $\Ss$ covers well the set $V$.
\begin{lemma}\label{lem:success}
Fix any  $v\in V$, and take some $u_0\in V$ uniformly at random.
Then $\Pr[v\in S(u_0)]\geq \frac{d}{2n}$.
%Therefore, for any function $f:V\to\mathbb{Z}$, we have $P_\opt\geq \frac{d}{n}$.
\end{lemma}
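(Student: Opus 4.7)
The plan is to work with the Euler tour of $\bfs(\leader)$ that defines $\tau$: write $w_0 = \leader, w_1, \ldots, w_{2n-2} = \leader$ for the sequence of visited nodes, so that $\tau(u) = \min\{t : w_t = u\}$. Two properties of this walk do all the work. First, each move $w_{t-1} \to w_t$ is either a \emph{down} step (to a previously unvisited child, so $w_t$ is a first visit and $t = \tau(w_t)$) or an \emph{up} step (to the parent, a revisit); hence the down moves in any sub-walk are in one-to-one correspondence with distinct first-visit times lying inside it. Second, the depth $h(t)$ of $w_t$ in $\bfs(\leader)$ satisfies $0 \le h(t) \le d$, because the BFS tree has depth $\ecc(\leader) = d$.

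Fixing $v$, the inequality $\Pr[v \in S(u_0)] \ge d/(2n)$ is equivalent to showing that the window $I_v := [\tau(v) - 2d,\, \tau(v)] \pmod{2n}$ contains the first-visit time of at least $d/2$ distinct nodes. I first handle the non-wrapping case $\tau(v) \ge 2d$. Over the $2d$ moves spanned by $I_v$, let $d^+$ and $d^-$ denote the numbers of down and up steps. Then
\[
d^+ + d^- = 2d, \qquad d^+ - d^- \;=\; h(\tau(v)) - h(\tau(v)-2d) \;\ge\; 0 - d \;=\; -d,
\]
so $d^+ \ge d/2$. The $d^+$ down moves correspond to $d^+$ distinct first-visit times inside $I_v$, yielding the desired bound.

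For the wrapping case $\tau(v) < 2d$, I would split $I_v$ into the two valid pieces $[0,\tau(v)]$ and $[\tau(v)-2d+2n,\,2n-2]$ and run the same up/down accounting on each. The prefix sub-walk goes from $\leader$ (depth $0$, itself a first visit) to $v$ (depth $h(\tau(v)) \ge 0$) in $\tau(v)$ moves, contributing at least $1 + \tau(v)/2$ first visits. The suffix sub-walk goes from some node of depth at most $d$ to $\leader$ in $2d-\tau(v)-2$ moves, contributing at least $(d-\tau(v)-2)/2$ additional first visits via its down steps. Since the two pieces lie in disjoint time intervals, the corresponding nodes are distinct, and the identity
\[
\bigl(1+\tfrac{\tau(v)}{2}\bigr) + \tfrac{d-\tau(v)-2}{2} \;=\; \tfrac{d}{2}
\]
gives at least $d/2$ distinct $u_0$'s with $v \in S(u_0)$, as required.

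The main obstacle is the wrap-around bookkeeping: one must check that the two sub-walks together cover the valid part of $I_v$, that no first visit is double-counted (ensured by disjoint time intervals and by $\leader$'s first visit occurring only at $t=0$), and that the only property of the BFS tree being used is the depth bound $h \le d$. A few degenerate cases must be handled separately: when $\tau(v) \in \{2d-2,2d-1\}$ the suffix piece is essentially empty and its lower bound becomes vacuous, but then the prefix contribution $1+\tau(v)/2$ already exceeds $d/2$; symmetrically, $\tau(v) = 0$ (i.e.\ $v = \leader$) makes the prefix trivial while the suffix bound is still at least $1 + (d-2)/2 = d/2$. Beyond this casework, the whole argument reduces to the elementary height inequality for a walk constrained to $[0,d]$.
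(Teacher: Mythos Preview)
Your argument is correct and follows essentially the same route as the paper: count first visits (down moves) in the length-$2d$ window of the Euler tour ending at $\tau(v)$, using only the depth constraint $0\le h\le d$ to force at least $d/2$ down moves. The paper's write-up is a bit slicker because it glues the two endpoints of the tour (both $\mathsf{leader}$, depth $0$) into a circle, so the single inequality $|m_d-m_u|\le d$ applies to \emph{any} arc of $2d$ moves and the wrap-around case needs no separate prefix/suffix bookkeeping.
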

\begin{proof}
Consider a sequential execution $\pi$ of $\dfs(\leader)$ on $\bfs(\leader)$.
We now consider $\pi$ as a circle by attaching its extremities.
Observe that each top-down move to some node $v$ at step $t$ causes the numbering $\tau(v)=t$. 
In addition, since $\bfs(\leader)$ has depth $d$, any segment of $\pi$ with $m_d$ 
top-down moves and $m_u$ bottom-up ones must satisfy $|m_d-m_u|\leq d$.

Thus for any vertex $v\in V$, consider the part of $\pi$ of length $2d$ ending in a
top-down move to $v$. It must contain at least $\ceil{d/2}$ top-down moves to some
vertices $u_1,\ldots, u_{\ceil{d/2}}$. Therefore $v\in S(u_i)$, for $i=1,\ldots,\ceil{d/2}$,
and we conclude that $\Pr[v\in S(u_0)]\geq \frac{d}{2n}$.
\end{proof}

The function we will now optimize is the function $f\colon V\to \mathbb{Z}$ defined as 
\begin{equation}\label{eq:f2}
f(u)=\max_{v\in S(u)} (\ecc(v))
\end{equation}
for any $u\in V$. Obviously, maximizing $f$ gives the diameter of the network. We optimize this function via the distributed quantum optimization procedure of Section~\ref{prelim:ampl}.  We set $X$, $\alpha_x$, $\ket{\text{init}}$ and $\ket{\text{data}(x)}$ exactly as in Section \ref{sub:alg-simple}. Lemma \ref{lem:success} immediately implies that $P_\opt\ge d/2n$.

The procedures $\init$ and $\setup$ are the same as in Section \ref{sub:alg-simple}. The procedure $\eval{}$, however, is different since the function to evaluate is not the same. We describe the new procedure, which is based on a refinement of the deterministic distributed algorithm for diameter in~\cite{Peleg+ICALP12}, in the next proposition. The proof of this proposition is postponed to Section \ref{sub:alg-eval}.%Its proof is omitted due to space restriction, but can be found in the full version.

\begin{proposition}\label{prop:eval2}
The procedure $\eval{}$ for the function $f$ defined in $(\ref{eq:f2})$ can be implemented in $O(D)$ rounds and using $O(\log n)$ memory space. 
\end{proposition}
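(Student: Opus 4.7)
My plan is to exhibit a deterministic classical distributed algorithm that, given $u_0\in V$ replicated at every node, computes $f(u_0)=\max_{v\in S(u_0)}\ecc(v)$ at $\leader$ in $O(D)$ rounds using $O(\log n)$ memory per node, and then to invoke the standard reversible quantization recalled in Section~\ref{app:dqo} to turn it into the unitary $\eval$ with the same resources. I will assume (extending the preprocessing of Proposition~\ref{prop:init} if needed, via an $O(D)$-round sequential $\dfs$ traversal of $\bfs(\leader)$ that labels each $v$ with $\tau(v)$) that $\ket{\text{init}}$ stores $\tau(v)$ at every node.

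The classical procedure will proceed in three phases. In a \emph{selection} phase, the node whose identifier matches $u_0$ reads off $\tau(u_0)$ and broadcasts it down $\bfs(\leader)$ in $O(D)$ rounds; each node $v$ then locally declares itself an \emph{active source} iff $\tau(v)\in[\tau(u_0),\tau(u_0)+2d\bmod 2n]$, so that by Definition~\ref{def:sets} the active sources are exactly $S(u_0)$, which contains at most $2d+1=O(D)$ vertices. In a \emph{pipelined BFS} phase, each active source $v$ launches a BFS tagged with its identifier at the relative round $\tau(v)-\tau(u_0)$; this is the scheduling from the APSP algorithm of~\cite{Peleg+ICALP12}, adapted to the $O(D)$-sized family $S(u_0)$. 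The DFS-ordered launch times guarantee that the BFS wavefronts never congest any edge, so each link carries at most one $O(\log n)$-bit token per round, each source $v$ learns $\ecc(v)$ within $D$ rounds of its launch, and the whole phase finishes in $O(d+D)=O(D)$ rounds. In an \emph{aggregation} phase, each active source sends its eccentricity up $\bfs(\leader)$, internal nodes taking running maxima, so that $\leader$ obtains $f(u_0)$ after $O(D)$ additional rounds.

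To build $\eval$, I will record all transient data in a garbage register to make the above reversible, CNOT-copy the value $f(u_0)$ from the leader's workspace into the designated internal register $I$, and then run the reverse computation to uncompute the garbage, all within the same $O(D)$-round, $O(\log n)$-memory budget. The step I expect to be most delicate is the congestion-free scheduling of the simultaneous BFSes under the tight $O(\log n)$ memory constraint: I will handle it by directly invoking the pipelining analysis of~\cite{Peleg+ICALP12}, observing that (i) at any round each node buffers only the few BFS tokens currently traversing it plus its own bookkeeping of $\tau$, parent, distance-to-source-in-progress, and running maximum, and (ii) because only $O(D)$ BFSes are ever launched and they are launched in DFS order along $\bfs(\leader)$, the same non-collision argument as in the classical APSP proof applies verbatim.
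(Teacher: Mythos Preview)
Your proposal has the right overall architecture---identify $S(u_0)$, launch pipelined BFSes from its members with DFS-order delays, aggregate the maximum---but there is a genuine gap in the selection phase, and a secondary issue in the aggregation.

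\textbf{The main gap.} You assume that $\tau(v)$ can be stored in $\ket{\text{init}}$ after ``an $O(D)$-round sequential $\dfs$ traversal of $\bfs(\leader)$''. A sequential DFS traversal of a tree on $n$ vertices takes $2(n-1)$ rounds, not $O(D)$; the depth of $\bfs(\leader)$ being $O(D)$ is irrelevant here. If $\init$ costs $\Theta(n)$ rounds, Theorem~\ref{thm:dqo} gives total complexity $\Theta(n)+O(\sqrt{nD})$, which collapses back to $\Theta(n)$ and voids Theorem~\ref{th:UB1}. The paper addresses exactly this point: it \emph{never} computes $\tau(v)$ globally. Instead, inside each call to $\eval$, it runs a $2d$-step partial DFS on $\bfs(\leader)$ starting from $u_0$; this visits precisely $S(u_0)$ and labels each visited $v$ with $\tau'(v)=\tau(v)-\tau(u_0)$ in $O(D)$ rounds and $O(\log n)$ memory. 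The paper even remarks that $\tau(v)$ itself ``is too costly to compute'' and that using $\tau'$ is the essential difference from~\cite{Peleg+ICALP12}. Without this relative-labelling trick your selection phase does not meet the stated bounds.

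\textbf{Secondary issue.} You write that ``each source $v$ learns $\ecc(v)$'' and then ships it up $\bfs(\leader)$. But under the $O(\log n)$ memory bound a node cannot retain its parent pointer in each of the $\Theta(D)$ BFS trees, so there is no tree along which to convergecast back to each source. The paper avoids this entirely: during the pipelined BFS each node $v$ keeps only two scalars, $t_v$ (the last $\tau'$ processed) and $d_v=\max$ over sources $u$ seen so far of $d(u,v)$. After the BFS phase a \emph{single} bottom-up max-convergecast of the $d_v$'s along $\bfs(\leader)$ delivers $\max_{u\in S(u_0)}\ecc(u)$ to $\leader$. Your ``running maximum'' is the right instinct, but it should replace, not supplement, the per-source eccentricity computation.
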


We are now ready to analyze the overall complexity of our algorithm and prove Theorem \ref{th:UB1}.
\begin{proof}[Proof of Theorem \ref{th:UB1}]
We apply Theorem~\ref{thm:dqo} to maximize the function $f$ defined in~$(\ref{eq:f2})$. Propositions~\ref{prop:init},~\ref{prop:setup} and~\ref{prop:eval2}, along with the lower bound $P_\opt\ge d/2n$, show that this quantum algorithm uses $O(\sqrt{nD})$ rounds and $O((\log n)^2)$ memory per node for any constant error probability $\delta$.
\end{proof}

%%%%%%%%%%%%
\subsection{The evaluation procedure}\label{sub:alg-eval}
We prove Proposition \ref{prop:eval2} by describing the Procedure $\eval$ for the function $f$ defined in~$(\ref{eq:f2})$. Remember that this procedure should, for any $u_0\in V$ given as input to all the nodes of the network, enable the node $\leader$ to output $f(u_0)=\max_{v\in S(u_0)} (\ecc(v))$. Remember also that we are assuming the each node $v\in V$ knows its parent on $\bfs(\leader)$ and its distance to $\leader$ (this information was computed in Procedure $\init$). 

The procedure $\eval$ is described in Figure \ref{fig:alg-eval}. (The procedure is described as a classical procedure. As mentioned in Section~\ref{app:dqo} and in the proof of Proposition \ref{prop:eval1}, it can then been quantized using standard techniques.)

The main idea is simple: each node $v\in S(u_0)$ starts a process
in order to globally compute its eccentricity.
This will be done in parallel, and the main difficulty is to avoid congestions.
We achieve this goal by refining techniques used in the deterministic distributed algorithm from~\cite{Peleg+ICALP12}. As in \cite{Peleg+ICALP12}, we coordinate the starts of all the processes so that messages initiated by nodes $v\in S(u_0)$ arrive in the same order as their $\dfs(\leader)$-numbers. Therefore, there is no congestion, and messages to disregard are easily identified, even with a local memory of logarithmic size for each node. 

A difference with \cite{Peleg+ICALP12} is that we do not use the $\dfs(\leader)$-number $\tau(v)$, which is too costly to compute, but instead the quantity $\tau'(v)=\tau(v)-\tau(u_0)$, which is much easier to compute, to coordinate the starts of the process. More precisely, node $v\in S(u_0)$ starts its process at round $2\tau'(v)$, see Step 2(2) of the algorithm. 

Another difference is that each node $v$ in Step 2 of the algorithm only keeps in memory the maximum of their previously computed distances. Nodes receive at each round messages of the form $(\tau',\delta)$ that they either disregard or broadcast after incrementing $\delta$. This message has been initiated at round $\tau'$ by the node $u\in S$ such that $\tau'=\tau'(u)$  in Step 2(2).
%which initiated to which the message corresponds, i.e., the node that initiated the message at Step 2(2), 
The number $\delta$ is the number of rounds this message took to reach $v$
%it takes for this message to reach node $v$ 
for the first time, i.e., $\delta=d(u,v)$.
%: the message $(\tau',\delta)$ has been initiated by the node $u$ such that $\tau'=\tau'(u)$ and $\delta$ is the distance $d(u,v)$. Node $v$ stores two values $t_v$ and $d_v$. 
The value~$t_v$ is used to decide which messages can be disregarded. The value~$d_v$ represents the maximum distance $d(u,v)$ over all the nodes $u$ processed so far.

\begin{figure}[t!]
\begin{center}
\fbox{
\begin{minipage}{14cm} 
%\begin{quote}\textbf{Evaluation in $\boldsymbol{O(D)}$ rounds}\\\vspace{-6mm}
\begin{description}
% \item[Step~1.] All nodes $v\in S(\tau_0)$ synchronize, reset their internal clock, and set $\tau'(v)=\tau(v)-\tau_0$.
% \item[Step2.] Each node $v\in S(\tau_0)$ initiates a $\bfs(v)$ at round $2\tau'(v)$
% for estimating its distance to all other nodes of the graph.
% More precisely, all nodes $v\in V$ implements the following process during $2(d+k)$ rounds:
\item[Step 1.] The network performs sequentially $2d$ steps of a Depth First Search traversal on $\bfs(\leader)$ starting at $u_0$ (if it reaches the end of the DFS, it starts again from $\leader$). Let us write $S$ the nodes visited by the process, and for each $v\in S$, let $\tau'(v)$ to be the index of the round they are reached (for the first time) by the traversal.
\item[Step 2.] Each node $v\in V$ implements the following process during $6d$ rounds:
% in order to compute the maximum $\ecc(v)$ for $v\in S$:
\begin{enumerate}
\item Set  $t_v=-1$ and $d_v=0$.
%\item For $v\in S(\tau_0)$ and at round $2\tau'(v)$:\\
\item If $v\in S$ and only at round $2\tau'(v)$:\\
Broadcast message $(\tau'(v),0$) along all edges $(v,w)$. % of~$G$.
\item At each round: % (including the $2\tau'(v)$-th)
\begin{enumerate}
\item Disregards all messages of type $(\tau',\delta)$ with $\tau'\leq t_v$\label{clear}.
\item Keep at most one of the remaining messages of type $(\tau',\delta)$ with $\tau' > t_v$  \label{no-congestion}.\\
\emph{(In fact they are all equal according to Lemma~\ref{lem:no-congestion}.)}\\
%\item 
If one message has been kept then:\\
$\phantom{set}$Set $t_v=\tau'$ and $d_v=\max(d_v,\delta)$.\\
$\phantom{set}$Broadcast message $(\tau',\delta+1)$ along all edges $(v,w)$ of $G$.
\end{enumerate}
\end{enumerate}
\item[Step~3.] Each node $v\in V$ sends its current value $d_v$ to $\leader$. 
In order to avoid congestions, the transmission is done bottom up on $\bfs(\leader)$, and at each node only the maximum of received values is transmitted.
\item[Step~4.] Node $\leader$ computes the maximum of all received values. %$d_v$.
%computes the bit $b_{\tau_0}=1$
%if he gets back a bit $1$, and $b_{\tau_0}=0$ otherwise.
\item[Step~5.] Revert steps 3 to 1, in order to clean all registers.
\end{description}
%\end{quote}
\end{minipage}
}
\end{center}\vspace{-4mm}
\caption{The procedure $\eval$.}\label{fig:alg-eval}
\end{figure}

The procedure $\eval$ can be implemented in $O(D)$ rounds and using $O(\log n)$ space per node since at any time each node only needs to keep one message $(\tau',\delta)$, in addition to $t_v$ and $d_v$, in memory. The correctness follows from the following three lemmas, which together show that the procedure computes the maximum of $\ecc(v)$ over all $v\in S(u_0)$.
%direct facts. %We start by direct facts.
\begin{lemma}\label{lem:su0}
%\end{fact}
% \begin{fact}
% $\eval{}$ requires $O(D)$ rounds and $O(\log n)$ memory per node.
% \end{fact}
%\begin{fact}\label{fact:increasing}
%For every nodes $v,w\in V$, if $\tau(v)<\tau(w)$ then $d(v,w)\leq \tau(w)-\tau(v)$. Furthermore, $0\leq \tau'(v)<2n$.
At Step~1, the computed set $S$ satisfies $S=S(u_0)$.
Furthermore,
for every nodes $v,w\in S$, if $\tau'(v)<\tau'(w)$ then $d(v,w)\leq \tau'(w)-\tau'(v)$. %Furthermore, $0\leq \tau'(v)<3d$.
%\end{fact}
\end{lemma}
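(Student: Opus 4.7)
The plan is to identify Step~1's execution with a cyclic shift of the canonical DFS walk that defines the numbering $\tau$. The canonical DFS traversal of $\bfs(\leader)$ produces a closed walk $W$ on the tree, of length essentially $2n$, in which consecutive positions are tree-adjacent, and $\tau(v)$ is the position of the first appearance of $v$ in $W$. Step~1 is simply $W$ restarted at position $\tau(u_0)$ and run for $2d$ edges, with the parenthetical ``if it reaches the end of the DFS, it starts again from $\leader$'' implementing the cyclic wrap-around.

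From this identification, $S$ is precisely the set of nodes that appear in positions $[\tau(u_0),\tau(u_0)+2d \bmod 2n]$ of $W$. Matching this with Definition~\ref{def:sets} yields $S = S(u_0)$. For the distance inequality I would then invoke the fundamental property that consecutive positions of a DFS walk on a tree are joined by a tree edge, which is also an edge of $G$. Consequently, the portion of Step~1's walk from round $\tau'(v)$ to round $\tau'(w)$ is itself a walk of length $\tau'(w)-\tau'(v)$ in $G$ from $v$ to $w$, whence $d(v,w) \le \tau'(w)-\tau'(v)$.

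The main subtlety I anticipate is the bookkeeping for nodes that appear multiple times in $W$: since every internal node of $\bfs(\leader)$ is visited more than once, one must verify carefully that Definition~\ref{def:sets}, phrased in terms of the first-visit quantity $\tau(v)$, genuinely captures the set of nodes encountered during the $2d$-edge rotated walk. This reduces to a direct case analysis analogous to the counting of top-down and bottom-up moves performed in the proof of Lemma~\ref{lem:success}: any node reached during the rotated walk can be tracked back to a top-down move at some position in $[\tau(u_0),\tau(u_0)+2d]$, and conversely every $v$ with $\tau(v)$ in this interval is reached at the corresponding position. Once this matching is pinned down, both statements of the lemma follow by direct inspection of the DFS walk.
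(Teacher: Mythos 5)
Your proposal is correct and, for the first claim, essentially identical to the paper's argument; for the second claim it takes a genuinely different and more self-contained route. The paper, like you, identifies Step~1 with the DFS walk of $\bfs(\leader)$ restarted at $u_0$'s position, asserts $\tau'(v)=\tau(v)-\tau(u_0)$, and reads off $S=S(u_0)$ from Definition~\ref{def:sets}. For the inequality $d(v,w)\le \tau'(w)-\tau'(v)$, however, the paper does not argue directly: it invokes Property~1 of~\cite{Peleg+ICALP12}, stated for the numbering $\tau$, and observes that it transfers to $\tau'$ because $\tau'$ is a constant shift of $\tau$. Your argument --- the segment of the Step~1 walk between rounds $\tau'(v)$ and $\tau'(w)$ is itself a walk in $G$ from $v$ to $w$ of length $\tau'(w)-\tau'(v)$, since consecutive positions are tree-adjacent --- proves the same bound from first principles, and has the added virtue of applying to $\tau'$ exactly as defined in Step~1 (first round the rotated walk reaches the node) without passing through the identification $\tau'=\tau-\tau(u_0)$; this is what Lemmas~\ref{lem:increasing} and~\ref{lem:no-congestion} actually need. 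One caution on the subtlety you flag: your proposed resolution (``any node reached during the rotated walk can be tracked back to a top-down move at some position in the window'') is not quite right, because a strict ancestor $v$ of $u_0$ is re-entered \emph{bottom-up} shortly after round $0$ even though its unique top-down entry $\tau(v)$ precedes $\tau(u_0)$; such a $v$ lies in $S$ without its $\tau$-value lying in the window. The paper's proof makes the identical leap when it asserts $\tau'(v)=\tau(v)-\tau(u_0)$ without discussing revisits, so this is not a gap relative to the paper, and it is harmless for the algorithm since $S\supseteq S(u_0)$ and your direct proof of the distance inequality is unaffected.
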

\begin{proof}
First observe that at Step~1, $\tau'(v)=\tau(v)-\tau(u_0)$, for all $v\in S(u_0)$.
Therefore, the computed set $S$ satisfies $S=S(u_0)$.
Then the rest of the statement is a direct consequence of~\cite[Property~1]{Peleg+ICALP12}
which was stated for the $\dfs(\leader)$-numbering $\tau$, and therefore remains valid for $\tau$ since $\tau'(v)=\tau(v)-\tau(u_0)$, for all $v\in S(u_0)$.
%
% %The second part of the fact is direct. 
% The result was stated in~\cite[Property~1]{Peleg+ICALP12} for the $\dfs(\leader)$-numbering $\tau(v)$. 
% Since $\tau'(v)=\tau(v)-\tau(u_0)$, for all $v\in S(u_0)$, this remains valid for the numbering $\tau'$.
\end{proof}

\begin{lemma}\label{lem:increasing}
Let $r_1$ be the first round that a message of
type $(\tau'_1,*)$ reaches a given node $v$,
and similarly for $r_2$ and $\tau'_2$.
Then $r_2-r_1\geq \tau_2'-\tau_1'$.
In particular, if $\tau_1'<\tau_2'$ then $r_1<r_2$.
\end{lemma}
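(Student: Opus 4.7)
The plan is to establish the stronger fact that for every $\tau'$ and every node $w$, the first round at which $w$ receives a message of type $(\tau',\ast)$ is \emph{exactly} $2\tau' + d(u_{\tau'}, w)$, where $u_{\tau'}\in S$ denotes the node with $\tau'(u_{\tau'})=\tau'$. The ``$\geq$'' direction is a standard wavefront argument: the message is first broadcast at round $2\tau'$ (Step~2(2)) and travels along at most one edge per round, so no node at distance $k$ from $u_{\tau'}$ can see it before round $2\tau' + k$. Granted this stronger fact, Lemma~\ref{lem:increasing} is a one-line computation: writing $u_i = u_{\tau'_i}$ and $r_i = 2\tau'_i + d(u_i, v)$, we obtain
\[
r_2 - r_1 \;=\; 2(\tau'_2-\tau'_1) + d(u_2,v) - d(u_1,v) \;\geq\; 2(\tau'_2-\tau'_1) - d(u_1,u_2) \;\geq\; \tau'_2 - \tau'_1,
\]
using the triangle inequality and then Lemma~\ref{lem:su0} (which gives $d(u_1,u_2) \leq |\tau'_2-\tau'_1|$).

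The interesting direction of the stronger fact is the ``$\leq$'' bound: the message $(\tau',\ast)$ must actually propagate along every shortest path out of $u_{\tau'}$, i.e.~it is never blocked at Step~2(3a). I would prove this by induction on $k = d(u_{\tau'}, w)$. The base case $k=0$ is immediate since $u_{\tau'}$ itself initiates the broadcast at round $2\tau'$. For the inductive step, pick a neighbor $w'$ of $w$ with $d(u_{\tau'}, w') = k-1$. By the inductive hypothesis together with the ``$\geq$'' direction, $w'$ first receives $(\tau',\ast)$ precisely at round $2\tau' + (k-1)$; it suffices to show that $w'$'s counter satisfies $t_{w'} < \tau'$ at that round, so that $w'$ forwards the message in Step~2(3b) and $w$ receives $(\tau', k)$ at round $2\tau' + k$.

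The main obstacle is thus ruling out that some message $(\tau'',\ast)$ with $\tau'' \geq \tau'$ reached $w'$ strictly before round $2\tau' + (k-1)$. The case $\tau'' = \tau'$ is impossible by the ``$\geq$'' bound already used. For $\tau'' > \tau'$, combining Lemma~\ref{lem:su0} (giving $d(u_{\tau'}, u_{\tau''}) \leq \tau'' - \tau'$) with the triangle inequality yields
\[
d(u_{\tau''}, w') \;\geq\; d(u_{\tau'}, w') - d(u_{\tau'}, u_{\tau''}) \;\geq\; (k-1) - (\tau'' - \tau'),
\]
so the ``$\geq$'' bound for $(\tau'',\ast)$ at $w'$ is at least $2\tau'' + (k-1) - (\tau''-\tau') = \tau'' + \tau' + (k-1)$, which strictly exceeds $2\tau' + (k-1)$ whenever $\tau'' > \tau'$. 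This contradiction closes the induction and, together with the computation in the first paragraph, proves the lemma.
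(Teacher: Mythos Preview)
Your proof is correct and follows essentially the same approach as the paper: both hinge on the identity $r_i = 2\tau'_i + d(u_{\tau'_i}, v)$ together with the triangle inequality and Lemma~\ref{lem:su0}. The paper simply asserts this identity without further comment, whereas you supply a careful inductive justification of the ``$\leq$'' direction (showing that messages are never blocked at Step~2(3a)), which makes your argument more complete than the paper's own proof.
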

\begin{proof}
Let $z_1,z_2$ be such that $\tau_i'=\tau'(z_i)$, for $i=1,2$.
This is the first time that a message from $z_i$ reaches $v$, 
therefore $r_i=2\tau'(z_i)+d(z_i,v)$.
In addition, by the triangle inequality, we also have that
$d(z_1,v)-d(z_2,v)\leq d(z_1,z_2)$.
Thus
$$d(z_1,z_2)\geq d(z_1,v)-d(z_2,v)=r_1-r_2+2(\tau'(z_2)-\tau'(z_1)).$$

Since $\tau'(z_1)<\tau'(z_2)$, then 
%$\tau(z_1)<\tau(z_2)$ and
by  Lemma~\ref{lem:su0} we also have that
$$d(z_1,z_2)\leq %\tau(z_2)-\tau(z_1)=
\tau'(z_2)-\tau'(z_1).$$

Therefore we get
\[
r_2-r_1\geq \tau'(z_2)-\tau'(z_1)>0,
\]
which proves the lemma.
\end{proof}

\begin{lemma}\label{lem:no-congestion}
At Step~\ref{no-congestion}, all remaining messages
are identical.
\end{lemma}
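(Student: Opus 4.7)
The plan is to split Lemma \ref{lem:no-congestion} into two independent claims: (i) at any round $r$, every remaining message arriving at a node $v$ carries the same initiator index $\tau'$; (ii) every remaining message also carries the same hop counter $\delta$. For (i), I would first note that once $v$ keeps a message with initiator index $\tau'$ it sets $t_v=\tau'$, so every subsequent message with that same $\tau'$ is discarded in step~\ref{clear}; hence every remaining message at round $r$ must be a \emph{first arrival} at $v$ of some initiator's broadcast. Lemma \ref{lem:increasing} then asserts that the first-arrival rounds at $v$ are strictly increasing in $\tau'$, so at most one initiator index can correspond to a first arrival at round $r$, and all remaining messages share that common index $\tau'^{\ast}$.

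For (ii), let $u\in S$ be the unique initiator with $\tau'(u)=\tau'^{\ast}$. A short induction on the BFS layers rooted at $u$, using the fact that each node broadcasts at most once for a given initiator index, shows that for every node $w$ the first reception of a $(\tau'^{\ast},\cdot)$-message at $w$ happens at round $2\tau'^{\ast}+d(u,w)$ with accompanying hop counter $\delta=d(u,w)$, after which $w$ broadcasts $(\tau'^{\ast},d(u,w)+1)$ and is silent for this index forever. Any remaining message arriving at $v$ at round $r$ therefore originates from a neighbor $w$ whose own first arrival occurred at round $r-1$, forcing $d(u,w)=d(u,v)-1$; every such message is thus the identical pair $(\tau'^{\ast},d(u,v))$.

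The main obstacle is in (i): one must rule out the possibility that $v$ processes initiator indices out of order, which would let $t_v$ overshoot some not-yet-arrived $\tau'$ and cause its eventual first arrival to be wrongly discarded in step~\ref{clear}. This is precisely what Lemma \ref{lem:increasing} prevents, so the heavy lifting is already isolated there; once the monotonicity of first-arrival rounds is in hand, the rest of the argument is routine synchronous-BFS bookkeeping on the distance from the unique initiator $u$.
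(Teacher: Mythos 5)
Your proposal is correct and follows essentially the same route as the paper: the paper also argues by splitting into the two cases ``different initiator indices'' (ruled out at a common round by Lemma~\ref{lem:increasing}, since remaining messages are necessarily first arrivals) and ``same index but different hop counters'' (ruled out because both messages originate from the unique initiator $u$ with that index and first reach $v$ at the same round, forcing $\delta=d(u,v)$). Your explicit BFS-layer induction in part (ii) is just a fleshed-out version of the paper's one-line observation that the counter $\delta$ of a first-arriving message equals $d(u,v)$, so there is no substantive difference.
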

\begin{proof}
Assume by contradiction that at least two messages $(\tau_1',\delta_1)$ and $(\tau_2',\delta_2)$ remain
such that either $\tau_1'\neq \tau_2'$ or $\delta_1\neq \delta_2$. 
By construction of our algorithm, they are the first messages of types $(\tau_1',*)$ and $(\tau'_2,*)$ reaching $v$.

If $\tau'_1\neq \tau'_2$,
by Lemma~\ref{lem:increasing}, those messages cannot arrive at the same round, which contradicts the fact they both arrive at the current round.

If $\tau'_1=\tau'_2$, but $\delta_1\neq \delta_2$, we get another contradiction. Indeed, those messages come from the same vertex $u$ such that $\tau'(u)=\tau'_1$, and reach $v$ at the same round for the first time. Thus they must satisfy $\delta_1=\delta_2$.
\end{proof}

\section{Second Algorithm: 3/2-Approximation of the Diameter}
%\cite{Lenzen+PODC13}
%\subsection{General approach}
The framework for distributed quantum optimization we developed in this paper, and in particular its application to find vertices of maximum eccentricities presented in Section~\ref{sec:alg1}, can be used as well to design fast quantum approximation algorithms for the diameter. In this section we show how to apply our framework to speed up the $3/2$ classical approximation algorithm designed in~\cite{Holzer+DISC14}, which computes a value~$\bar{D}$ such that $\bar{D}\leq D\leq 3\bar{D}/2$, and prove Theorem~\ref{th:UB2}.

Our quantum algorithm for $3/2$-approximation is described in Figure \ref{fig:alg-approx}. In the description of the algorithm, $s\in[n]$ is an integer parameter that will be set later.

\begin{figure}[t!]
\begin{center}
\fbox{
\begin{minipage}{14cm} 
%\begin{quote}\textbf{$\boldsymbol{3/2}$-approximation of the diameter} %\\
%Remind that all nodes share $d=\ecc(\leader)$ and that $d\leq D \leq 2d$
\hspace{-10mm}
\begin{description}
\item[Preparation in $\boldsymbol{\tilde{O}(n/s + D)}$ rounds.] 
This part is the same as Steps~1 to~5 of Algorithm~1 in \cite{Holzer+DISC14}:
%, but we skip details, and thus the number of rounds is identical: 
\begin{enumerate}
\item Each vertex joins a set $S$ with probability $(\log n)/s$.\\
If more than $(n(\log n)^2/s)$ vertices join $S$, then abort.
\item Each vertex $v\in V$ computes the closest node in $S$ to $v$, which is denoted $p(v)$.\\
Compute a node $w\in V$ maximizing the distance $d(w,p(w))$.
\item Compute a $\bfs$ tree from $w$.\\
The $s$ closest nodes to $w$ in the tree join a set $R$. % (where ties are broken arbitrarily).
\end{enumerate}
\item[Quantum optimization $\boldsymbol{\tilde O(\sqrt{sD}+D)}$ rounds.] Compute the maximum eccentricity of the vertices in $R$. 
\item[Output.] Return the maximum computed eccentricity.
\end{description}
%\end{quote}
\end{minipage}
}
\end{center}\vspace{-4mm}
\caption{Our quantum algorithm computing a $3/2$-approximation of the diameter.}\label{fig:alg-approx}
\end{figure}

This quantum algorithm is exactly the same as Algorithm 1 in~\cite{Holzer+DISC14}, except for the second part: in \cite{Holzer+DISC14} the maximum eccentricity was computed classically in $O(s+D)$ rounds by computing a $\bfs$ tree for each vertex in $R$. We thus omit the details about the implementation of the first part and the proof of correctness of the whole algorithm (which can be found in \cite{Holzer+DISC14}). Let us nevertheless point out that this first part requires a polynomial amount of (classical) memory. The quantum procedure for the second part we describe below, however, will only require a polylogarithmic amount of memory.

The quantum procedure for the second part is essentially the same as the quantum algorithm presented in Section \ref{sec:alg1}, which computed the maximum eccentricity of all the nodes in $V$ (i.e., the diameter of the network). The difference is that this time we do the optimization only on the vertices in $R$: the function to optimize the function of Equation $(\ref{eq:f2})$ restricted to the domain $R\subseteq V$. By replacing node $\leader$ by node $w$, and replacing ``\textrm{mod} $2n$'' by ``\textrm{mod}~$2s$'' in Definition \ref{def:sets}, we obtain $P_\opt\ge d/2s$ instead $P_\opt\ge d/2n$ since~$R$ is defined as the set of the $s$ closest nodes to $w$. By additionally modifying the procedure $\setup$ so that it distributes the state $\frac{1}{\sqrt{s}}\sum_{u_0\in R} \bigotimes_{v\in V}\ket{u_0}_{v}$ instead of $\frac{1}{\sqrt{n}}\sum_{u_0\in V} \bigotimes_{v\in V}\ket{u_0}_{v}$, Theorem~\ref{thm:dqo} implies that the maximum eccentricity of the vertices in $R$ can be computed with high probability in $\tilde O(\sqrt{sD}+D)$ rounds and using $O((\log n)^2)$ qubits of memory per node.

The overall round complexity of our algorithm is $\tilde O(n/s + \sqrt{ns} +D)$.
Choosing $s=\Theta({n}^{2/3}D^{-1/3})$ gives complexity $\tilde O(\sqrt[3]{nD}+D)$, as claimed in Theorem~\ref{th:UB2}.
%Note that the algorithm uses $O(\log n)$ qubits of memory per node for the second part. The first step, which is classical, uses a polynomial of memory if implemented as in \cite{Holzer+DISC14}.

%We now sketch the algorithm of Theorem~\ref{th:UB2} which computes some value $d'$
%such that $d'\leq D\leq 3d'/2$ with high probability.
%Our algorithm is inspired by the one of \cite[Algorithm~2]{Holzer+PODC12}.
%They start similarly. On this part, our algorithm is therefore classical and uses a polynomial amont of classical memory space per node.
%The second part is quantum and used only $O(\log n)$ qubits of memory per node.

%%%%%%%%%%%%%%%%%%%%%%%%%%%%%%%%%%%%
%%%%%%%%%%%%%%%%%%%%%%%%%%%%%%%%%%%%
\section{Lower Bound: Tight Bound for Small Diameter}\label{sec:LB1}
In this section we show that the connection established in prior works (in particular in \cite{Frischknecht+SODA12,Holzer+PODC12,Abboud+DISC16,Bringmann+DISC17}) between the round complexity of computing the diameter and the two-party communication complexity of the disjointness function still holds in the quantum setting. We then use this connection to prove Theorem \ref{th:LB1}. 

In the following we will use the notation $G=(U,V,E)$ to denote a bipartite graph whose partition has parts $U$ and $V$: it will always be implicitly assumed that the sets $U$ and $V$ are disjoint, and that each edge in $E$ has an endpoint in $U$ and its other endpoint in $V$. We will use $\Delta(G)$ to denote the largest distance between one vertex in $U$ and one vertex of $V$. We will use $\Ee(U)$ to denote the set of all pairs of elements in $U$ and $\Ee(V)$ to denote the set of all pairs of elements in $V$. 

All the recent lower bounds on the classical round complexity of computing or approximating the diameter rely on the notion of reduction introduced in the next definition. 

\begin{definition}\label{def:sim}
Let $b,k,d_1$ and $d_2$ be four functions from $\nat$ to $\nat$. A $(b,k,d_1,d_2)$-reduction from disjointness to diameter computation is a family of triples $\{(G_n,g_n,h_n)\}_{n\ge 1}$ such that, for each $n\ge 1$,  $G_n=(U_n,V_n,E_n)$ is a bipartite graph with $|U_n\cup V_n|=n$ and $|E_n|=b$, $g_n$ is a function from $\{0,1\}^k$ to $\Ee(U_n)$ and $h_n$ is a function from $\{0,1\}^k$ to $\Ee(V_n)$ satisfying the following two conditions for any $x,y\in\{0,1\}^k$:
\begin{itemize}
\item[(i)]
if $\DISJ{k}(x,y)=1$ then  $\Delta(G_n(x,y))\le d_1$,
\item[(ii)]
if $\DISJ{k}(x,y)=0$ then $\Delta(G_n(x,y))\ge d_2$,
\end{itemize}
where $G_n(x,y)$ denotes the bipartite graph obtained from $G_n$ by adding an edge $\{w,w'\}$ for each $\{w,w'\}\in g_n(x)\cup h_n(y)$.
\end{definition}

In particular we will use in this paper the following two constructions from prior works.
\begin{theorem}[\cite{Holzer+PODC12}]\label{th:construction1}
There exists a $(\Theta(n),\Theta(n^2),2,3)$-reduction from disjointness to diameter computation.
\end{theorem}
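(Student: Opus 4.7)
The plan is to exhibit an explicit family $\{(G_n,g_n,h_n)\}_{n\ge 1}$ and then verify the four parameters. Assuming for simplicity $n$ even, set $U_n=\{u_1,\dots,u_{n/2}\}$ and $V_n=\{v_1,\dots,v_{n/2}\}$, and let $E_n$ be the perfect matching $\{\{u_i,v_i\}:i\in[n/2]\}$. This immediately gives $|U_n\cup V_n|=n$ and $|E_n|=n/2=\Theta(n)$. Then take $k=\binom{n/2}{2}=\Theta(n^2)$ and index the coordinates of $\{0,1\}^k$ by unordered pairs $\{i,j\}\subseteq[n/2]$ with $i\ne j$. Define the encodings by a complementation trick:
\[
 g_n(x)=\bigl\{\{u_i,u_j\}:x_{\{i,j\}}=0\bigr\},\qquad h_n(y)=\bigl\{\{v_i,v_j\}:y_{\{i,j\}}=0\bigr\},
\]
so Alice installs the $U$-edges labelled by the $0$'s of her input and similarly for Bob.

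The heart of the argument is a short distance analysis in $G_n(x,y)$. By construction, $E_n$ contributes only matching edges, while $g_n(x)$ lies in $\binom{U_n}{2}$ and $h_n(y)$ in $\binom{V_n}{2}$; in particular the only $U_n$--$V_n$ edges of $G_n(x,y)$ are the matching edges. Consequently, for any $i\neq j$, any length-$2$ path from $u_i$ to $v_j$ must have its middle vertex either equal to $u_j$ (requiring $\{u_i,u_j\}\in g_n(x)$, i.e.\ $x_{\{i,j\}}=0$) or equal to $v_i$ (requiring $\{v_i,v_j\}\in h_n(y)$, i.e.\ $y_{\{i,j\}}=0$). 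Thus
\[
 d_{G_n(x,y)}(u_i,v_j)\le 2\iff x_{\{i,j\}}=0\ \text{ or }\ y_{\{i,j\}}=0,
\]
while $d_{G_n(x,y)}(u_i,v_i)=1$ always. Taking the maximum over pairs $(i,j)$, we get $\Delta(G_n(x,y))\le 2$ iff no coordinate satisfies $x_{\{i,j\}}=y_{\{i,j\}}=1$, i.e.\ iff $\DISJ{k}(x,y)=1$; otherwise $\Delta(G_n(x,y))\ge 3$. This yields conditions (i) and (ii) of Definition~\ref{def:sim} with $d_1=2$ and $d_2=3$.

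The only delicate point—and the one that requires care rather than cleverness—is verifying that no unintended shortcut sneaks in between the two sides of the partition. Since $g_n(x)$ and $h_n(y)$ contribute only within-part edges, the neighbors of $v_j$ in $G_n(x,y)$ are exactly $u_j$ together with some subset of $V_n\setminus\{v_j\}$, and an exhaustive case analysis on the middle vertex of a length-$2$ $u_i$--$v_j$ path collapses to the two cases listed above. Once this is in hand, the parameter count ($|E_n|=\Theta(n)$, $k=\Theta(n^2)$, $d_1=2$, $d_2=3$) is a routine check, and the reduction is complete.
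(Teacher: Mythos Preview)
Your construction is correct for Theorem~\ref{th:construction1} as stated: the perfect matching $E_n=\{\{u_i,v_i\}\}$ with the complementation encodings $g_n,h_n$ indeed satisfies Definition~\ref{def:sim} with parameters $(\Theta(n),\Theta(n^2),2,3)$, and your distance analysis on $\Delta$ is clean and complete.

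Your route is genuinely different from (and more economical than) the paper's. The paper splits each side into two blocks $L,L'$ (resp.\ $R,R'$), makes each block a clique, adds hub vertices $a,b$ adjacent to everything on their side, and encodes the input bits as cross-block edges $\{\ell_i,\ell'_j\}$ (resp.\ $\{r_i,r'_j\}$). You collapse all of this to a single block per side with a bare matching. What the extra machinery in the paper buys is that $G_n(x,y)$ is always connected and, more importantly, its \emph{actual diameter} coincides with $\Delta$: the hubs $a,b$ force every within-part distance to be at most~$2$. In your construction this fails---for instance, in the disjoint case one can have $d(v_i,v_j)=3$ (take $x_{\{i,j\}}=1$, $y_{\{i,j\}}=0$, and $y_{\{i,\ell\}}=y_{\{j,\ell\}}=1$ for all other $\ell$), and with $x=y=\mathbf{1}$ the graph is even disconnected. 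None of this violates Definition~\ref{def:sim}, which speaks only of $\Delta$; but the proof of Theorem~\ref{th:red1} silently uses that the diameter of $G_n(x,y)$ (not just $\Delta$) is $\le d_1$ or $\ge d_2$, so the paper's heavier construction is what makes the downstream lower bound (Theorem~\ref{th:LB1}) go through.
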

\begin{theorem}[\cite{Abboud+DISC16}]\label{th:construction2}
There exists a $(\Theta(\log n),\Theta(n),4,5)$-reduction from disjointness to diameter computation.
\end{theorem}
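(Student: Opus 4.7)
The plan is to realise the Abboud--Censor-Hillel--Khoury \emph{bit-gadget} reduction: I represent each of the $k=\Theta(n)$ disjointness indices by a single ``set node'' on each side (contributing nothing to $E_n$), and I route connectivity between set nodes through a shared ``bit layer'' of only $\Theta(\log n)$ nodes per side, which absorbs all of the $\Theta(\log n)$ base edges. Concretely, setting $\ell=\lceil\log_2 k\rceil$, I would place on the $U$-side set nodes $A=\{a_i:i\in[k]\}$, bit nodes $C=\{c_j^b:j\in[\ell],\,b\in\{0,1\}\}$, and a constant number of anchor nodes; the $V$-side carries a mirror structure $B$, $D$, and anchors. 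The base edge set $E_n$ is taken to be the diagonal matching $\{\{c_j^b,d_j^b\}:j,b\}$ together with a constant-per-position set of anchor-to-bit edges, giving $|E_n|=\Theta(\log n)$ and leaving every set node isolated in $G_n$.

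I would then define $g_n$ and $h_n$ to wire each set node $a_i$ (resp.\ $b_i$) into the bit layer according to the binary address of $i$, with the precise wiring rule depending on $x_i$ (resp.\ $y_i$). The essential choice is to use ``opposite'' wirings on the two sides, so that for distinct indices $i\ne i'$ there is always some bit position producing a length-$3$ cross-path $a_i - c - d - b_{i'}$, while for the diagonal pair $(a_i,b_i)$ the two neighbourhoods are everywhere mismatched under the base matching. Given this, I would verify the two distance bounds by case analysis. If $\DISJ{k}(x,y)=1$, then non-set-node pairs lie at distance $\le 3$ in the bit/anchor skeleton, non-diagonal set pairs use the bit-gadget shortcut, and diagonal pairs $(a_i,b_i)$ are saved by disjointness itself, which forces $x_i=0$ or $y_i=0$ and lets the ``bad'' endpoint route through an anchor in one extra hop, giving $\Delta(G_n(x,y))\le 4$. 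If instead $\DISJ{k}(x,y)=0$, I fix a witness $i^*$ with $x_{i^*}=y_{i^*}=1$ and show $d(a_{i^*},b_{i^*})\ge 5$: the neighbour sets of $a_{i^*}$ and $b_{i^*}$ lie in complementary parts of the bit layer that are never directly matched across $E_n$, and every length-$4$ candidate path is ruled out by a short enumeration of its two internal vertices (no $c$--$c$, $d$--$d$ or $b$--$b$ edge exists, and every anchor detour in turn consumes at least five hops).

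The main obstacle will be tightly calibrating the anchor substructure and the wiring rule so as to realise exactly the $4$-versus-$5$ gap. Anchors must simultaneously provide enough shortcuts that every non-witness pair collapses to distance $\le 4$ (the most delicate case being $(a_i,b_i)$ with $x_i=y_i=0$, which has no bit-gadget path and relies entirely on the anchor route), yet must not open any length-$4$ bypass around the witness. Too few anchor edges and the disjoint case fails on some ``doubly-inactive'' diagonal pair; too many and the witness pair itself collapses to distance $\le 4$. I expect the bulk of the work to consist of selecting the precise wiring rule --- plausibly, attaching $a_i$ to $c_j^{i_j}$ when $x_i=1$ and to $c_j^{\overline{i_j}}$ when $x_i=0$, mirrored with an extra complementation on the $V$-side --- together with the anchor adjacencies, and then running the enumeration of length-$4$ candidate paths to confirm that none survives the witness case while all non-witness pairs are covered.
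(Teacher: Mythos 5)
The paper does not actually prove this statement: Theorem~\ref{th:construction2} is imported as a black box from \cite{Abboud+DISC16}, and only Theorem~\ref{th:construction1} is proved here as an illustration of Definition~\ref{def:sim}. Your proposal correctly identifies the source construction --- the bit gadget of Abboud, Censor-Hillel and Khoury, with $k=\Theta(n)$ set nodes per side addressed through $\Theta(\log n)$ bit nodes, so that only the bit-layer matching crosses the $U_n$/$V_n$ cut and $|E_n|=\Theta(\log n)$ while the (possibly input-dependent) set-node wirings stay inside $\Ee(U_n)$ and $\Ee(V_n)$. That architecture is the right one, and it is compatible with the format of Definition~\ref{def:sim}.

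However, what you have written is a plan rather than a proof, and the gap sits exactly where the theorem is decided. The reduction stands or falls on three things you leave unspecified: the precise rule attaching $a_i$ and $b_i$ to the bit layer (you offer one only as ``plausibly''), the anchor adjacencies, and the exhaustive check that in the intersecting case no length-$4$ path connects $a_{i^*}$ to $b_{i^*}$ while in the disjoint case every cross pair --- including the doubly-inactive diagonal pairs $(a_i,b_i)$ with $x_i=y_i=0$, which you correctly single out as having no bit-gadget path at all --- is covered at distance $\le 4$. You yourself observe that too few anchor edges breaks condition (i) and too many breaks condition (ii); until a concrete choice is exhibited and the enumeration of two-internal-vertex paths is actually carried out, neither condition of Definition~\ref{def:sim} is established. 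A further unchecked point: making the bit wiring itself depend on $x_i$ (rather than keeping it fixed and adding input-dependent anchor edges, as in the standard presentation) is a nontrivial design variant, and it is not evident that the witness pair cannot then acquire a length-$4$ detour through a bit node shared with some $a_{i'}$, $i'\ne i^*$. To turn this into a proof you should either fully instantiate and verify your variant, or follow the fixed-wiring construction of \cite{Abboud+DISC16} and verify conditions (i) and (ii) for it explicitly.
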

We include a proof of Theorem \ref{th:construction1} as an example to illustrate the notations of Definition~\ref{def:sim}.
\begin{figure}
\centering
\begin{tikzpicture}[scale=0.3,rectnode/.style={shape=circle,draw=black,minimum height=10mm},roundnode/.style={circle, draw=black!60, fill=blue!5, very thick, minimum size=5mm}]
    \newcommand\XA{1.5}
    \newcommand\YA{2}
    
    \node[roundnode,draw=black] (ll2) at (-3*\XA,1*\YA) {\small $\ell'_1$};    
    \node[roundnode,draw=black] (ll1) at (-3*\XA,3*\YA) {\small $\ell'_0$}; 
    \node[roundnode,draw=black] (l2) at (-3*\XA,6*\YA) {$\ell_1$}; 
    \node[roundnode,draw=black] (l1) at (-3*\XA,8*\YA) {$\ell_0$}; 
    \node[roundnode,draw=black] (a) at (-9*\XA,11*\YA) {$a$}; 
    \node[roundnode,draw=black] (rr2) at (3*\XA,1*\YA) {\small $r'_1$};    
    \node[roundnode,draw=black] (rr1) at (3*\XA,3*\YA) {\small $r'_0$}; 
    \node[roundnode,draw=black] (r2) at (3*\XA,6*\YA) {$r_1$}; 
    \node[roundnode,draw=black] (r1) at (3*\XA,8*\YA) {$r_0$}; 
    \node[roundnode,draw=black] (b) at (9*\XA,11*\YA) {$b$}; 
    \path[] 
    (a) edge (b)
     (a) edge (l1)
     (a) edge (l2)
     (a) edge (ll1)
     (a) edge (ll2)
     (ll1) edge (ll2)
     (l1) edge (l2)
	 (b) edge (r1)
     (b) edge (r2)
     (b) edge (rr1)
     (b) edge (rr2)
     (rr1) edge (rr2)
     (r1) edge (r2)
     (l1) edge (r1)
     (l2) edge (r2)
     (ll1) edge (rr1)
     (ll2) edge (rr2);
    \draw[dashed] (-6.5,0) -- (-2.5,0) -- (-2.5,8) -- (-6.5,8) -- (-6.5,0);
    \draw[dashed] (-6.5,10) -- (-2.5,10) -- (-2.5,18) -- (-6.5,18) -- (-6.5,10);
    \draw[dashed] (6.5,0) -- (2.5,0) -- (2.5,8) -- (6.5,8) -- (6.5,0);
    \draw[dashed] (6.5,10) -- (2.5,10) -- (2.5,18) -- (6.5,18) -- (6.5,10);
    \node[draw=none,fill=none] at (-4.5,-1) {$L'$};
    \node[draw=none,fill=none] at (4.5,-1) {$R'$};
    \node[draw=none,fill=none] at (-4.5,19) {$L$};
    \node[draw=none,fill=none] at (4.5,19) {$R$};
   
\end{tikzpicture}
\caption{The graph $G_n$ used in the proof of Theorem \ref{th:construction1}, for $n=2$.}\label{fig:graph1}
\end{figure}
\begin{proof}[Proof of Theorem \ref{th:construction1}]
We define the graph $G_n=(U_n,V_n,E_n)$ as follows. Here we assume, for simplicity, that $n-2$ is a multiple of 4 (the construction can be easy modified, by adding dummy nodes, to deal with the case when $n$ is not of this form). We define $s=(n-2)/4$.  The construction is illustrated for $n=10$ in Figure \ref{fig:graph1}.

We have $U_n=L\cup L'\cup \{a\}$ and $V_n=R\cup R'\cup \{b\}$, where $L=\{\ell_0,\ldots,\ell_{s-1}\}$, $L'=\{\ell'_0,\ldots,\ell'_{s-1}\}$, $R=\{r_0,\ldots,r_{s-1}\}$ and $R'=\{r'_0,\ldots,r'_{s-1}\}$ are four sets each containing $s$ nodes, and $a,b$ are two additional nodes. The set $E_n$ is defined as follows: we put edges to make each of $L$, $L'$, $R$ and $R'$ an $s$-clique. Moreover, for each $i\in\{0,\ldots,s-1\}$, we add an edge from $\ell_i$ to $r_i$ and an edge from $\ell'_i$ to $r'_i$. Finally, node $a$ is connected by an edge to each node in $L\cup L'$, while node $b$ is connected by an edge to each node in $R\cup R'$. Nodes $a$ and $b$ are also connected by an edge. Note that $|E_n|=2s+1=\Theta(n)$.

Now we define the functions $g_n$ and $h_n$. Given any binary string $x\in \{0,1\}^{s\times s}$, we define $g_n(x)$ as the set of all pairs $\{\ell_i,\ell'_j\}$ with $i,j\in\{0,\ldots,s-1\}$ such that $x_{i,j}=0$. Similarly, given any binary string $y\in \{0,1\}^{s\times s}$, we define $h_n(y)$ as the set of all pairs $\{r_i,r'_j\}$ with $i,j\in\{0,\ldots,s-1\}$ such that $y_{i,j}=0$. This means that the graph $G_n(x,y)$ is obtained from $G_n$ by putting an additional edge between node $\ell_i$ and node $\ell'_{j}$ if and only if $x_{i,j}=0$, and similarly by putting an additional edge between node $r_i$ and node $r'_{j}$ if and only if $y_{i,j}=0$. 

It is easy to check that conditions (i) and (ii) of Definition \ref{def:sim} hold with $k=s^2=\Theta(n^2)$, $d_1=2$ and $d_2=3$, by observing that the distance in $G_n(x,y)$ between $\ell_i$ and $r'_j$ (and, symmetrically, the distance between $\ell'_j$ and $r_i$ as well) is~3 if $x_{i,j}=y_{i,j}=1$, and $2$ otherwise.
\end{proof}
An easy argument, used in all these prior works mentioned above, shows that the existence of a $(b,k,d_1,d_2)$-reduction from disjointness to diameter computation implies a $\tilde \Omega(k/b)$ lower bound on the round complexity of any classical distributed algorithm that decides whether the diameter of the network is at most~$d_1$ or at least~$d_2$. The main technical contribution of this section is the following quantum version of this argument, which is proved using the recent result from \cite{BGK+15} on the message-bounded quantum communication complexity of disjointness presented in Section \ref{sec:prelim-qcc}.
\begin{theorem}\label{th:red1}
Assume that there exists a $(b,k,d_1,d_2)$-reduction from disjointness to diameter computation. Then any quantum distributed algorithm that decides, with high probability, whether the diameter of the network is at most~$d_1$ or at least $d_2$ requires $\tilde \Omega(\sqrt{k/b})$ rounds.
\end{theorem}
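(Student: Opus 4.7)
The plan is the standard reduction of distributed computation to two-party communication, adapted to the quantum setting. Suppose for contradiction that $\mathcal{A}$ is an $r$-round quantum distributed algorithm deciding whether the diameter is at most $d_1$ or at least $d_2$. Given the promised $(b,k,d_1,d_2)$-reduction $\{(G_n,g_n,h_n)\}$, I will turn $\mathcal{A}$ into a two-party quantum protocol $\Pi$ for $\DISJ{k}$. Alice, on input $x\in\{0,1\}^k$, will simulate the nodes of $U_n$ together with all edges of $E_n$ incident only to $U_n$ and the edges $g_n(x)$; symmetrically, Bob, on input $y$, will simulate the nodes of $V_n$ together with the edges internal to $V_n$ and the edges $h_n(y)$. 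Since $E_n\subseteq U_n\times V_n$, the edges crossing the Alice/Bob cut are exactly those of $E_n$, of which there are $b$.

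Each player holds the quantum registers of his or her nodes; since the model assumes no initial entanglement, the initial state is a product state. To simulate one round of $\mathcal{A}$ on the joint state, Alice applies the local unitaries prescribed by $\mathcal{A}$ on her nodes (she knows the local topology of each $u\in U_n$ since she is given $g_n(x)$ and the fixed part of $G_n$), and similarly for Bob; to realize the quantum channels, for each of the $b$ crossing edges she sends to Bob the $O(\log n)$ qubits corresponding to the message traveling along that edge from a $U_n$-endpoint to a $V_n$-endpoint, and Bob does the same for the reverse direction. One round of $\mathcal{A}$ is thus implemented by one message from Alice to Bob and one message from Bob to Alice, each consisting of $O(b\log n)$ qubits. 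At the end of $r$ rounds, whichever node of $\mathcal{A}$ outputs the decision bit is simulated by the player who owns it and the result is communicated with $O(1)$ additional qubits. By conditions (i) and (ii) of Definition~\ref{def:sim}, the output of $\Pi$ equals $\DISJ{k}(x,y)$ with the same success probability as $\mathcal{A}$.

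The protocol $\Pi$ uses $2r+O(1)$ messages and a total of $O(rb\log n)$ qubits of communication. Applying Theorem~\ref{th:BGK+15} to $\Pi$ (with the number of messages being $O(r)$) yields
\[
rb\cdot \polylog(n)\;=\;\Omega(rb\log n)\;\geq\;\tilde{\Omega}\!\left(\frac{k}{r}\right),
\]
so that $r^2b = \tilde{\Omega}(k)$ and hence $r=\tilde{\Omega}(\sqrt{k/b})$, as claimed.

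The only delicate point is the quantum simulation step: we must make sure that, despite entanglement being generated across the cut during the execution of $\mathcal{A}$, Alice's view of the global state at every round can be summarized by her local qubits together with whatever is in transit. This is where the absence of initial entanglement, together with the fact that qubits travel only along edges of the network (so every cross-cut qubit of $\mathcal{A}$ corresponds to a cross-cut message of $\Pi$), is crucial, and is the main conceptual obstacle to verify carefully; the rest of the argument is a bookkeeping of messages and qubits feeding into Theorem~\ref{th:BGK+15}.
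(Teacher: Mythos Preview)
Your proposal is correct and follows essentially the same approach as the paper: Alice and Bob simulate $\mathcal{A}$ on $G_n(x,y)$ by each holding one side of the bipartition and exchanging the $O(b\log n)$ qubits crossing the cut per round, yielding a $2r$-message, $O(rb\log n)$-qubit protocol for $\DISJ{k}$, and then Theorem~\ref{th:BGK+15} gives $rb=\tilde\Omega(k/r)$, hence $r=\tilde\Omega(\sqrt{k/b})$. Your closing remark about entanglement across the cut is a reasonable sanity check but, as you note, poses no real difficulty since the simulation simply transports the message qubits physically; the paper does not dwell on this point either.
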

\begin{proof}
Assume that there exists an $r$-round quantum algorithm $\Aa$ that decides if the diameter of a network is at most~$d_1$ or at least $d_2$. We will use this algorithm to construct a protocol that computes the disjointness function $\DISJ{k}$ in the two-party quantum communication complexity setting. Let $x\in \{0,1\}^{k}$ and $y\in \{0,1\}^{k}$ be Alice's input and Bob's input, respectively. 

The protocol works as follows. First note that Alice and Bob can jointly construct the graph $G_n(x,y)$: Alice can construct the subgraph induced by $U_n$ (which depends on $x$), while Bob can construct the subgraph induced by $V_n$ (which depends on $y$). Alice and Bob will thus simply simulate the computation of $\Aa$ on $G_n(x,y)$. To do this, they only need to exchange messages corresponding to the communication occurring along the $b$ edges of $G_n(x,y)$ at each round of Algorithm $\Aa$. This can be done by replacing one round of communication in $\Aa$ by two messages of $O(b\log n)$ qubits (one from Alice to Bob, and the other from Bob to Alice). At the end of simulation Alice and Bob obtain the result of the computation by $\Aa$ on $G_n(x,y)$. From our assumption on $\Aa$, this enables them to decide if the diameter of $G_n(x,y)$ is at most~$d_1$ or at least $d_2$, and thus to compute $\DISJ{k}(x,y)$. The whole simulation uses $2r$ messages and a total number of $O(rb\log n)$ qubits of communication. From Theorem~\ref{th:BGK+15}, we conclude that
$
rb=\tilde \Omega\left(\frac{k}{r}+r\right),
$
which implies $r=\tilde \Omega(\sqrt{k/b})$. 
\end{proof}
Theorem \ref{th:LB1} directly follows by combining Theorem \ref{th:construction1} and Theorem \ref{th:red1}.

%%%%%%%%%%%%%%%%%%%%%%%%%%%
\section{Lower Bound: Case of Large Diameter}\label{sec:LB2}
%%%%%%%%
In this section we  prove Theorem \ref{th:LB2}. We first prove in Section \ref{sec:sim} a general simulation result that will be crucial for our analysis, and then give the proof of Theorem~\ref{th:LB2} in Section \ref{sec:proofLB2}.

\subsection{Simulation argument}\label{sec:sim}
For any $d\ge 1$, let us introduce the following network $\Gd$. The network consists of two nodes $A$ and~$B$ connected by a path of length $d+1$, corresponding to $d$ intermediate nodes $P_1,\ldots,P_{d}$. The total number of nodes is thus $d+2$, and the number of edges is $d+1$. Each edge of $\Gd$ is a quantum channel of bandwidth $\band$ qubits. The construction is illustrated in Figure \ref{fig:Gd}.

\begin{figure}[ht]
\vspace{3mm}
\centering
\begin{tikzpicture}[scale=0.4,rectnode/.style={shape=circle,draw=black,minimum height=25mm},roundnode/.style={circle, draw=black!60, fill=blue!5, very thick, minimum size=6mm}]
    \newcommand\XA{3}
    \newcommand\YA{2}
    
    \node[roundnode,draw=black,minimum height=12mm] (a) at (-4.5*\XA,0) {$A$};    
    \node[roundnode,draw=black,minimum height=12mm] (P1) at (-3*\XA,0) {$P_1$}; 
    \node[roundnode,draw=black,minimum height=12mm] (P2) at (-1.75*\XA,0) {$P_2$}; 
    \node[roundnode,draw=black,minimum height=12mm] (Pd) at (1.75*\XA,0) {\small $P_{d-1}$}; 
    \node[roundnode,draw=black,minimum height=12mm] (Pdd) at (3*\XA,0) {$P_d$}; 
    \node[roundnode,draw=black,minimum height=12mm] (b) at (4.5*\XA,0) {$B$};    

    \path[] 
     (a) edge (P1)
     (P1) edge (P2)
     (Pd) edge (Pdd)
     (Pdd) edge (b);
    \draw[very thick,decorate,rotate=180,decoration={brace,amplitude=1mm}] (-3.5*\XA,2.0) -- (3.5*\XA,2.0);
    \node[draw=none,fill=none,] at (0,-3.2) {$d$ nodes};
    \node[draw=none,fill=none,] at (0,0) {......................};
\end{tikzpicture}
\caption{The graph $\Gd$.}\label{fig:Gd}
\end{figure}

Let $X$, $Y$ be two finite sets, and let $f$ be a function from $X\times Y$ to $\{0,1\}$. Consider the following computation problem over the network $\Gd$: node $A$ receives an input $x\in X$ and node~$B$ receives an input $y\in Y$. The goal is for Alice and Bob to compute the value $f(x,y)$. Note that the intermediate nodes $P_1,\ldots,P_d$ do not receive any input. %We further assume that each intermediate node cannot use more than $s$ qubits of memory. 
The following theorem relates the round complexity of this problem to the two-party communication complexity of $f$. 
\begin{figure}[ht!]
\centering
\begin{tikzpicture}[scale=0.5,rectnode/.style={shape=rectangle,draw=black,minimum height=10mm},rectnode2/.style={shape=rectangle,draw=black,minimum height=8mm},roundnode/.style={circle, draw=green!60, fill=green!5, very thick, minimum size=7mm}]
    \newcommand\XA{3}
    \newcommand\YA{4}
        
%    \node[rectnode] (a1) at (0*\XA,0*\YA) {};
    \node[rectnode] (a2) at (0*\XA,1*\YA) {};
    \node[rectnode] (a3) at (0*\XA,2*\YA) {};
    \node[rectnode] (a4) at (0*\XA,3*\YA) {};
%    \node[rectnode] (a5) at (0*\XA,4*\YA) {};
    
    \node[rectnode] (b1) at (1*\XA,0*\YA) {};
    \node[rectnode] (b2) at (1*\XA,1*\YA) {};
    \node[rectnode] (b3) at (1*\XA,2*\YA) {};
%    \node[rectnode] (b4) at (1*\XA,3*\YA) {};
%    \node[rectnode] (b5) at (1*\XA,4*\YA) {};
    
%    \node[rectnode] (c1) at (2*\XA,0*\YA) {};
    \node[rectnode] (c2) at (2*\XA,1*\YA) {};
    \node[rectnode] (c3) at (2*\XA,2*\YA) {};
    \node[rectnode] (c4) at (2*\XA,3*\YA) {};
%    \node[rectnode] (c5) at (2*\XA,4*\YA) {};
    
    \node[rectnode] (d1) at (3*\XA,0*\YA) {};
    \node[rectnode] (d2) at (3*\XA,1*\YA) {};
    \node[rectnode] (d3) at (3*\XA,2*\YA) {};
%    \node[rectnode] (d4) at (3*\XA,3*\YA) {};
%    \node[rectnode] (d5) at (3*\XA,4*\YA) {};
    
%    \node[rectnode] (e1) at (4*\XA,0*\YA) {};
    \node[rectnode] (e2) at (4*\XA,1*\YA) {};
    \node[rectnode] (e3) at (4*\XA,2*\YA) {};
    \node[rectnode] (e4) at (4*\XA,3*\YA) {};
%    \node[rectnode] (e5) at (4*\XA,4*\YA) {};
    
    \node[rectnode] (f1) at (5*\XA,0*\YA) {};
    \node[rectnode] (f2) at (5*\XA,1*\YA) {};
    \node[rectnode] (f3) at (5*\XA,2*\YA) {};
%    \node[rectnode] (f4) at (5*\XA,3*\YA) {};
%    \node[rectnode] (f5) at (5*\XA,4*\YA) {};
    
%    \node[rectnode] (g1) at (6*\XA,0*\YA) {};
    \node[rectnode] (g2) at (6*\XA,1*\YA) {};
    \node[rectnode] (g3) at (6*\XA,2*\YA) {};
    \node[rectnode] (g4) at (6*\XA,3*\YA) {};
%    \node[rectnode] (g5) at (6*\XA,4*\YA) {};
    
    \node[rectnode] (h1) at (7*\XA,0*\YA) {};
    \node[rectnode] (h2) at (7*\XA,1*\YA) {};
    \node[rectnode] (h3) at (7*\XA,2*\YA) {};
%    \node[rectnode] (h4) at (7*\XA,3*\YA) {};
%    \node[rectnode] (h5) at (7*\XA,4*\YA) {};
    
    \node[rectnode2] (i1) at (8*\XA,0*\YA) {};
    \node[rectnode2] (i2) at (8*\XA,1*\YA) {};
    \node[rectnode2] (i3) at (8*\XA,2*\YA) {};
    \node[rectnode2] (i4) at (8*\XA,3*\YA) {};
%    \node[rectnode] (i5) at (7*\XA,4*\YA) {};
    
%    \draw[->] (a1.east) .. controls +(down:7mm) and +(right:7mm) .. (b1.west);
     \path[]
          
%     (a1) edge (b1)
     (a2) edge (b2)
     (a3) edge[blue] (b3)
     (a4) edge (c4)
%     (a1) edge (b2)
%     (a2) edge[blue] (b3)
%     (a3) edge (b4)
     (a2) edge (b1)
     (a3) edge (b2)
     (a4) edge[red] (b3)
     
     (b1) edge (d1)
     (b2) edge (c2)
     (b3) edge[blue] (c3)
%     (b4) edge (c4)
     (b1) edge (c2)
     (b2) edge (c3)
     (b3) edge[red] (c4)
%     (b2) edge (c1)
%     (b3) edge[blue] (c2)
%     (b4) edge (c3)
     
%     (c1) edge (d1)
     (c2) edge (d2)
     (c3) edge[blue] (d3)
     (c4) edge (e4)
%     (c1) edge (d2)
%     (c2) edge[blue] (d3)
%     (c3) edge (d4)
     (c2) edge (d1)
     (c3) edge (d2)
     (c4) edge[red] (d3)
     
     (d1) edge (f1)
     (d2) edge (e2)
     (d3) edge[blue] (e3)
%     (d4) edge (f4)
     (d1) edge (e2)
     (d2) edge (e3)
     (d3) edge[red] (e4)
%     (d2) edge (e1)
%     (d3) edge[blue] (e2)
%     (d4) edge (e3)
     
%     (e1) edge (f1)
     (e2) edge (f2)
     (e3) edge[blue] (f3)
     (e4) edge (g4)
%     (e1) edge (f2)
%     (e2) edge[blue] (f3)
%     (e3) edge (f4)
     (e2) edge (f1)
     (e3) edge (f2)
     (e4) edge[red] (f3)
     
     (f1) edge (h1)
     (f2) edge (g2)
     (f3) edge[blue] (g3)
%     (f4) edge (h4)
     (f1) edge (g2)
     (f2) edge (g3)
     (f3) edge[red] (g4)
%     (f2) edge (g1)
%     (f3) edge[blue] (g2)
%     (f4) edge (g3)
     
%     (g1) edge (h1)
     (g2) edge (h2)
     (g3) edge[blue] (h3)
     (g4) edge (i4)
%     (g1) edge (h2)
%     (g2) edge[blue] (h3)
%     (g3) edge (h4)
     (g2) edge (h1)
     (g3) edge (h2)
     (g4) edge[red] (h3)
     
     (h1) edge (i1)
     (h2) edge (i2)
     (h3) edge[blue] (i3)
%    (h4) edge (i4)
     (h1) edge (i2)
     (h2) edge (i3)
     (h3) edge[red] (i4);
%     (h2) edge (i1)
%     (h3) edge[blue] (i2)
%     (h4) edge (i3);
     
     \draw (-1,0*\YA+0) -- (-.25+1*\XA,0*\YA+0);

     \draw (-1,1*\YA+0) -- (-.25,1*\YA+0);
     \draw (-1,1*\YA-0.35) -- (-.25,1*\YA-0.35);

     \draw[blue] (-1,2*\YA+0) -- (-.25,2*\YA+0);
     \draw (-1,2*\YA-0.35) -- (-.25,2*\YA-0.35);
     
     \draw (-1,3*\YA+0) -- (-.25,3*\YA+0);
     \draw[red] (-1,3*\YA-0.35) -- (-.25,3*\YA-0.35);
     
     \draw (8*\XA+0.25,0*\YA+0) -- (8*\XA+1,0*\YA+0);

     \draw (8*\XA+0.25,1*\YA+0) -- (8*\XA+1,1*\YA+0);
     \draw (8*\XA+0.25,1*\YA-0.35) -- (8*\XA+1,1*\YA-0.35);

     \draw[blue] (8*\XA+0.25,2*\YA+0) -- (8*\XA+1,2*\YA+0);
     \draw (8*\XA+0.25,2*\YA-0.35) -- (8*\XA+1,2*\YA-0.35);
     
     \draw (8*\XA+0.25,3*\YA+0) -- (8*\XA+1,3*\YA+0);
     \draw[red] (8*\XA+0.25,3*\YA-0.35) -- (8*\XA+1,3*\YA-0.35);
     
%     \draw[dashed,ultra thick] (-1,2+2*\YA) -- (0,2+2*\YA) -- (2*\XA,2) -- (4*\XA,2+2*\YA) -- (6*\XA,2) -- (8*\XA,2+2*\YA) -- (1+8*\XA,2+2*\YA);
%     \draw[dashed,thick,draw=blue] (-0.5,2+2*\YA-0.4) -- (-0.5,-1.3) -- (2.2*\XA,-1.3) -- (2.2*\XA,0.7) -- (0,2+2*\YA-0.4) -- (-0.5,2+2*\YA-0.4);
%     \draw[dashed,thick,draw=blue] (4*\XA,2+2*\YA-0.4) -- (2.8*\XA,4.8) -- (2.8*\XA,-1.3) -- (6.3*\XA,-1.3) -- (6.3*\XA,0.5) -- (4*\XA,2+2*\YA-0.4);
%     \draw[dashed,thick,draw=blue] (8*\XA,2+2*\YA-0.4) -- (6.8*\XA,4.8) -- (6.8*\XA,-1.3) -- (8.2*\XA,-1.3) -- (8.2*\XA,2+2*\YA-0.4) -- (8*\XA,2+2*\YA-0.4);
%     \draw[dashed,thick,draw=red] (-0.5,2+2*\YA+1.2) -- (-0.5,2+2*\YA+3.2) -- (4.2*\XA,2+2*\YA+3.2) -- (4.2*\XA,11.3) -- (2*\XA,2.5) -- (-0.5,2+2*\YA+1.2);
%     \draw[dashed,thick,draw=red] (4.8*\XA,2+2*\YA+3.2) -- (4.8*\XA,7.3) -- (6*\XA,2.5) -- (8.2*\XA,11.2) --(8.2*\XA,2+2*\YA+3.2) -- (4.8*\XA,2+2*\YA+3.2);
     
     \node[draw=none,fill=none] at (1.1,3*\YA+2.3) {$t\!=\!1$};
     \node[draw=none,fill=none] at (1.1+\XA,3*\YA+2.3) {$t\!=\!2$};
     \node[draw=none,fill=none] at (1.1+2*\XA,3*\YA+2.3) {$t\!=\!3$};
     \node[draw=none,fill=none] at (1.1+3*\XA,3*\YA+2.3) {$t\!=\!4$};
     \node[draw=none,fill=none] at (1.1+4*\XA,3*\YA+2.3) {$t\!=\!5$};
     \node[draw=none,fill=none] at (1.1+5*\XA,3*\YA+2.3) {$t\!=\!6$};
     \node[draw=none,fill=none] at (1.1+6*\XA,3*\YA+2.3) {$t\!=\!7$};
     \node[draw=none,fill=none] at (1.1+7*\XA,3*\YA+2.3) {$t\!=\!8$};
 %    \node[draw=none,fill=none] at (0.9+8*\XA,3*\YA+2.3) {$t\!=\!9$};
     \draw[<->] (-0.4,3*\YA+1.7) -- (2.2,3*\YA+1.7);
     \draw[<->] (-0.4+\XA,3*\YA+1.7) -- (2.2+\XA,3*\YA+1.7);
     \draw[<->] (-0.4+2*\XA,3*\YA+1.7) -- (2.2+2*\XA,3*\YA+1.7);
     \draw[<->] (-0.4+3*\XA,3*\YA+1.7) -- (2.2+3*\XA,3*\YA+1.7);
     \draw[<->] (-0.4+4*\XA,3*\YA+1.7) -- (2.2+4*\XA,3*\YA+1.7);
     \draw[<->] (-0.4+5*\XA,3*\YA+1.7) -- (2.2+5*\XA,3*\YA+1.7);
     \draw[<->] (-0.4+6*\XA,3*\YA+1.7) -- (2.2+6*\XA,3*\YA+1.7);
     \draw[<->] (-0.4+7*\XA,3*\YA+1.7) -- (2.2+7*\XA,3*\YA+1.7);
 %    \draw[<->] (-0.4+8*\XA,3*\YA+1.7) -- (1.9+8*\XA,3*\YA+1.7);
     
     \draw[very thick,decorate,decoration={brace,amplitude=1mm}] (-2,-1) -- (-2,1);
     \draw[very thick,decorate,decoration={brace,amplitude=1mm}] (-2,-1+\YA) -- (-2,1+\YA);
     \draw[very thick,decorate,decoration={brace,amplitude=1mm}] (-2,-1+2*\YA) -- (-2,1+2*\YA);
     \draw[very thick,decorate,decoration={brace,amplitude=1mm}] (-2,-1+3*\YA) -- (-2,1+3*\YA);
     \node[draw=none,fill=none] at (-3.8,0) {$B=P_3$};
     \node[draw=none,fill=none] at (-3,\YA) {$P_2$};
     \node[draw=none,fill=none] at (-3,2*\YA) {$P_1$};
     \node[draw=none,fill=none] at (-3.8,3*\YA) {$A=P_0$};
     
     \node[draw=none,fill=none] at (-1.5,-0.1) {\tiny $\regR_3$};
     
     \node[draw=none,fill=none] at (-1.5,-0.1+\YA) {\tiny $\regR_2$};
     \node[draw=none,fill=none] at (-1.5,-0.5+\YA) {\tiny $\regT_2$};
     
     \node[draw=none,fill=none] at (-1.5,-0.1+2*\YA) {\tiny $\regR_1$};
     \node[draw=none,fill=none] at (-1.5,-0.5+2*\YA) {\tiny $\regT_1$};
     
     \node[draw=none,fill=none] at (-1.5,-0.1+3*\YA) {\tiny $\regR_0$};
     \node[draw=none,fill=none] at (-1.5,-0.5+3*\YA) {\tiny $\regT_0$};
     
     \node[draw=none,fill=none] at (25.5,-0.1) {\tiny $\regR_3$};
     
     \node[draw=none,fill=none] at (25.5,-0.1+\YA) {\tiny $\regR_2$};
     \node[draw=none,fill=none] at (25.5,-0.5+\YA) {\tiny $\regT_2$};
     
     \node[draw=none,fill=none] at (25.5,-0.1+2*\YA) {\tiny $\regR_1$};
     \node[draw=none,fill=none] at (25.5,-0.5+2*\YA) {\tiny $\regT_1$};
     
     \node[draw=none,fill=none] at (25.5,-0.1+3*\YA) {\tiny $\regR_0$};
     \node[draw=none,fill=none] at (25.5,-0.5+3*\YA) {\tiny $\regT_0$};
\end{tikzpicture}
\caption{Representation of an $8$-round algorithm for $d=2$. Each thin line represents a quantum register. Each rectangle represents a unitary operator. The red line represents the message register $\regT_0$ (which is exchanged between $P_0$ and $P_1$ during the execution of the algorithm) and the green line represents the private register $R_1$ (which stays at node $P_1$ during the execution of the algorithm).}\label{fig:protocol}
\end{figure}

\begin{figure}[ht!]
\centering
\begin{tikzpicture}[scale=0.5,rectnode/.style={shape=rectangle,draw=black,minimum height=10mm},rectnode2/.style={shape=rectangle,draw=black,minimum height=8mm},roundnode/.style={circle, draw=green!60, fill=green!5, very thick, minimum size=7mm}]
    \newcommand\XA{3}
    \newcommand\YA{4}
        
    %    \node[rectnode] (a1) at (0*\XA,0*\YA) {};
    \node[rectnode] (a2) at (0*\XA,1*\YA) {};
    \node[rectnode] (a3) at (0*\XA,2*\YA) {};
    \node[rectnode] (a4) at (0*\XA,3*\YA) {};
%    \node[rectnode] (a5) at (0*\XA,4*\YA) {};
    
    \node[rectnode] (b1) at (1*\XA,0*\YA) {};
    \node[rectnode] (b2) at (1*\XA,1*\YA) {};
    \node[rectnode] (b3) at (1*\XA,2*\YA) {};
%    \node[rectnode] (b4) at (1*\XA,3*\YA) {};
%    \node[rectnode] (b5) at (1*\XA,4*\YA) {};
    
%    \node[rectnode] (c1) at (2*\XA,0*\YA) {};
    \node[rectnode] (c2) at (2*\XA,1*\YA) {};
    \node[rectnode] (c3) at (2*\XA,2*\YA) {};
    \node[rectnode] (c4) at (2*\XA,3*\YA) {};
%    \node[rectnode] (c5) at (2*\XA,4*\YA) {};
    
    \node[rectnode] (d1) at (3*\XA,0*\YA) {};
    \node[rectnode] (d2) at (3*\XA,1*\YA) {};
    \node[rectnode] (d3) at (3*\XA,2*\YA) {};
%    \node[rectnode] (d4) at (3*\XA,3*\YA) {};
%    \node[rectnode] (d5) at (3*\XA,4*\YA) {};
    
%    \node[rectnode] (e1) at (4*\XA,0*\YA) {};
    \node[rectnode] (e2) at (4*\XA,1*\YA) {};
    \node[rectnode] (e3) at (4*\XA,2*\YA) {};
    \node[rectnode] (e4) at (4*\XA,3*\YA) {};
%    \node[rectnode] (e5) at (4*\XA,4*\YA) {};
    
    \node[rectnode] (f1) at (5*\XA,0*\YA) {};
    \node[rectnode] (f2) at (5*\XA,1*\YA) {};
    \node[rectnode] (f3) at (5*\XA,2*\YA) {};
%    \node[rectnode] (f4) at (5*\XA,3*\YA) {};
%    \node[rectnode] (f5) at (5*\XA,4*\YA) {};
    
%    \node[rectnode] (g1) at (6*\XA,0*\YA) {};
    \node[rectnode] (g2) at (6*\XA,1*\YA) {};
    \node[rectnode] (g3) at (6*\XA,2*\YA) {};
    \node[rectnode] (g4) at (6*\XA,3*\YA) {};
%    \node[rectnode] (g5) at (6*\XA,4*\YA) {};
    
    \node[rectnode] (h1) at (7*\XA,0*\YA) {};
    \node[rectnode] (h2) at (7*\XA,1*\YA) {};
    \node[rectnode] (h3) at (7*\XA,2*\YA) {};
%    \node[rectnode] (h4) at (7*\XA,3*\YA) {};
%    \node[rectnode] (h5) at (7*\XA,4*\YA) {};
    
    \node[rectnode2] (i1) at (8*\XA,0*\YA) {};
    \node[rectnode2] (i2) at (8*\XA,1*\YA) {};
    \node[rectnode2] (i3) at (8*\XA,2*\YA) {};
    \node[rectnode2] (i4) at (8*\XA,3*\YA) {};
%    \node[rectnode] (i5) at (7*\XA,4*\YA) {};
    
%    \draw[->] (a1.east) .. controls +(down:7mm) and +(right:7mm) .. (b1.west);
     \path[]
          
%     (a1) edge (b1)
     (a2) edge (b2)
     (a3) edge[blue,thick] (b3)
     (a4) edge (c4)
%     (a1) edge (b2)
%     (a2) edge[blue] (b3)
%     (a3) edge (b4)
     (a2) edge (b1)
     (a3) edge (b2)
     (a4) edge (b3)
     
     (b1) edge (d1)
     (b2) edge[blue,thick] (c2)
     (b3) edge (c3)
%     (b4) edge (c4)
     (b1) edge[blue,thick] (c2)
     (b2) edge[blue,thick] (c3)
     (b3) edge (c4)
%     (b2) edge (c1)
%     (b3) edge[blue] (c2)
%     (b4) edge (c3)
     
%     (c1) edge (d1)
     (c2) edge[red,thick] (d2)
     (c3) edge (d3)
     (c4) edge (e4)
%     (c1) edge (d2)
%     (c2) edge[blue] (d3)
%     (c3) edge (d4)
     (c2) edge[red,thick] (d1)
     (c3) edge[red,thick] (d2)
     (c4) edge (d3)
     
     (d1) edge (f1)
     (d2) edge (e2)
     (d3) edge[red,thick] (e3)
%     (d4) edge (f4)
     (d1) edge (e2)
     (d2) edge (e3)
     (d3) edge (e4)
%     (d2) edge (e1)
%     (d3) edge[blue] (e2)
%     (d4) edge (e3)
     
%     (e1) edge (f1)
     (e2) edge (f2)
     (e3) edge[blue,thick] (f3)
     (e4) edge (g4)
%     (e1) edge (f2)
%     (e2) edge[blue] (f3)
%     (e3) edge (f4)
     (e2) edge (f1)
     (e3) edge (f2)
     (e4) edge (f3)
     
     (f1) edge (h1)
     (f2) edge[blue,thick] (g2)
     (f3) edge (g3)
%     (f4) edge (h4)
     (f1) edge[blue,thick] (g2)
     (f2) edge[blue,thick] (g3)
     (f3) edge (g4)
%     (f2) edge (g1)
%     (f3) edge[blue] (g2)
%     (f4) edge (g3)
     
%     (g1) edge (h1)
     (g2) edge (h2)
     (g3) edge (h3)
     (g4) edge (i4)
%     (g1) edge (h2)
%     (g2) edge[blue] (h3)
%     (g3) edge (h4)
     (g2) edge (h1)
     (g3) edge (h2)
     (g4) edge (h3)
     
    (h1) edge (i1)
     (h2) edge (i2)
     (h3) edge (i3)
%     (h4) edge (i4)
     (h1) edge (i2)
     (h2) edge (i3)
     (h3) edge (i4);
%     (h2) edge (i1)
%     (h3) edge[blue] (i2)
%     (h4) edge (i3);
     
     \draw (-1,0*\YA+0) -- (-.25+1*\XA,0*\YA+0);

     \draw (-1,1*\YA+0) -- (-.25,1*\YA+0);
     \draw (-1,1*\YA-0.35) -- (-.25,1*\YA-0.35);

     \draw (-1,2*\YA+0) -- (-.25,2*\YA+0);
     \draw (-1,2*\YA-0.35) -- (-.25,2*\YA-0.35);
     
     \draw (-1,3*\YA+0) -- (-.25,3*\YA+0);
     \draw (-1,3*\YA-0.35) -- (-.25,3*\YA-0.35);
     
     \draw (8*\XA+0.25,0*\YA+0) -- (8*\XA+1,0*\YA+0);

     \draw (8*\XA+0.25,1*\YA+0) -- (8*\XA+1,1*\YA+0);
     \draw (8*\XA+0.25,1*\YA-0.35) -- (8*\XA+1,1*\YA-0.35);

     \draw (8*\XA+0.25,2*\YA+0) -- (8*\XA+1,2*\YA+0);
     \draw (8*\XA+0.25,2*\YA-0.35) -- (8*\XA+1,2*\YA-0.35);
     
     \draw (8*\XA+0.25,3*\YA+0) -- (8*\XA+1,3*\YA+0);
     \draw (8*\XA+0.25,3*\YA-0.35) -- (8*\XA+1,3*\YA-0.35);
     
%     \draw[dashed,ultra thick] (-1,2+2*\YA) -- (0,2+2*\YA) -- (2*\XA,2) -- (4*\XA,2+2*\YA) -- (6*\XA,2) -- (8*\XA,2+2*\YA) -- (1+8*\XA,2+2*\YA);
%     \draw[dashed,thick,draw=blue] (-0.5,2+2*\YA-0.4) -- (-0.5,-1.3) -- (2.2*\XA,-1.3) -- (2.2*\XA,0.7) -- (0,2+2*\YA-0.4) -- (-0.5,2+2*\YA-0.4);
%     \draw[dashed,thick,draw=blue] (4*\XA,2+2*\YA-0.4) -- (2.8*\XA,4.8) -- (2.8*\XA,-1.3) -- (6.3*\XA,-1.3) -- (6.3*\XA,0.5) -- (4*\XA,2+2*\YA-0.4);
%     \draw[dashed,thick,draw=blue] (8*\XA,2+2*\YA-0.4) -- (6.8*\XA,4.8) -- (6.8*\XA,-1.3) -- (8.2*\XA,-1.3) -- (8.2*\XA,2+2*\YA-0.4) -- (8*\XA,2+2*\YA-0.4);
%     \draw[dashed,thick,draw=red] (-0.5,2+2*\YA+1.2) -- (-0.5,2+2*\YA+3.2) -- (4.2*\XA,2+2*\YA+3.2) -- (4.2*\XA,11.3) -- (2*\XA,2.5) -- (-0.5,2+2*\YA+1.2);
%     \draw[dashed,thick,draw=red] (4.8*\XA,2+2*\YA+3.2) -- (4.8*\XA,7.3) -- (6*\XA,2.5) -- (8.2*\XA,11.2) --(8.2*\XA,2+2*\YA+3.2) -- (4.8*\XA,2+2*\YA+3.2);
     
          \node[draw=none,fill=none] at (1.1,3*\YA+2.3) {$t\!=\!1$};
     \node[draw=none,fill=none] at (1.1+\XA,3*\YA+2.3) {$t\!=\!2$};
     \node[draw=none,fill=none] at (1.1+2*\XA,3*\YA+2.3) {$t\!=\!3$};
     \node[draw=none,fill=none] at (1.1+3*\XA,3*\YA+2.3) {$t\!=\!4$};
     \node[draw=none,fill=none] at (1.1+4*\XA,3*\YA+2.3) {$t\!=\!5$};
     \node[draw=none,fill=none] at (1.1+5*\XA,3*\YA+2.3) {$t\!=\!6$};
     \node[draw=none,fill=none] at (1.1+6*\XA,3*\YA+2.3) {$t\!=\!7$};
     \node[draw=none,fill=none] at (1.1+7*\XA,3*\YA+2.3) {$t\!=\!8$};
 %    \node[draw=none,fill=none] at (0.9+8*\XA,3*\YA+2.3) {$t\!=\!9$};
     \draw[<->] (-0.4,3*\YA+1.7) -- (2.2,3*\YA+1.7);
     \draw[<->] (-0.4+\XA,3*\YA+1.7) -- (2.2+\XA,3*\YA+1.7);
     \draw[<->] (-0.4+2*\XA,3*\YA+1.7) -- (2.2+2*\XA,3*\YA+1.7);
     \draw[<->] (-0.4+3*\XA,3*\YA+1.7) -- (2.2+3*\XA,3*\YA+1.7);
     \draw[<->] (-0.4+4*\XA,3*\YA+1.7) -- (2.2+4*\XA,3*\YA+1.7);
     \draw[<->] (-0.4+5*\XA,3*\YA+1.7) -- (2.2+5*\XA,3*\YA+1.7);
     \draw[<->] (-0.4+6*\XA,3*\YA+1.7) -- (2.2+6*\XA,3*\YA+1.7);
     \draw[<->] (-0.4+7*\XA,3*\YA+1.7) -- (2.2+7*\XA,3*\YA+1.7);
 %    \draw[<->] (-0.4+8*\XA,3*\YA+1.7) -- (1.9+8*\XA,3*\YA+1.7);
     \draw[very thick,decorate,decoration={brace,amplitude=1mm}] (-2,-1) -- (-2,1);
     \draw[very thick,decorate,decoration={brace,amplitude=1mm}] (-2,-1+\YA) -- (-2,1+\YA);
     \draw[very thick,decorate,decoration={brace,amplitude=1mm}] (-2,-1+2*\YA) -- (-2,1+2*\YA);
     \draw[very thick,decorate,decoration={brace,amplitude=1mm}] (-2,-1+3*\YA) -- (-2,1+3*\YA);
     \node[draw=none,fill=none] at (-3.8,0) {$B=P_3$};
     \node[draw=none,fill=none] at (-3,\YA) {$P_2$};
     \node[draw=none,fill=none] at (-3,2*\YA) {$P_1$};
     \node[draw=none,fill=none] at (-3.8,3*\YA) {$A=P_0$};
     
     \node[draw=none,fill=none] at (-1.5,-0.1) {\tiny $\regR_3$};
     
     \node[draw=none,fill=none] at (-1.5,-0.1+\YA) {\tiny $\regR_2$};
     \node[draw=none,fill=none] at (-1.5,-0.5+\YA) {\tiny $\regT_2$};
     
     \node[draw=none,fill=none] at (-1.5,-0.1+2*\YA) {\tiny $\regR_1$};
     \node[draw=none,fill=none] at (-1.5,-0.5+2*\YA) {\tiny $\regT_1$};
     
     \node[draw=none,fill=none] at (-1.5,-0.1+3*\YA) {\tiny $\regR_0$};
     \node[draw=none,fill=none] at (-1.5,-0.5+3*\YA) {\tiny $\regT_0$};
     
     \node[draw=none,fill=none] at (25.5,-0.1) {\tiny $\regR_3$};
     
     \node[draw=none,fill=none] at (25.5,-0.1+\YA) {\tiny $\regR_2$};
     \node[draw=none,fill=none] at (25.5,-0.5+\YA) {\tiny $\regT_2$};
     
     \node[draw=none,fill=none] at (25.5,-0.1+2*\YA) {\tiny $\regR_1$};
     \node[draw=none,fill=none] at (25.5,-0.5+2*\YA) {\tiny $\regT_1$};
     
     \node[draw=none,fill=none] at (25.5,-0.1+3*\YA) {\tiny $\regR_0$};
     \node[draw=none,fill=none] at (25.5,-0.5+3*\YA) {\tiny $\regT_0$};
     
%     \draw[dashed,ultra thick] (-1,2+2*\YA) -- (0,2+2*\YA) -- (2*\XA,2) -- (4*\XA,2+2*\YA) -- (6*\XA,2) -- (8*\XA,2+2*\YA) -- (1+8*\XA,2+2*\YA);
     \draw[dashed,thick,draw=blue] (-0.5,2+2*\YA-0.4) -- (-0.5,-1.3) -- (1.7*\XA,-1.3) -- (1.7*\XA,2.7) -- (0,2+2*\YA-0.4) -- (-0.5,2+2*\YA-0.4);
     \draw[dashed,thick,draw=blue] (4*\XA,2+2*\YA-0.4) -- (1.8*\XA,1.1) -- (1.8*\XA,-1.3) -- (5.8*\XA,-1.3) -- (5.8*\XA,2.3) -- (4*\XA,2+2*\YA-0.4);
%     \draw[dashed,thick,draw=blue] (8*\XA,2+2*\YA-0.4) -- (5.9*\XA,1.1) -- (5.9*\XA,-1.3) -- (8.2*\XA,-1.3) -- (8.2*\XA,2+2*\YA-0.4) -- (8*\XA,2+2*\YA-0.4);
     \draw[dashed,thick,draw=red] (-0.5,2+2*\YA+1.2) -- (-0.5,2+2*\YA+3.2) -- (3.7*\XA,2+2*\YA+3.2) -- (3.7*\XA,9.3) -- (2*\XA,2.5) -- (-0.5,2+2*\YA+1.2);
     \draw[dashed,thick,draw=red] (3.8*\XA,2+2*\YA+3.2) -- (3.8*\XA,11.3) -- (6*\XA,2.5) -- (7.8*\XA,9.6) --(7.8*\XA,2+2*\YA+3.2) -- (3.8*\XA,2+2*\YA+3.2);
     
     \node[draw=none,fill=none,text=blue] (s0) at (1,-3) {Simulated by Bob at $s=1$};
     \node[draw=none,fill=none,text=blue] (s2) at (9,-4.5) {Simulated by Bob at $s=3$};
%     \node[draw=none,fill=none,text=blue] (s4) at (20,-3) {Simulated by Bob at $s=4$};
     \node[draw=none,fill=none,text=red] (s1) at (5,16) {Simulated by Alice at $s=2$};
     \node[draw=none,fill=none,text=red] (s3) at (17,16) {Simulated by Alice at $s=4$};
     \node[draw=none,fill=none] (ss0) at (3,-1.05){};
     \draw[<-,thick,dashed,draw=blue] (s0.north) to [out=50,in=-150] (ss0.south);
     \node[draw=none,fill=none] (ss2) at (14,-1.05){};
     \draw[<-,thick,dashed,draw=blue] (s2.north) to [out=50,in=-150] (ss2.south);
%     \node[draw=none,fill=none] (ss4) at (22,-1.05){};
 %    \draw[<-,thick,draw=blue] (s4.north) to [out=50,in=-150] (ss4.south);     
     \node[draw=none,fill=none] (ss1) at (5.5,12.9){};
     \draw[<-,thick,dashed,draw=red] (s1.south) to [out=-60,in=100] (ss1.north);  
     \node[draw=none,fill=none] (ss3) at (17.5,12.9){};
     \draw[<-,thick,dashed,draw=red] (s3.south) to [out=-60,in=100] (ss3.north);  
     
\end{tikzpicture}
\caption{Simulation by $\Pp$ of the algorithm from Figure \ref{fig:protocol}. The blue and red dashed areas represent the parts that are simulated by Bob and Alice, respectively, during the protocol. The blue plain lines represent the quantum registers that are sent by Bob to Alice during the protocol. The red plain lines represent the quantum registers that are sent by Alice to Bob during the protocol. For instance, for $s=1$, Bob will send to Alice four registers: he first simulates the computation of $P_2$ and $P_3$ until steps $t=1$ and $t=2$ of $\Aa$, respectively, and sends to Alice the corresponding message register of $P_3$ (since 3 is odd). He then simulates $P_1$ and $P_2$ until steps $t=1$ and $t=2$, respectively, and sends to Alice the message register of $P_2$ (since 2 is even). He also sends to Alice the private registers of $P_1$ and $P_2$.  }\label{fig:protocol2}
\end{figure}
\begin{theorem}\label{th:sim}
Let $d$ and $r$ be any positive integers, and $f$ be any Boolean function.
If there exists an $r$-round quantum distributed algorithm, in which each intermediate node uses at most $s$ qubits of memory, that computes function $f$ with probability $p$ over $\Gd$, then there exists a $O(r/d)$-message two-party quantum protocol computing $f$ with probability $p$ using $O(r(\band+\mem))$ qubits of communication.
\end{theorem}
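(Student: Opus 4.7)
My plan is to view the distributed algorithm $\Aa$ as a quantum circuit and partition it into ``slabs'' of $d$ consecutive rounds, where each slab is simulated end-to-end by either Alice or Bob alternately. First I would model $\Aa$ as follows: each node $P_i$ has a workspace register (of $\mem$ qubits for intermediate nodes) and, at each round $t$, applies a unitary $U_{i,t}$ to its workspace together with the incident message registers (each of $\band$ qubits), after which the message registers are swapped along their edges. The initial global state is $\ket{x}_A \otimes \ket{0}^{\otimes d}\otimes \ket{y}_B$ together with a canonical initial state on the message registers.

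Second, I would exploit the light cone structure of $\Gd$: because the underlying network is a path, the unitary $U_{i,t}$ can causally influence $U_{j,t'}$ only if $|i-j|\le t'-t$. I would use this to partition the space--time diagram of $\Aa$ into $\lceil r/d\rceil$ parallelogram-shaped regions (with two triangular corners at the very start and end), each of width $d$ rounds, with the diagonal sides following the light cone from the owning party. Alice and Bob take turns owning these regions; within a single region, the owner applies every assigned unitary purely locally, because the light cone shape guarantees that all the registers these unitaries act on are already in the owner's possession. Consequently the only communication happens at the boundary between two consecutive regions.

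Third, at each such boundary the outgoing owner sends the other party the ``frontier'' registers: the workspace registers of the $d$ intermediate nodes along an antidiagonal of the space--time diagram (total $O(d\mem)$ qubits) together with the $O(d)$ message registers crossing into the next region (total $O(d\band)$ qubits). Summing over $O(r/d)$ phases gives $O(r/d)$ two-party messages and $O(r(\band+\mem))$ qubits of total communication, exactly as claimed. Correctness follows because the partition is a rewriting of the circuit of $\Aa$ into a sequence of subcircuits in which every unitary $U_{i,t}$ is applied exactly once (by the party that owns the region containing $(i,t)$), so the joint final state produced by $\Pp$ coincides with that of $\Aa$ and the success probability $p$ is preserved.

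The main obstacle will be the bookkeeping needed to make this partition precise---in particular, defining the parallelogram regions so that (i) their union covers every unitary of $\Aa$ exactly once, (ii) the light cone property certifies that each unitary inside a region acts only on registers its owner already holds, and (iii) only $O(d)$ workspace and message registers cross any boundary. The memory bound of $\mem$ qubits per intermediate node enters the analysis exclusively at (iii), through the bound $\mem$ on each workspace register that gets transmitted across a boundary; the endpoints $A$ and $B$ play no role in this accounting since they never appear on any frontier.
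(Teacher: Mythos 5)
Your proposal is correct and follows essentially the same route as the paper's proof: both partition the space--time diagram of $\Aa$ into $O(r/d)$ diagonal (parallelogram-shaped) regions of width $d$ dictated by the light-cone structure of the path, have Alice and Bob simulate them alternately, and transfer across each boundary the $O(d)$ message registers plus the $\mem$-qubit workspace registers of the $d$ intermediate nodes, which is exactly where the memory bound enters. The paper additionally normalizes $\Aa$ so that odd rounds send only left-to-right and even rounds only right-to-left (at a factor-2 cost in $r$), which is the bookkeeping device that makes your frontier accounting precise.
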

\begin{proof}%[Proof of Theorem \ref{th:sim}]
Consider any $r$-round quantum algorithm $\Aa$ computing function $f$ over $\Gd$ in which each intermediate node uses $s$ qubits of internal memory.  For notational convenience, we will use $P_0$ and $P_{d+1}$ to refer to nodes $A$ and $B$, respectively. For any $i\in\{0,1,\ldots,d+1\}$, let $\regR_i$ denote the quantum register representing the memory of node $P_i$. Note that if $1\le i\le d$ then $\regR_i$ consists of at most $\mem$ qubits, which we assume (without loss of generality) are initially in the all-zero quantum state. The registers $\regR_0$ and $\regR_{d+1}$ contain the input $x$ and $y$, respectively, in addition to qubits initially in the all-zero quantum state.

For simplify notations, we will make the following assumption: at each round $t\in\{1,\ldots,r\}$ of the algorithm $\Aa$, if $t$ is odd then messages are sent only from the left to the right (i.e, from $P_i$ to $P_{i+1}$ for each $i\in\{0,\ldots,d\}$), and if $t$ is even then messages are sent only from the right to the left (i.e, from $P_i$ to $P_{i-1}$ for each $i\in\{1,\ldots,d+1\}$). This can be done without loss of generality since any algorithm can be converted into an algorithm satisfying this assumption by increasing the round complexity only by a factor 2. Without loss of generality, again, we can also assume that Algorithm~$\Aa$ proceeds as follows (see Figure \ref{fig:protocol} for an illustration for $d=2$ and $r=8$):
\begin{itemize}
\item
Each node $P_i$, for $i\in\{0,\ldots,d\}$, initially owns one additional $b$-qubit quantum register $\regT_i$ containing the all-zero quantum state.  
\item
At the first round, for each $i\in\{0,\ldots,d\}$, node $P_i$ performs a unitary operator on $(\regR_i,\regT_i)$, and then sends $\regT_i$ to $P_{i+1}$. (Note that the contents of the register $\regT_i$ can be --- and will in general be --- modified by the unitary operator.)
Node $P_{d+1}$ does nothing.
\item
At round number $t\in\{2,\ldots,r\}$ with $t$ even, for each $i\in\{1,\ldots,d+1\}$ node $P_i$, who just received register $\regT_{i-1}$ from $P_{i-1}$ at the previous round, performs a unitary operator on $(\regR_i,\regT_{i-1})$, and then sends $\regT_{i-1}$ to $P_{i-1}$. Node $P_0$ does nothing. 
\item
At round number $t\in\{3,\ldots,r\}$ with $t$ odd, for each $i\in\{0,\ldots,d\}$ node $P_i$, who just received register $\regT_{i}$ from $P_{i+1}$ at the previous round, performs a unitary operator on $(\regR_i,\regT_i)$, and then sends $\regT_i$ to $P_{i+1}$. Node $P_{d+1}$ does nothing. 
\item
After round number $r$ (the last round of communication), each node performs a unitary operator on its registers, then measures them and decides its output depending of the outcome of the measurement.
\end{itemize}
The registers $R_0,\ldots,R_{d+1}$ will be called the private registers (since they do not move during the execution of the protocol), while the registers $T_0,\ldots,T_{d}$ will be called the message registers.

We now describe a $O(r/d)$-message protocol $\Pp$ that computes $f$ in the standard model of communication complexity, where Alice and Bob receive as input $x$ and $y$, respectively. The idea is that Alice and Bob jointly simulate the computation of $\Aa$. The crucial observation is that, due to the time needed for a message from $A$ to reach $B$ in the network $\Gd$, the simulation can be done by dividing the computation into $O(r/d)$ areas, each of width $d$, and doing the simulation area per area. Each area will be simulated by one of the players, who will then send to the other player only the $O(d)$ registers that are needed to simulate the next area. The details of the simulation are as follows (see also Figure \ref{fig:protocol2} for an illustration).

 Note that Alice can create the initial state of register $\regR_0$, which depends only on $x$, and Bob can create the initial state $\regR_{d+1}$, which depends only on $y$.
%and Bob will simulate the lower part (see Figure \ref{fig:protocol2}).
For simplicity we will assume that $r$ is an even multiple of $d$, but the description can be readily adjusted for any value of $r$. 
For each $s$ from $1$ to $r/d$, Alice and Bob do the following:
\begin{itemize}
\item
If $s$ is odd then this is Bob's turn. Bob first simulates the computation of $P_i$ in $\Aa$ up to step $t=(s-1)d+i-1$, for each $i\in\{2,\ldots,d+1\}$. For each odd $i\in \{2,\ldots,d+1\}$, Bob then sends to Alice the message register of $P_i$. Bob then proceeds with the simulation and simulates the computation of $P_i$ in $\Aa$ up to step $t=(s-1)d+i$, this time for each $i\in\{1,\ldots,d\}$. For any even $i\in \{1,\ldots,d\}$, Bob sends to Alice the message register of $P_i$. Finally, Bob sends to Alice the private register of $P_i$ for each $i\in\{1,\ldots,d\}$. Note that all the communication from Bob to Alice can be concatenated into one message of $O(d(\band+\mem))$ qubits.
\item
If $s$ is even then this is Alice's turn. Alice first simulates the computation of $P_i$ in $\Aa$ up to step $t=sd-i$, for each $i\in\{0,\ldots,d-1\}$. For any odd $i\in \{0,\ldots,d-1\}$, Alice then sends to Bob the message register of $P_i$. Alice then proceeds with the simulation and simulates the computation of $P_i$ in $\Aa$ up to step $t=sd-i+1$, this time for each $i\in\{1,\ldots,d\}$. For any even $i\in \{1,\ldots,d\}$, Alice sends to Bob the message register of $P_i$. Finally, Alice sends to Bob the private register of $P_i$ for each $i\in\{1,\ldots,d\}$. Note that all the communication from Alice to Bob can be concatenated into one message of $O(d(\band+\mem))$ qubits.
\end{itemize} 
After all those steps, Alice has received all the messages needed to complete the computation done by node $A$ in Algorithm $\Aa$. She completes the computation, decides her output (as done by node $A$ in $\Aa$), and sends it to Bob.

Protocol $\Pp$ uses $O(r/d)$ messages and simulates Algorithm $\Aa$, i.e., computes the function $f$.
%: at the end of the simulation Alice and Bob can output exactly what was output by $P_0$ and $P_{d+1}$ by Algorithm $\Aa$. 
Its communication complexity is $O(r/d\times d(\band+\mem))=O(r(\band+\mem))$ qubits.
\end{proof}

%%%%%%%%%%%%%%%%%%%%%%%%
\subsection{Proof of Theorem \ref{th:LB2}}\label{sec:proofLB2}

We are now ready to give the proof of Theorem  \ref{th:LB2}.
%, which 
%generalizes Theorem \ref{th:red1} and 
%relies on Theorem \ref{th:sim}.

\begin{proof}[Proof of Theorem \ref{th:LB2}]
Let $\{G_n\}_{n\ge 1}$ be the family of graphs that realizes the $(b,k,d_1,d_2)$-reduction from disjointness to diameter computation of Theorem \ref{th:construction2}. This means that we have 
$b=\Theta(\log n)$, $k=\Theta(n)$, $d_1=4$ and $d_2=5$. We will use below the same notations as in Definition \ref{def:sim}. 

Let $d$ be any positive integer such that $d\le n$. Assume that there exists an $r$-round quantum algorithm~$\Aa$ that decides if the diameter of a network is at most~$d+d_1$ or at least $d+d_2$. For any $k\ge 0$, we will use this algorithm to construct an $r$-round algorithm that computes the disjointness function $\DISJ{k}$ in the model described in Section \ref{sec:sim} with $d+2$ players $A,P_1,\ldots,P_d,B$ and bandwidth $\band=\Theta(b\log n)$. Let $x\in \{0,1\}^{k}$ and $y\in \{0,1\}^{k}$ be the inputs of nodes~$A$ and $B$, respectively.

Consider the bipartite graph $G_n(x,y)$, which has $b$ edges between $U_n$ and $V_n$. We modify this graph to obtain a new graph $G'_n(x,y)$ by replacing each of these $b$ edges by a path of length $d+1$ corresponding to $d$ new nodes (see Figure \ref{fig:proofLB2} for an illustration). Note that $G'_n(x,y)$ contains $n'=n+bd$ nodes, i.e., the quantity $n$ does not correspond anymore to the size of the network considered. This point can nevertheless be ignored since $b=\Theta(\log n)$ in our construction, which implies $n'=\Theta(n\log n)$, and since we ignore logarithmic terms in our lower bound.
  
  The protocol works as follows. First note that the players can jointly construct the graph $G'_n(x,y)$: $A$ can construct the subgraph induced by $U_n$, $B$ can construct the subgraph induced by $V_n$, and the intermediate nodes can naturally be partitioned in $d$ vertical layers constructed by $P_1,\ldots,P_d$ (see Figure~\ref{fig:proofLB2}). Nodes $A,P_1,\ldots,P_d,B$ will thus simply simulate the computation of $\Aa$ on $G'_n(x,y)$ in $r$ rounds.  At the end of simulation $A$ and $B$ obtain the result of the computation by $\Aa$ on $G'_n(x,y)$. From our assumption on $\Aa$, this enables them to decide whether the diameter of the network is at most~$d+d_1$ or at least $d+d_2$ in $r$ rounds and with at most $\mem$ qubits of memory per node, and thus to compute $\DISJ{k}(x,y)$ with the same complexity. Theorem~\ref{th:sim} therefore implies the existence of a two-party quantum protocol for $\DISJ{k}$ with $O(r/d)$ rounds and total quantum communication complexity $O(r(b\log n+\mem))$. From Theorem \ref{th:BGK+15}, we conclude that
\[
r(b\log n+\mem)=\tilde\Omega\left(\frac{k}{r/d}+r/d\right),
\]
which implies $r=\tilde \Omega(\sqrt{kd/(b+\mem)})$. The statement of Theorem \ref{th:LB2} directly follows, since $b=\Theta(\log n)$ and $k=\Theta(n)$.
\end{proof}
\begin{figure}
\centering
\begin{tikzpicture}[scale=0.3,rectnode/.style={shape=circle,draw=black,minimum height=10mm},roundnode/.style={circle, draw=black!60, fill=blue!5, very thick, minimum size=5mm}]
    \newcommand\XA{5}
    \newcommand\YA{4}
    
    \node[roundnode,draw=black] (l0) at (-3*\XA,8) {};    
    \node[roundnode,draw=black] (l1) at (-3*\XA,5.5) {}; 
%    \node[roundnode,draw=black] (l2) at (-3*\XA,0) {\footnotesize $\phantom{\ell_{k-1}}$};  
%    \node[roundnode,draw=black] (lk) at (-2*\XA,0) {};
    \node[roundnode,draw=black] (lkk) at (-2*\XA,5) {};
    \node[roundnode,draw=black] (r0) at (3*\XA,8) {};    
    \node[roundnode,draw=black] (r1) at (3*\XA,5.5) {}; 
%    \node[roundnode,draw=black] (r2) at (3*\XA,0) {};  
%    \node[roundnode,draw=black] (rk) at (2*\XA,0) {};
    \node[roundnode,draw=black] (rkk) at (2*\XA,5) {};
    
    \node[circle,draw=black] (s4) at (0.75*\XA,5) {};
    \node[circle,draw=black] (s3) at (0.3*\XA,5) {};
    \node[circle,draw=black] (s2) at (-0.3*\XA,5) {};
    \node[circle,draw=black] (s1) at (-0.75*\XA,5) {};
    
    \node[circle,draw=black] (t4) at (0.75*\XA,18) {};
    \node[circle,draw=black] (t3) at (0.3*\XA,18) {};
    \node[circle,draw=black] (t2) at (-0.3*\XA,18) {};
    \node[circle,draw=black] (t1) at (-0.75*\XA,18) {};
    
    \node[circle,draw=black] (u4) at (0.75*\XA,14) {};
    \node[circle,draw=black] (u3) at (0.3*\XA,14) {};
    \node[circle,draw=black] (u2) at (-0.3*\XA,14) {};
    \node[circle,draw=black] (u1) at (-0.75*\XA,14) {};
    
    \node[circle,draw=black] (v4) at (0.75*\XA,10) {};
    \node[circle,draw=black] (v3) at (0.3*\XA,10) {};
    \node[circle,draw=black] (v2) at (-0.3*\XA,10) {};
    \node[circle,draw=black] (v1) at (-0.75*\XA,10) {};
    
    \node[circle,draw=black] (w4) at (0.75*\XA,16) {};
    \node[circle,draw=black] (w3) at (0.3*\XA,16) {};
    \node[circle,draw=black] (w2) at (-0.3*\XA,16) {};
    \node[circle,draw=black] (w1) at (-0.75*\XA,16) {};
    
    \node[circle,draw=black] (y4) at (0.75*\XA,8) {};
    \node[circle,draw=black] (y3) at (0.3*\XA,8) {};
    \node[circle,draw=black] (y2) at (-0.3*\XA,8) {};
    \node[circle,draw=black] (y1) at (-0.75*\XA,8) {};
    
    \node[roundnode,draw=black] (f0) at (-2.5*\XA,14) {};
    \node[roundnode,draw=black] (fm) at (-2.5*\XA,10) {};
    \node[roundnode,draw=black] (t0) at (-1.5*\XA,14) {};
    \node[roundnode,draw=black] (tm) at (-1.5*\XA,10) {};
    \node[roundnode,draw=black] (ff0) at (1.5*\XA,14) {};
    \node[roundnode,draw=black] (ffm) at (1.5*\XA,10) {};
    \node[roundnode,draw=black] (tt0) at (2.5*\XA,14) {};
    \node[roundnode,draw=black] (ttm) at (2.5*\XA,10) {};
    
    \node[roundnode,draw=black] (a) at (-2*\XA,18) {}; 
    \node[roundnode,draw=black] (b) at (2*\XA,18) {};
  
    \draw (f0.east) to [out=20,in=180] (w1.west);
    \draw (w1) -- (w2);
    \draw (w3) -- (w4);
    \draw (y1) -- (y2);
    \draw (y3) -- (y4);    
    \draw (fm.east) to [out=-20,in=-180] (y1.west);
    \draw (tt0.west) to [out=160,in=0] (w4.east);
    \draw (ttm.west) to [out=-140,in=0] (y4.east);
    \draw (t0) -- (u1);
    \draw (u2) -- (u1);
    \draw (u3) -- (u4);
    \draw (u4) -- (ff0);
    \draw (tm) -- (v1);
    \draw (v2) -- (v1);
    \draw (v3) -- (v4);
    \draw (v4) -- (ffm);
    
    \draw (a) -- (t1);
    \draw (t1) -- (t2);
    \draw (t3) -- (t4);
    \draw (t4) -- (b);
    \draw (a) -- (f0);
    \draw (a) -- (t0);
    \draw (a) -- (fm);
    \draw (a) -- (tm);
%    \draw (a) -- (lk);
    \draw (a) -- (lkk);

    \draw (b) -- (ff0);
    \draw (b) -- (tt0);
    \draw (b) -- (ffm);
    \draw (b) -- (ttm);
%    \draw (b) -- (rk);
    \draw (b) -- (rkk);
        
 %   \draw (l0) -- (lk);
%    \draw (l1) -- (lk);
%    \draw (l2) -- (lk);
%    \draw (lk) -- (lkk);
 %   \draw (r0) -- (rk);
 %   \draw (r1) -- (rk);
 %   \draw (r2) -- (rk);
%    \draw (rk) -- (rkk);
    \draw (lkk) -- (s1);
    \draw (s1) -- (s2);
    \draw (s3) -- (s4);
    \draw (s4) -- (rkk);

 %   \node[draw=none,fill=none,rotate=90] at (-3*\XA,3) {........};
  %  \node[draw=none,fill=none,rotate=90] at (-2.5*\XA,12) {.....};
 %   \node[draw=none,fill=none,rotate=90] at (-1.5*\XA,12) {.....};
 %   \node[draw=none,fill=none,rotate=90] at (1.5*\XA,12) {.....};
 %   \node[draw=none,fill=none,rotate=90] at (2.5*\XA,12) {.....};
 %   \node[draw=none,fill=none,rotate=90] at (3*\XA,3) {........};
    \node[draw=none,fill=none] at (0,5) {...};
    \node[draw=none,fill=none] at (0,18) {...};
    \node[draw=none,fill=none] at (0,10) {...};
    \node[draw=none,fill=none] at (0,8) {...};
    \node[draw=none,fill=none] at (0,14) {...};
    \node[draw=none,fill=none] at (0,16) {...};
    
    \draw (l0) -- (f0);
    \draw (l0) -- (fm);
    \draw (l1) -- (tm);
    \draw (l1.west) to [out=110,in=-150] (f0.west);
 %   \draw (l2.north) to [out=70,in=-110] (t0.south);
 %   \draw (l2.north) to [out=40,in=-110] (tm.south);
    \draw (r0.north) to [out=100,in=-10] (ff0);
    \draw (r0) -- (ffm);
    \draw (r1) -- (ttm);
 %   \draw (r2) -- (ttm);
 %   \draw (r2.east)  to [out=60,in=-60] (tt0.east);
    \draw (r1.west)  to [out=120,in=-60] (ff0.south);
    \draw[dashed, thick] (-18,-2) -- (-5.5,-2) -- (-5.5,20) -- (-18,20) -- (-18,-2);
    \draw[dashed, thick] (18,-2) -- (5.5,-2) -- (5.5,20) -- (18,20) -- (18,-2);
    \draw[dashed, thick] (-4.6,-2) -- (-2.9,-2) -- (-2.9,20) -- (-4.6,20) -- (-4.6,-2);
    \draw[dashed, thick] (-2.3,-2) -- (-0.7,-2) -- (-0.7,20) -- (-2.3,20) -- (-2.3,-2); 
    \draw[dashed, thick] (4.6,-2) -- (2.9,-2) -- (2.9,20) -- (4.6,20) -- (4.6,-2);
    \draw[dashed, thick] (2.3,-2) -- (0.7,-2) -- (0.7,20) -- (2.3,20) -- (2.3,-2);     
    \node[draw=none,fill=none] at (-11,21) {\Large $A$};
    \node[draw=none,fill=none] at (-4,21) { $P_1$};
    \node[draw=none,fill=none] at (-1.9,21) { $P_2$};
    \node[draw=none,fill=none] at (1.6,21) { $P_{d-1}$};
    \node[draw=none,fill=none] at (3.8,21) { $P_d$};
    \node[draw=none,fill=none] at (11,21) {\Large $B$};
    
    \draw[red] (-11.5,10.2) ellipse (5.6cm and 9.4cm);
    \node[draw=none,fill=none,red] at (-11.5,-.3) {\Large $U_n$};
    \draw[red] (11.5,10.2) ellipse (5.6cm and 9.4cm);
    \node[draw=none,fill=none,red] at (11.5,-.3) {\Large $V_n$};    
\end{tikzpicture}
\caption{Example of graph $G'_n(x,y)$. Here $b=6$: there are $6$ edges between $U_n$ and $V_n$ in $G_n(x,y)$, and each of these edge is replaced by a path of length $d+1$ in $G'_n(x,y)$.}\label{fig:proofLB2}
\end{figure}

\section*{Acknowledgments}
The authors are grateful to Rahul Jain, Ashwin Nayak and  Dave Touchette for helpful discussions, and to the anonymous referees for valuable comments and suggestions. 
FLG was partially supported by the JSPS KAKENHI grants No.~16H01705 and No.~16H05853.
FM was partially supported by the ERA-NET Cofund in Quantum Technologies %(QuantEra)
project QuantAlgo and the French ANR Blanc project RDAM.

%\bibliographystyle{alpha}
%\bibliography{qdiameter}
\newcommand{\etalchar}[1]{$^{#1}$}

\end{document}